\definecolor{verylightgray}{rgb}{0.001,0.001,0.001}
\newcounter{steplistcount}
\newcounter{steplistcounti}
\newcounter{caselistcount}
\newcounter{caselistcounti}
\renewcommand*\thesteplistcount{\arabic{steplistcount}}
\renewcommand*\thesteplistcounti{\arabic{steplistcounti}}
\renewcommand*\thecaselistcount{\arabic{caselistcount}}
\renewcommand*\thecaselistcounti{\arabic{caselistcounti}}
\newlist{stepList}{description}{2}
\setlist[stepList,1]{%
	before={\setcounter{steplistcount}{0}},
	leftmargin=0em,labelindent=0em,labelsep = \parindent,itemsep=0.1\baselineskip,topsep=0.1\baselineskip,listparindent=\parindent,style = sameline
	,font=\normalfont\normalsize\itshape Step~\stepcounter{steplistcount}\thesteplistcount:~ 
}
\setlist[stepList,2]{%
	before={\setcounter{steplistcounti}{0}},%
	leftmargin=0em,labelindent=0em,labelsep = \parindent,itemsep=0.1\baselineskip,topsep=0.1\baselineskip,listparindent=\parindent,style = sameline%
	,font=\normalfont\normalsize\itshape Step~ \stepcounter{steplistcounti}\thesteplistcount.\thesteplistcounti:~ 
}
\newlist{caseList}{description}{2}
\setlist[caseList,1]{%
	before={\setcounter{caselistcount}{0}},%
	leftmargin=0em,labelindent=0em,labelsep = \parindent, listparindent=\parindent,%
	style = sameline,font=\normalfont\normalsize\itshape Case~\stepcounter{caselistcount}\thecaselistcount:~%
}
\setlist[caseList,2]{%
	before={\setcounter{caselistcounti}{0}},%
	leftmargin=0em,labelindent=0em,labelsep = \parindent,listparindent=\parindent,itemsep=0.1\baselineskip,topsep=0.4\baselineskip,
	style = sameline,font=\normalfont\normalsize\itshape Case~ \stepcounter{caselistcounti}\thecaselistcount.\thecaselistcounti:~ %
}
\newtheoremstyle{thm}
	{11pt}       
	{11pt}      
	{\itshape}  
	{}          
	{\bfseries} 
	{:}         
	{ }      
	{}          
\theoremstyle{thm}
\newtheorem{lemma}{Lemma}
\newtheorem{theorem}{Theorem}
\newtheorem{corollary}{Corollary}
\theoremstyle{definition}
\theoremstyle{remark}
\newtheorem{remark}{Remark}
\newtheoremstyle{definition}
	{11pt}       
	{0pt}      
	{\normalfont}  
	{}          
	{\bfseries} 
	{:}         
	{ }      
	{}          
\theoremstyle{definition}
\newcommand{\sett}[2]{\left\{#1\,\middle|\,#2\right\}}
\newcommand{\sset}[2]{\{#1\,|\,#2\}}
\newcommand{\eps}{\varepsilon}
\newcommand{\OPT}{\mathrm{OPT}}
\newcommand{\jobs}{\mathcal{J}}
\newcommand{\schedAlg}{\texttt{Alg}}
\newcommand{\free}{\mathrm{idle}}
\newcommand{\work}{\mathcal{W}}
\newcommand{\startPoints}{\mathcal{S}}
\newcommand{\items}{\mathcal{I}}
\newcommand{\itemsL}{\mathcal{L}}
\newcommand{\itemsH}{\mathcal{H}}
\newcommand{\itemsV}{\mathcal{V}}
\newcommand{\itemsS}{\mathcal{S}}
\newcommand{\itemsM}{\mathcal{M}}
\newcommand{\pT}[1]{p(#1)}
\newcommand{\mN}[1]{q(#1)}
\newcommand{\iW}[1]{w(#1)}
\newcommand{\iH}[1]{h(#1)}
\newcommand{\areaI}[1]{\mathcal{A}(#1)}
\newcommand{\area}[1]{\mathcal{A}(#1)}
\newcommand{\Ts}{T_A}
\DeclareMathOperator{\disjCup}{\dot\cup}
\newcommand{\NN}{\mathbb{N}}
\newcommand{\Oh}{\mathcal{O}}
\newcommand{\NP}{\mathrm{NP}}
\newcommand{\Poly}{\mathrm{P}}
\title{Linear Time Algorithms for Multiple Cluster Scheduling and Multiple Strip Packing \thanks{Research was supported by German Research Foundation (DFG) project JA 612 /20-1}%
}
\author{Klaus Jansen, Malin Rau\\
Institute of Computer Science, University of Kiel, 24118 Kiel, Germany\\
\{kj,mra\}@informatik.uni-kiel.de}
\date{}
\begin{document}
\maketitle

\begin{abstract}
We study the \acl{MCS} problem and the \acl{MSP} problem.
For both problems, there is no algorithm with approximation ratio better than $2$ unless $\Poly = \NP$. 
In this paper, we present an algorithm with approximation ratio $2$ and running time $\Oh(n)$ for both problems. 
While a $2$ approximation was known before, the running time of the algorithm is at least $\Omega(n^{256})$ in the worst case. 
Therefore, an $\Oh(n)$ algorithm is surprising and the best possible.
We archive this result by calling an AEPTAS with approximation guarantee $(1+\eps)\OPT +p_{\max}$ and running time of the form $\Oh(n\log(1/\eps)+ f(1/\eps))$ with a constant $\eps$ to schedule the jobs on a single cluster. 
This schedule is then distributed on the $N$ clusters in $\Oh(n)$.
Moreover, this distribution technique can be applied to any variant of of Multi Cluster Scheduling for which there exists an AEPTAS with additive term $p_{\max}$.

While the above result is strong from a theoretical point of view, it might not be very practical due to a large hidden constant caused by calling an AEPTAS with a constant $\eps \geq 1/8$ as subroutine. 
Nevertheless, we point out that the general approach of finding first a schedule on one cluster and then distributing it onto the other clusters might come in handy in practical approaches.
We demonstrate this by presenting a practical algorithm with running time $\Oh(n\log(n))$, with out hidden constants, that is a $9/4$-approximation for one third of all possible instances, i.e, all instances where the number of clusters is dividable by $3$, and has an approximation ratio of at most $2.3$ for all instances with at least $9$ clusters.
\end{abstract}

\begin{acronym}
	\acro{MCS}[MCS]{Multiple Cluster Scheduling}
	\acro{PTS}[PTS]{Parallel Task Scheduling}
	\acro{MSP}[MSP]{Multiple Strip Packing}
	\acro{SP}[SP]{Strip Packing}
\end{acronym}

\section{Introduction}
In this paper, we study two problems \acl{MCS} and \acl{MSP}.
In the optimization problem \acf{MCS}, we are given $n\in \NN$ parallel jobs $\jobs$ and $N\in \NN$ clusters. 
Each cluster consists of $m\in \NN$ identical machines and each job $j \in \jobs$ has a processing time $\pT{j} \in \NN$ as well as a machine requirement $\mN{j}\in \NN_{\leq m}$.
We define the work of a job $j$ as $\work(j) := \pT{j}\cdot \mN{j}$ and define the work of a set of jobs $\jobs'$ as $\work(\jobs') := \sum_{j \in \jobs'}\work(j)$.
A schedule $S$ of the jobs consists of two functions $\sigma: \jobs \rightarrow \NN$ which assigns jobs to starting points and $\rho: \jobs \rightarrow \{1,\dots N\}$, which assigns jobs to the clusters. 
The objective is to find a feasible schedule of all the jobs, which minimizes the makespan, i.e., which minimizes $\max\{p_j+\sigma(j)|j\in \jobs\}$.
A schedule is feasible if at every time $\tau \in \NN$ and any Cluster $i \in \NN$ the number of used machines is bounded by $m$, i.e., if $\sum_{j \in \jobs, \sigma(j) \leq \tau < \sigma(j)+p_j, \rho(j) = i}q_j \leq m$ for all $i\in \NN$ and $\tau \in \NN$.
If the number of clusters is bounded by one, the problem is called \acf{PTS}.
Note that we can assume that $n> N$ since otherwise an optimal schedule would place each job alone on a personal cluster and thus the problem is not hard.

The other problem that we consider is a closely related variant of \ac{MCS}, called \acf{MSP}. 
The main difference is that the jobs have to be allocated on contiguous machines. 
In the Problem \ac{MSP}, we are given $n \in\NN$ rectangular items $\items$ and $N \in \NN$ strips. 
Each strip has an infinite height and the same width $W \in \NN$.
Each item $i \in \items$ has a width $\iW{i}$ and a height $\iH{i}$.
The objective is to find a feasible packing of the items into the strips such that the packing height is minimized.
A packing is feasible if all the items are placed overlapping free into the strips.
If the number of clusters is bounded by one, the problem is called \acf{SP}.

\acl{SP} and \acl{PTS} are classical optimization problems and the extension of these problems to multiple strips or clusters comes natural.
Furthermore, these problems can be motivated by real world problems. 
One example, as stated in \cite{YeHZ11}, is the following:
In operating systems, \ac{MSP} arises in the computer grid and server consolidation \cite{MartyH07}. 
In the system supporting server consolidation on many-core chip multi processors, multiple server applications are deployed onto virtual machines. 
Every virtual machine is allocated several processors and each application might require a number of processors simultaneously.
Hence, a virtual machine can be regarded as a cluster and server applications can be represented as parallel tasks. 
Similarly, in the distributed virtual machines environment, each physical machine can be regarded as a strip while virtual machines are represented as rectangles. 
It is quite natural to investigate the packing algorithm by minimizing the maximum height of the strips. 
This is related to the problem of maximizing the throughput, which is commonly used in the area of operating systems.

In this paper, we consider approximation algorithms for \ac{MCS} and \acf{MSP}.
We say an approximation algorithm $A$ has an (absolute) approximation ratio $\alpha$, if for each instance $I$ of the problem it holds that $A(I) \leq \alpha\OPT(I)$.
If an algorithm $A$ has an approximation ratio of $\alpha$, we say its result is an $\alpha$-approximation.
A family of algorithms consisting of algorithms with approximation ratio $(1+\eps)$ is called polynomial time approximation scheme (PTAS), and a PTAS whose running time is bounded by a polynomial in both the input length $\mathrm{SIZE}(I)$ and $1/\eps$ is called fully polynomial (FPTAS).
If the running time of a PTAS is bounded by a function of the form $\mathrm{poly}(\mathrm{SIZE}(I)) \cdot f(1/\eps)$, where $f$ is an arbitrary function, we say the running time is \textit{efficient} and call it an efficient PTAS or EPTAS.
An algorithm $A$ has an asymptotic approximation ratio $\alpha$ if there is a constant $c$ such that $A(I) \leq \alpha\OPT(I) + c$
and we denote a polynomial time approximation scheme with respect to the asymptotic approximation ratio as an A(E)PTAS.

Zhuk~\cite{Zhuk06} proved that \ac{MCS} and \ac{MSP} cannot be approximated better than $2$ unless $P = NP$. 
There is an algorithm by Ye, Han and Zhang~\cite{YeHZ11} which finds a $2+\eps$-approximation to the optimal solution for each instance of \ac{MCS} or \ac{MSP}.
This algorithm needs to solve an EPTAS for Scheduling On Identical Machines as a subroutine.
The algorithm with the best running time for this problem is currently given by \cite{JansenRohwedder19} and it is bounded by $2^{\mathcal{O}(1/\varepsilon \log^2(1/\varepsilon))} +\mathrm{poly}(n)$. 
As a result the running time of the algorithm from Ye, Han and Zhang~\cite{YeHZ11} is bounded by $2^{\mathcal{O}(1/\varepsilon \log^2(1/\varepsilon))} +\mathrm{poly}(n)$, using \cite{JansenRohwedder19} and corresponding 2-approximation algorithms for \acl{PTS} , e.g., the List-Scheduling algorithm by Garay and Graham \cite{GareyG75}, and \acl{SP}, e.g., Steinbergs-algorithm~\cite{Steinberg97}.
For \ac{MCS}, the approximation ratio of $(2+\eps)$ was improved by 
Jansen and Trystram~\cite{JansenT16} to an algorithm with approximation ratio of 2 and it has a worst case running time of $\Omega(n^{256})$ since it uses an algorithm with running time $n^{\Omega(1/\eps^{1/\eps})}$ with constant $\eps = 1/4$ as a subroutine. 
Furthermore, for \ac{MSP} there is an algorithm by \cite{BougeretDJOT09} that has a ratio of $2$ as well.
The worst case running time of this algorithm is of the form $\Omega(n^{256})$ as well, for the same reasons.

However, since the worst-case running time for these algorithms with an approximation ratio close to or exactly $2$ is so large, work has been done to improve the runtime at the expense of the approximation ratio. 
There is a faster algorithm by Bougeret et al.~\cite{BougeretDJOT10} which guarantees an approximation ratio of $5/2$ and has a running time of $\mathcal{O}(\log(n p_{\max})n(N+\log(n)))$. 
Note that the Multifit algorithm for Schedulin On Identical Machines has an approximation ratio of $13/11$ and a running time of at most $\Oh(n \log(n) + n\log(N)\log(\areaI{\items}/N))$, see \cite{yue1990exact}. 
Hence using this algorithm as a subroutine in \cite{YeHZ11}, we find a $26/11 \approx 2.364$ approximation. 
In~\cite{BougeretDTJR15} they present an algorithm with approximation ratio $7/3$ with running time $\mathcal{O}(\log(n p_{\max})N(n+\log(n)))$. 
Furthermore, they present a fast algorithm with approximation ratio $2$ and the same running time for the case that the job with the largest machine requirement needs less than $m/2$ machines. 
For \ac{MCS} and \ac{MSP}, we present $2$-approximations, where we managed to improve the running time drastically with regard to the $\mathcal{O}$-notation.

\begin{theorem}
\label{thm:ONAlg}
There is an algorithm for \ac{MCS} with approximation ratio $2$ and running time $\mathcal{O}(n)$ if $N > 2$, and running time $\mathcal{O}(n \log (n))$ if $N \in \{1,2\}$.
\end{theorem}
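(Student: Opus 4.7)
The plan splits on the value of $N$. For $N=1$, \ac{MCS} coincides with \ac{PTS} and the classical list-scheduling algorithm of Garey and Graham is already a $2$-approximation whose runtime is dominated by sorting the $n$ jobs by processing time, giving $\Oh(n\log n)$. For $N=2$, I would combine the same decreasing sort with a greedy assignment of each job to the currently less-loaded cluster; a standard exchange argument shows that the resulting ratio is at most $2$, and the runtime is again dominated by the sort.

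For $N\geq 3$ I will follow the two-phase recipe flagged in the abstract. First, invoke an AEPTAS for single-cluster \ac{PTS} on $m$ machines, applied to the whole job set $\jobs$, with a constant precision $\eps$. This produces, in time $\Oh(n\log(1/\eps)+f(1/\eps))=\Oh(n)$, a schedule $S$ on one cluster of $m$ machines of makespan
\[
T \;\leq\; (1+\eps)\,\OPT_{\mathrm{PTS}}(m)\;+\;p_{\max}.
\]
Concatenating any optimal \ac{MCS} solution cluster by cluster onto one cluster yields a single-cluster \ac{PTS} schedule of makespan at most $N\cdot \OPT_{\mathrm{MCS}}$, so $\OPT_{\mathrm{PTS}}(m) \leq N\cdot \OPT_{\mathrm{MCS}}$ and hence $T \leq (1+\eps)N\cdot \OPT_{\mathrm{MCS}} + p_{\max}$.

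Second, I distribute $S$ onto the $N$ clusters in a single linear sweep. The skeleton of the distribution cuts $S$ horizontally into $N$ time-slices of length $T/N$ and sends every job to the cluster of the slice in which its start time lies, retranslating that slice to begin at time $0$ on its cluster. Machine-feasibility on each cluster is inherited from $S$ because the $m$-machine load of $S$ never exceeds $m$; and a job starting in slice $i$ finishes within $T/N + p_{\max}$ of the slice's beginning, so each cluster's makespan is bounded by $T/N + p_{\max}$. The sweep is a single pass through the jobs and runs in $\Oh(n)$.

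The hard part will be tightening the per-cluster makespan $T/N + p_{\max}$ from the naive estimate $(1+\eps)\OPT_{\mathrm{MCS}} + p_{\max}/N + p_{\max}$, which after only invoking $p_{\max}\leq \OPT_{\mathrm{MCS}}$ collapses to the weaker $(2+\eps+1/N)$-approximation. To close the gap to exactly $2$, I plan to peel off the longest jobs (the carriers of the additive $p_{\max}$) before calling the AEPTAS and to reinsert them in a controlled way during the distribution so that the additive $p_{\max}$ is paid only once, and to combine the two lower bounds $p_{\max}\leq \OPT_{\mathrm{MCS}}$ and $\work(\jobs)/(Nm)\leq \OPT_{\mathrm{MCS}}$ in a case distinction depending on which of them is tight. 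With $\eps$ fixed as a sufficiently small constant (which is possible since $N\geq 3$ and $\eps$ is independent of $n$), the per-cluster makespan then drops to at most $2\cdot \OPT_{\mathrm{MCS}}$, and summing the $\Oh(n)$ cost of the AEPTAS and the $\Oh(n)$ cost of the distribution yields the claimed $\Oh(n)$ overall runtime.
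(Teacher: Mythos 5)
Your high-level recipe for $N\geq 3$ (call an AEPTAS with constant $\eps$, then distribute the single-cluster schedule in one linear sweep) is indeed the paper's strategy, but the step that actually produces ratio $2$ is missing from your proposal, and you concede as much. Cutting the schedule into $N$ slices of length $T/N$ and routing each job by the slice containing its start time only yields $(2+\eps+1/N)\OPT$, and the sketched repairs (peeling off the longest jobs before the AEPTAS, a case distinction on which lower bound is tight) do not amount to an argument. The paper closes the gap differently, with two ingredients you do not have. First, the AEPTAS is used in a stronger form: it outputs a schedule on \emph{two} clusters, $C_1$ with makespan at most $(1+\eps)N\OPT$ and a separate cluster $C_2$ with makespan at most $p_{\max}\leq\OPT$, so the additive $p_{\max}$ is never mixed into the slicing. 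Second, $C_1$ is cut at multiples of $2\Ts$ where $\Ts\leq\OPT$: jobs lying entirely between two consecutive cuts form a cluster of makespan at most $2\Ts\leq 2\OPT$ (type A), while the jobs intersected by one cut line are all simultaneously active at that time, hence can start together, and \emph{two} such cut-sets are stacked into one cluster of makespan $2p_{\max}\leq 2\OPT$ (type B). Choosing $\eps=\lfloor N/3\rfloor/N\in[1/5,1/3]$ and distinguishing $N=3i$, $3i+1$, $3i+2$ makes the count of type-A clusters, type-B clusters and the cluster absorbing $C_2$ come out to exactly $N$. Without the $2\Ts$-spacing, the pairing of cut-sets, and the $N$-dependent choice of $\eps$, the per-cluster bound of $2\OPT$ does not follow.

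The case $N=2$ in your proposal is also not sound as stated. Assigning jobs greedily to the less-loaded of two clusters and list-scheduling within each cluster is not known to be (and almost certainly is not) a $2$-approximation for rigid parallel jobs; the per-cluster list-scheduling bound is of the form $2\work(\jobs')/m$ versus $p_{\max}$, so a balanced split only gives roughly $2\OPT+p_{\max}$, and no ``standard exchange argument'' rescues this — note that prior fast algorithms in the literature only achieve $5/2$ and $7/3$, and the ratio $2$ is the hardness threshold. The paper instead runs the AEPTAS with $\eps=1/8$ and a second cluster $C_2$ of width only $\eps m$, draws lines at $\eps T$ and $(1-\eps)T$ on $C_1$, and argues by a work-volume case distinction: either one of the boundary job sets has work at most $(1-\eps)\work(\jobs)/2$ and can be rescheduled by list scheduling within $2\OPT$, or both boundary regions are heavy, in which case the middle and the two extreme strips plus $C_2$ have total work at most $m\OPT$ and are list-scheduled on one cluster, while the two cut-sets are stacked (one at time $0$, one at $p_{\max}$) on the other. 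You would need an argument of this kind, not a greedy load-balancing heuristic, to reach ratio $2$ for $N=2$ in $\Oh(n\log n)$.
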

\begin{theorem}
	\label{thm:MSP}
	There is an algorithm for \ac{MSP} with approximation ratio $2$ and running time $\mathcal{O}(n)$ if $N > 2$, and running time $\mathcal{O}(n\log^2(n)/\log(\log(n)))$ if $N \in \{1,2\}$.
\end{theorem}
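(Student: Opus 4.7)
Theorem~2 is the geometric twin of Theorem~1, and I would attack it with the same two-stage template: first pack all items into a single strip of width $W$ using a fast AEPTAS whose output has height at most $(1+\eps)\OPT_{\mathrm{SP}}+h_{\max}$, then redistribute this single-strip packing across the $N$ strips of the \ac{MSP} instance in a second linear-time pass.

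For the first stage I would invoke an AEPTAS for \ac{SP} with the above additive guarantee and running time $\Oh(n\log(1/\eps)+f(1/\eps))$, executed with a sufficiently small constant $\eps$; this makes the stage $\Oh(n)$, with $f(1/\eps)$ absorbed into the hidden constant. The crucial additional feature of such an AEPTAS, carried over from known constructions for \ac{SP}, is that the resulting packing is organised into a bounded number of shelves (or containers) of a controlled structure, and this structure is what the distribution step will exploit.

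For the second stage I would place $N-1$ horizontal cuts in the AEPTAS packing --- ideally aligned with shelf boundaries where possible --- assign the slab between two consecutive cuts to the corresponding strip, and lift any item crossing a cut to the top of its destination strip (adding at most $h_{\max}$ per strip). Combining the standard lower bounds $\OPT_{\mathrm{MSP}}\ge h_{\max}$ and $\OPT_{\mathrm{MSP}}\ge\area(\items)/(NW)$ with the concatenation bound $\OPT_{\mathrm{SP}}\le N\cdot\OPT_{\mathrm{MSP}}$, the per-strip height then collapses to $2\,\OPT_{\mathrm{MSP}}$. The whole distribution amounts to a single linear scan over the shelves produced by the AEPTAS, so the second stage is $\Oh(n)$ as claimed. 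For $N\in\{1,2\}$ the $1/N$ saving from the distribution disappears, so I would fall back on a dedicated $2$-approximation for \ac{SP}/\ac{MSP} whose own running time is $\Oh(n\log^2(n)/\log(\log(n)))$, applied to the whole instance when $N=1$ and to two area-balanced halves when $N=2$.

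\textbf{Main obstacle.} The delicate point is arriving at ratio exactly $2$ rather than $2+\eps+\Oh(1/N)$. A naive cut-and-lift produces per-strip height around $(1+\eps)\OPT_{\mathrm{MSP}}+(1+1/N)h_{\max}$, so both the multiplicative $\eps$-loss of the AEPTAS and the additive $h_{\max}$-loss from lifted items have to be charged against a single slack pocket of the lower bound. Making this precise --- in particular aligning the cuts with shelf boundaries so that the lifts are absorbed by the additive $h_{\max}$ already present in the AEPTAS guarantee, and handling items that are simultaneously tall and wide and therefore cannot be freely relocated to a narrow column --- is where I expect the bulk of the proof effort to lie.
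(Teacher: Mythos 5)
Your stage one matches the paper (an AEPTAS for Strip Packing with guarantee $(1+\eps)\OPT+h_{\max}$, run with a constant $\eps$), but the distribution stage is where the actual content of the theorem lies, and your sketch does not supply it. Cutting the single-strip packing into $N$ slabs and lifting the items that cross a cut to the top of the same strip gives per-strip height $(1+\eps)\OPT+h_{\max}\le(2+\eps)\OPT$; that is exactly the paper's Theorem~\ref{thm:AEPTASMCS} (the AEPTAS for the multi-strip problem), not a $2$-approximation. You acknowledge this, but the fix you gesture at (aligning cuts with shelf boundaries so the lifts are ``absorbed'' by the additive $h_{\max}$) is not available: the AEPTAS output carries no shelf structure whose boundaries can be steered to the $N-1$ cut positions, and the multiplicative $\eps$-loss cannot be charged against the additive term. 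The paper's mechanism is different: it chooses $\eps=\lfloor N/3\rfloor/N$ (so $\eps\in[1/5,1/3]$, depending on $N\bmod 3$), writes the height of $C_1$ as roughly $(4i+r)T_A$ with $T_A\le\OPT$, cuts at multiples of $2T_A$, and, crucially, does \emph{not} lift the cut items into the same strip: each cut line yields a set of items of height at most $h_{\max}\le\OPT$ that fit side by side within width $W$, and two such sets are stacked into a dedicated new strip of height at most $2h_{\max}\le 2\OPT$ (type B), while the slabs themselves become strips of height $2T_A\le 2\OPT$ (type A). The choice of $\eps$ as a function of $N$ is precisely what makes the count of type-A strips, type-B strips, and the one strip combining the leftover piece with $C_2$ come out to exactly $N$; this counting argument is the heart of the proof and is absent from your proposal.

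Your fallback for $N\in\{1,2\}$ is also not sound as stated. For $N=1$ Steinberg's algorithm indeed gives ratio $2$, but for $N=2$ splitting the items into two ``area-balanced halves'' and packing each with a $2$-approximate strip-packing algorithm does not give $2\OPT$: since items are indivisible, a half can have area up to $\area(\items)/2$ plus a constant fraction of the area of a single item, i.e. up to roughly $(3/2)W\OPT$, so Steinberg only yields about $3\OPT$ on such a half. The paper instead strengthens the requirement on the AEPTAS for $N=2$: the leftover items must fit into a second strip of width only $\eps W$ (the parameter $\gamma$ in Section~\ref{sec:StripPacking}); then, with $\eps=1/8$, a case analysis on the work of the items starting before $\eps T$ and ending after $(1-\eps)T$ shows that either one of these sets, or the remaining boundary and middle sets together with the $\eps W$-wide leftover, has total area at most $W\OPT$ and can be repacked by Steinberg into height $2\OPT$, while the items cut by the two lines are stacked at heights $0$ and $h_{\max}$ on the second strip. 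This case analysis and the role of the narrow second strip are missing from your proposal; the $\Oh(n\log^2(n)/\log(\log(n)))$ term in the running time is exactly the cost of that Steinberg call.
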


Note that the running time of these algorithms is the best possible from a theoretical point of view with respect to the $\mathcal{O}$-notation for $N \geq 3$. 
Since we need to assign a start point to each job, we cannot assume that there is an algorithm for \ac{MCS} with running time strictly faster than $\Omega(n)$.

To achieve these results, we use as a subroutine an AEPTAS for the optimization problem \acf{PTS} and \acf{SP} respectively. 
\ac{PTS}  is similar to the problem \ac{MCS} for the special case that only one cluster is given, while \acf{SP} corresponds to \ac{MSP} where $N = 1$. 
Regarding \ac{PTS}, we improved the running time of an algorithm by Jansen \cite{Jansen12PrallelTasks} and developed an AEPTAS.
For \acf{SP}, we find an AEPTAS as well. 
However, the running time depending on $1/\eps$ is worse than in the AEPTAS for \ac{PTS}.
Note that this algorithm is the first AEPTAS for \ac{SP} that has an additive term of $h_{\max}$.
\begin{theorem}
\label{thm:AEPTASPTS}
There is an algorithm for \ac{PTS} with ratio $(1+\eps)\OPT +p_{\max}$ and running time $\mathcal{O}(n\log(1/\eps)+ \log(n)/\eps^2) + \mathcal{O}_{\eps}(1)$.
\end{theorem}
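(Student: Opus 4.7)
The plan is to follow the classical AEPTAS template for \ac{PTS} (building on \cite{Jansen12PrallelTasks}), but to perform all per-iteration work on aggregated class statistics rather than on the individual jobs, so that the dependence on $n$ is only incurred once, at preprocessing time.

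First, I would establish a lower bound $LB := \max\{\maxP,\, \work(\jobs)/m\}$ and do a binary search for a value $T \in [LB, 2\,LB]$ that certifies the existence of a schedule of length at most $(1+\eps)T + \maxP$. For the binary search, a multiplicative grid of the form $T_k = (1+\eps)^k LB$ needs only $\Oh(\log_{1+\eps}(2)) = \Oh(1/\eps)$ trial values in principle, but we will only test enough of them to find the correct one via binary search in $\Oh(\log(n))$ steps. The additive $\maxP$ absorbs the geometric-grid rounding loss.

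In the preprocessing phase, I would classify jobs by machine requirement into $\Oh(1/\eps)$ classes, namely $\mN{j} \in ((1-\eps)^{k+1}m, (1-\eps)^k m]$, and put small jobs ($\mN{j} \leq \eps m$) into one bucket. Placing each of the $n$ jobs into its bucket costs $\Oh(\log(1/\eps))$ by binary search in the bucket boundaries, giving the $\Oh(n\log(1/\eps))$ term. Inside each machine-requirement class I would sort by processing time and then, for each candidate $T$, use geometric rounding of $\pT{\cdot}$ to $\Oh(\log(1/\eps)/\eps)$ values for long jobs ($\pT{j} > \eps T$), treating short jobs as area only. This yields $\Oh_\eps(1)$ rounded job types whose cardinalities and total work values are the only data used by the scheduling routine.

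For a fixed guess $T$, the long-and-wide jobs ($\mN{j} > m/2$) are scheduled sequentially, and the remaining long jobs are assigned by a configuration LP over the $\Oh_\eps(1)$ types with $\Oh_\eps(1)$ feasible configurations; this LP is solved in $\Oh_\eps(1)$ time, and its fractional solution is rounded to integral with only $\Oh_\eps(1)$ fractionally scheduled jobs, which are pushed on top of the schedule. Short jobs and small jobs are added via a greedy area-based packing (such as the strip-like argument over time slots); because a short job has height at most $\eps T$ and a small job uses at most $\eps m$ machines, the area argument together with the guarantee $\work(\jobs)/m \leq T$ keeps the overall makespan at $(1+\eps)T + \maxP$, where the $\maxP$ term accounts for the final partially filled level. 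Each feasibility test of a guessed $T$ only looks at class cardinalities and aggregate works, so it runs in $\Oh(1/\eps^2) + \Oh_\eps(1)$ time, yielding the $\Oh(\log(n)/\eps^2)$ binary-search cost.

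The main obstacle, in my view, is the last step: turning the LP-plus-greedy description into an actual schedule in only $\Oh(n)$ extra time. The issue is that once $T$ and the chosen configurations are fixed, we must stream the $n$ jobs into the prescribed slots without re-sorting or re-binary-searching inside the inner loop. I would handle this by keeping, already from preprocessing, each machine-requirement class as a bucketed list ordered by the geometric $\pT{\cdot}$-classes so that emitting jobs into configurations is a linear scan; and by precomputing cumulative work sums per class so that the greedy area packing of small and short jobs is a single left-to-right pass. Summing preprocessing, binary search, and final emission gives the claimed bound $\Oh(n\log(1/\eps)+\log(n)/\eps^2) + \Oh_\eps(1)$.
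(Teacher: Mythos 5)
Your overall template---classify and round the jobs, handle the long jobs through an LP over aggregated types, absorb leftovers into the additive $\maxP$, and touch the $n$ jobs only in preprocessing and one final emission pass---is the same as the paper's, but three of your key steps have genuine gaps. First, you round machine requirements multiplicatively into classes $((1-\eps)^{k+1}m,(1-\eps)^k m]$. Unlike processing times, the machine dimension cannot be stretched: rounding $\mN{j}$ up can overload the $m$ machines at any time where many jobs run in parallel (an overload of a $(1+\Theta(\eps))$ factor cannot be repaired by lengthening the schedule without a separate argument), and rounding down yields reservations the original jobs do not fit into. The paper never rounds machine requirements multiplicatively; it uses linear grouping for the small wide jobs (Lemma \ref{lma:fastRounding} is exactly what gives the $\Oh(n\log(1/\eps))$ term you want) and, for the large narrow jobs, an LP indexed by start points and processing times whose nonzero variables are guessed up to multiples of $\alpha m$, with $\alpha$ chosen so small that everything displaced fits into one shelf. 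Second, you treat short jobs ``as area only.'' This fails for short-but-wide jobs: two jobs with $\mN{j}>m/2$ can never run concurrently, so a test based only on class cardinalities and aggregate work cannot certify that a schedule of length $(1+\eps)T+\maxP$ exists, and a greedy area packing can overshoot by a constant factor. This is precisely why the paper runs the configuration LP $LP_{small}$ for all jobs with $\mN{j}\geq \eps m$ and small processing time (solved approximately via max-min resource sharing with a knapsack block solver so that the cost is $\Oh_\eps(1)$ after grouping, Lemma \ref{lma:LPSolveHorizontalJobs}), and reserves the pure NFDH/area argument for the small \emph{narrow} jobs only.

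Third, your additive term is not controlled. You push the $\Oh_\eps(1)$ fractionally scheduled long jobs on top of the schedule, but with your wide/narrow threshold at $m/2$ such a job can require up to $m/2$ machines, so the discarded jobs may need $\Oh_\eps(1)$ shelves of height up to $\maxP$, i.e.\ an additive $\Oh_\eps(1)\cdot\maxP$ instead of a single $\maxP$. The paper's corresponding argument is that only narrow jobs (machine requirement at most $\alpha m$) ever become fractional or displaced, at most $|\startPoints|+|P_L|$ of them, and $\alpha$ is chosen with $2(|\startPoints|+|P_L|)\alpha m\leq m$, so all of them run in parallel in one shelf of height $\maxP$. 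Two smaller points: the paper also needs a pigeonhole-chosen gap between the ``long'' and ``short'' thresholds ($\delta$ versus $\mu=\eps^3\delta$) so that the medium jobs it discards have work $\Oh(\eps m T)$; your single threshold $\eps T$ has no analogue of this step. And your accounting of the $\Oh(\log(n)/\eps^2)$ term is off: a binary search over your $\Oh(1/\eps)$-point grid needs only $\Oh(\log(1/\eps))$ probes, whereas in the paper the $\log(n)$ arises because the rounded processing times take $\Oh(\log(n)/\eps^2)$ distinct values (the smallest processing time is only guaranteed to be $\eps T/n$), which is paid when bucket-sorting the small jobs for NFDH; this is harmless for the stated bound but signals that the source of that term was not identified.
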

\begin{theorem}
	\label{thm:SP}
	There is an algorithm for \ac{SP} with ratio $(1+\eps)\OPT +h_{\max}$ and running time $\mathcal{O}(n\log(1/\eps)+ \log(n)/\eps^2) + \mathcal{O}_{\eps}(1)$.
\end{theorem}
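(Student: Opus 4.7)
The approach is to mirror the AEPTAS for \ac{PTS} from Theorem~\ref{thm:AEPTASPTS}, adapting it to enforce the geometric contiguity that \ac{SP} requires. I first compute the two standard lower bounds $LB := \max(\maxH,\, \areaI{\items}/W)$, both of which satisfy $\OPT \geq LB$, and observe that Steinberg's algorithm yields $\OPT \leq 2LB$. A target height $T$ within a $(1+\eps)$-factor of $\OPT$ is then guessed by a binary search over $\mathcal{O}(\log(n))$ geometrically spaced candidates in $[LB,\, 2LB]$, refined by the fractional configuration LP value; this contributes the $\log(n)/\eps^{2}$ term of the running time.

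For a fixed guess $T$, I partition the items into \emph{tall} items with $\iH{i} > \eps T$ and \emph{flat} items with $\iH{i} \leq \eps T$, and further split the tall items into \emph{wide} ($\iW{i} \geq \eps W$) and \emph{narrow}. Tall-wide items are structurally few: at any horizontal level at most $1/\eps^{2}$ of them occur in any feasible packing. Linear grouping rounds their heights and widths to $\mathcal{O}(1/\eps^{2})$ distinct values at a cost of an additive $\eps T$ in height. Bucket-sorting the items into these rounded classes takes $\mathcal{O}(n\log(1/\eps))$ time. A configuration LP on the resulting classes then has $\mathcal{O}_{\eps}(1)$ variables and constraints and is solvable in $\mathcal{O}_{\eps}(1)$ time.

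Given the LP solution, I build an integral packing in three phases. The rounded tall-wide items are placed into $\mathcal{O}_{\eps}(1)$ horizontal slabs according to their configurations; the narrow-tall items are then fitted into each slab using a Steinberg-style area packing, which loses at most another $\eps T$ in height because they are narrow; finally the flat items are inserted via Next-Fit-Decreasing-Height into the remaining free space. An area argument that uses $\areaI{\items}/W \leq \OPT$ together with the fact that any flat item sticking above height $T$ has height at most $\maxH$ shows that the flat phase adds at most $\eps T + \maxH$ to the overall height. Summing all error terms and rescaling $\eps$ by a constant yields the claimed guarantee of $(1+\eps)\OPT + \maxH$.

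The main obstacle is the geometric realization that is absent in \ac{PTS}, where parallel tasks may occupy any $\mN{j}$ machines at each time: here the fractionally-assigned narrow items must be repacked into actual contiguous boxes without incurring more than $\eps T + \maxH$ additive height, which is exactly where the additive $\maxH$ comes from. The second delicate point is the running-time bookkeeping: the classification and rounding must be performed with $\mathcal{O}(\log(1/\eps))$ bucket comparisons per item so the preprocessing stays in $\mathcal{O}(n\log(1/\eps))$, while the configuration LP and its rounding to an integral packing both act on instances of size $\mathcal{O}_{\eps}(1)$ and therefore contribute only to the $\mathcal{O}_{\eps}(1)$ summand. The overall running time is thus $\mathcal{O}(n\log(1/\eps) + \log(n)/\eps^{2}) + \mathcal{O}_{\eps}(1)$ as required.
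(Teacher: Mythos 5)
There is a genuine gap, and it sits exactly at the point your proposal identifies as ``the main obstacle'' but then does not resolve: how the fractional, per-slab configuration LP solution for tall items is turned into an actual packing in which every tall rectangle occupies one contiguous horizontal interval consistently over its whole height. In \ac{PTS} a greedy filling works because a job may use any $\mN{j}$ machines; in \ac{SP} a tall item crosses many slab boundaries, and a configuration LP that treats slabs independently permits the same item to be ``placed'' at different horizontal positions in different slabs, which is infeasible. You cannot repair this by guessing exact $x$-coordinates either, since $W$ is not polynomially bounded in the input size. The paper's proof devotes its core section to precisely this issue: it guesses only the left-to-right \emph{order} of the left and right borders of the $\mathcal{O}_\eps(1)$ large items and horizontal boxes, and then solves a linear program whose variables are the actual coordinates, the gaps between consecutive coordinates, and configurations of vertical (tall-narrow) items stacked in those gaps, so that positions of large items and the packing of tall-narrow items are determined simultaneously. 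Your step ``rounded tall-wide items are placed into $\mathcal{O}_{\eps}(1)$ horizontal slabs according to their configurations'' has no analogue of this mechanism, and the related claim that the tall-narrow items can be fitted into each slab by a Steinberg-style area argument losing only $\eps T$ also fails: a tall-narrow item has height greater than $\eps T$ (possibly close to $\maxH$), so it does not fit inside a slab, and Steinberg only gives a factor-$2$ area guarantee, not a $(1+\eps)$ one.

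A second, independent gap concerns the single threshold $\eps$ used for both dimensions. The additive $\maxH$ in the paper comes from taking all discarded items (fractionally cut tall-narrow items and the tall medium items) and placing them \emph{side by side} on top of the packing; for this their total width must be at most $W$ (respectively $\gamma W$). The paper achieves this by a pigeonhole argument that produces two well-separated thresholds $\delta$ and $\mu \leq \delta^3\eps^2\gamma/x$ and removes a medium class of small total area, so that the $\mathcal{O}(1/(\eps\delta^3))$ fractional vertical items, each of width at most $\mu W$, have total width below $\gamma W/2$. With your single threshold, each fractional narrow item can have width up to $\eps W$ and there are $\mathcal{O}_\eps(1)$ of them (one per nonzero LP component), so their total width can far exceed $W$, and the claimed ``$\eps T + \maxH$'' bound for the repacking does not follow. (The attribution of the $\log(n)/\eps^2$ term to the binary search is a minor inaccuracy --- in the paper it comes from bucket-sorting the small items over their $\mathcal{O}(\log(n)/\eps^2)$ rounded heights --- but that is cosmetic compared to the two structural gaps above.)
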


This algorithms can be used to find an AEPTAS for \ac{MCS} and \ac{MSP} as well by cutting the solution for one cluster or strip into segments of height $(1+\eps)\OPT$. 
The jobs overlapping the cluster borders add further $p_{\max}$ to the approximation ratio resulting in a additional algorithm for \ac{MCS} with approximation guarantee $(1+\eps)\OPT+p_{\max}$.

\begin{theorem}
\label{thm:AEPTASMCS}
There are algorithms for \ac{MCS} and \ac{MSP} with ratio $(1+\eps)\OPT +p_{\max}$ and running time $\mathcal{O}(n\log(1/\eps)+ \log(n)/\eps^2) + \mathcal{O}_{\eps}(1)$.
\end{theorem}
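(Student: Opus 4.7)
The plan is to apply the single-cluster AEPTAS of Theorem~\ref{thm:AEPTASPTS} (resp.\ Theorem~\ref{thm:SP}) to all $n$ jobs on one cluster (resp.\ strip), and then slice the resulting schedule into $N$ consecutive time slabs, one per cluster.

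First, I would run Theorem~\ref{thm:AEPTASPTS} on the $n$ jobs, viewed as a single-cluster \ac{PTS} instance on $m$ machines, obtaining a schedule $S$ of makespan $H \leq (1+\eps)\OPT_{\mathrm{PTS}} + p_{\max}$. The key observation connecting this to the multi-cluster optimum is the inequality $\OPT_{\mathrm{PTS}} \leq N \cdot \OPT$ (with $\OPT := \OPT_{\mathrm{MCS}}$), which follows by concatenating the $N$ per-cluster schedules of an optimal \ac{MCS} solution into one single-cluster schedule of makespan $N\cdot\OPT$ (each original cluster uses at most $m$ machines at every time). Hence $H \leq (1+\eps) N \cdot \OPT + p_{\max}$, and setting $T := (H - p_{\max})/N$ yields $T \leq (1+\eps)\OPT$.

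Next, I would cut $S$ at the $N-1$ times $\tau_i = iT$ and assign the slab $[\tau_{i-1}, \tau_i]$ to cluster $i$, with the residual slab $[\tau_{N-1}, H]$ going to cluster $N$. Jobs lying entirely inside a single slab stay where they are. For each internal cut $\tau_i$, the jobs straddling $\tau_i$ together use at most $m$ machines (by feasibility of $S$ at time $\tau_i$), so I would lift them out of $S$ and repack them as a single flat layer of height at most $p_{\max}$ on top of cluster $i$'s slab, preserving their machine intervals. Cluster $i<N$ then has makespan at most $T + p_{\max} \leq (1+\eps)\OPT + p_{\max}$, while cluster $N$ receives no extra layer and its slab length is $H - (N-1)T = H/N + (N-1)p_{\max}/N \leq (1+\eps)\OPT + p_{\max}$.

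The cut-and-distribute step touches each job a constant number of times, so the overall running time equals that of the AEPTAS, namely $\mathcal{O}(n\log(1/\eps) + \log(n)/\eps^2) + \mathcal{O}_\eps(1)$. The \ac{MSP} variant is proved identically via Theorem~\ref{thm:SP}, cutting at heights rather than times; straddling items keep their $x$-coordinates and form a flat layer of height at most $h_{\max}$, which fits because the total width used at any height of $S$ is at most $W$. The main point I expect to need care is the repacking step: one must verify that the lifted straddling jobs/items at each cut really do fit as a single flat layer without overlap, which follows directly from the feasibility of $S$ at the cut. Minor integrality issues arising from $T$ being non-integer are absorbed by rounding $T$ up and inflating the $(1+\eps)$ factor by a negligible amount.
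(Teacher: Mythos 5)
Your proposal is correct and follows essentially the paper's own proof: run the single-cluster AEPTAS, observe $\OPT_{\mathrm{PTS}} \leq N\cdot\OPT$, and cut the resulting schedule of makespan $H$ into $N$ slabs of height $(H-p_{\max})/N \leq (1+\eps)\OPT$, charging an extra $p_{\max}$ per cluster for the jobs crossing a cut (and identically for MSP with $h_{\max}$). The only cosmetic difference is that the paper keeps each crossing job with the slab in which it starts, preserving relative positions, instead of lifting the straddlers into a flat layer on top; this also sidesteps the (easily fixed) corner case $p_{\max} > (H-p_{\max})/N$, in which a single job could straddle several cuts in your layer-based repacking.
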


The algorithm from Theorem \ref{thm:ONAlg} uses the algorithm from Theorem \ref{thm:AEPTASPTS} as a subroutine with a constant value $\eps = 1/8$ if $N = 2$, $\eps = 1/5$ if $N = 5$, and $\eps \in [1/4,1/3]$ otherwise. 
As a result, the running time of the algorithm can be rather large, while the $\mathcal{O}$-notation suggests otherwise since it hides all the constants.  
Due to this fact, we have developed a truly fast algorithm where the most expensive part is sorting the jobs. 
However, this improved running time yields a slight loss in the approximation factor. 

\begin{theorem}
\label{thm:fastApproximationMSCP}
There is a fast $\mathcal{O}(n\log(n))$ algorithm for \ac{MCS} with approximation ratio $9/4$ if $N = 3i$, $(9i+5)/(4i+2)$ if $N = 3i+1$, and $(9i+10)/(4i+4)$ if $N = 3i +2$ for some $i \in \mathbb{N}$.
\end{theorem}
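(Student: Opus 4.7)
The plan is to instantiate the paradigm already advertised by the paper: first build a schedule on a \emph{single} cluster, then cut this schedule into $N$ horizontal slices and assign slice $k$ to cluster $k$. In contrast to Theorem~\ref{thm:ONAlg}, the required $\mathcal{O}(n\log n)$ bound \emph{without hidden constants} rules out calling the AEPTAS of Theorem~\ref{thm:AEPTASPTS}; the single-cluster subroutine must be a fast list-scheduling variant. The whole analysis will hinge on the two standard lower bounds $\OPT \geq p_{\max}$ and $\OPT \geq \work(\jobs)/(Nm)$, which control both the single-cluster height and the loss incurred by slicing.

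First I would sort the jobs in $\mathcal{O}(n\log n)$ time; this is the only super-linear step, everything afterwards must be linear. Next, classify each job as \emph{wide} ($q_j > m/2$, so at most one such job runs in a cluster at any time) or \emph{narrow} ($q_j \leq m/2$, so two can potentially overlap), and produce a single-cluster schedule by stacking the wide jobs and then greedily packing the narrow jobs above them in a Garey--Graham-style list schedule. Combining $\work(\jobs)\leq Nm\cdot\OPT$ with $p_{\max}\leq \OPT$, the resulting single-cluster height can be bounded by $\alpha N\OPT + \beta p_{\max}$ for small explicit constants $\alpha,\beta$. Finally, cut this schedule at $N-1$ carefully chosen heights and hand each piece to one cluster; any job straddling a cut is pushed to the top of the lower slice, increasing that cluster's makespan by at most $p_{\max}\leq\OPT$. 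Placing the cuts and moving the straddling jobs can all be done in $\mathcal{O}(n)$ once the jobs are sorted.

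The main technical obstacle is proving the case-dependent bounds tight. Squeezing out exactly $9/4$ when $N = 3i$ forces the cut positions and the wide/narrow balance to be tuned so that both lower bounds are saturated simultaneously, leaving essentially no slack in the analysis. When $N = 3i+1$ or $N = 3i+2$, the one or two leftover clusters cannot be paired symmetrically with the rest of the system; the resulting asymmetry inflates the per-cluster budget by a term that decays like $1/i$, which after algebraic simplification produces exactly the advertised rational expressions $(9i+5)/(4i+2)$ and $(9i+10)/(4i+4)$, degenerating to $5/2$ in the tiny cases $N \in \{1,2\}$ (i.e.\ $i=0$). Checking that no hidden constant slips into the running time, and handling the boundary regimes where $i$ is too small for the averaging to work out cleanly, is essentially all that remains.
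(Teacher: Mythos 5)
Your high-level plan (single-cluster schedule, then slice and distribute) is the paper's paradigm, but both of the steps that actually carry the proof are missing or wrong. First, the single-cluster subroutine: you split at $q_j> m/2$ versus $q_j\leq m/2$, stack the wide jobs and list-schedule the rest, and then only assert a bound $\alpha N\OPT+\beta p_{\max}$ with ``small explicit constants.'' The entire theorem hinges on getting $\alpha=3/2$: the paper proves (Lemma~\ref{lma:fastPTS}) a $(3/2)\OPT+p_{\max}$ guarantee in which the $(3/2)\OPT$ part and the $p_{\max}$ part can moreover be separated onto two clusters $C_1,C_2$. That $3/2$ comes from using the threshold $m/3$ (not $m/2$): jobs with $q_j\geq m/3$ are pre-placed so that, after a case distinction on whether the time $a$ with only one running job exceeds the time $b$ with two running jobs (restacking everything when $a>b$), the average idle capacity before the last start time is at most $m/3$, whence $\rho\cdot(2/3)m\leq \work(\jobs)$ and $\rho\leq(3/2)\OPT$. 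With your $m/2$ threshold the analogous argument only guarantees utilization $m/2$, i.e.\ $\alpha\approx 2$, and the arithmetic for $9/4$ collapses; nothing in your sketch supplies the $3/2$.

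Second, your distribution step is structurally too weak. Cutting at $N-1$ heights and pushing every straddling job onto the top of its lower slice gives each of $N$ clusters a makespan of (slice height)$+p_{\max}\geq \alpha\OPT+p_{\max}$; even with the unattainable $\alpha=3/2$ this is $5/2\cdot\OPT$, and reaching $9/4$ this way would require a fast $(5/4)\OPT+p_{\max}$ subroutine, which the paper explicitly leaves open. The paper instead cuts $C_1$ into only about $2N/3$ equal slices of height at most $(9/4)\OPT$, removes the jobs intersected by each cut (each such set has height at most $p_{\max}\leq\OPT$ and total machine requirement at most $m$), and packs these cut sets together with $C_2$ \emph{pairwise} into the remaining roughly $N/3$ clusters of height $2p_{\max}\leq 2\OPT$, exactly as with the type-B clusters of Section~\ref{sec:Partitioning}; the residues modulo $3$ are then handled by letting one short slice absorb $C_2$ (for $N=3i+1$) or two cut sets (for $N=3i+2$), which is where $(9i+5)/(4i+2)$ and $(9i+10)/(4i+4)$ come from. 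Your closing paragraph acknowledges these case-dependent bounds as ``the main technical obstacle'' but offers no construction or computation for them, so the claimed ratios remain unproven in your write-up.
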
 

Note that the approximation ratio of the algorithm from Theorem \ref{thm:fastApproximationMSCP} is worse than $7/3$ for the cases that $N \in \{2,5\}$ and exactly $7/3$ for the case that $N \in \{4,8\}$. 
However if $N \geq 9$, the approximation ratio is bounded by $2.3$, and $(9i+5)/(4i+2))\OPT$ as well as $((9i+10)/(4i+4))\OPT$ converge to $9/4$ for $i\rightarrow \infty$.

\subsection{Related Work}
We repeat and summarize the results for the variant of \ac{MCS} and \ac{MSP} studied in this paper in Table \ref{tab:OverviewOfResults}.

\begin{table}[ht]
	\begin{tabular}{l|p{0.1\textwidth}|p{0.5\textwidth}|l}
		Problem & Ratio & Remarks & Source\\
		\hline
		\ac{MCS}, \ac{MSP} & $2+\eps$ & Needs solving of Scheduling on Identical Machines with ratio $1+\eps/2$ & \cite{YeHZ11}\\
		\hline
		\ac{MCS} & $2$ & Worst case running time at least $\Omega(n^{256})$; can handle clusters with different sizes  & \cite{JansenT16}\\
		\ac{MSP} & $2$ & Worst case running time at least $\Omega(n^{256})$ & \cite{BougeretDJOT09}\\
		\hline
		\ac{MCS}, \ac{MSP} & AFPTAS & Additive constant in $\Oh(1/\eps^2)$, and $\Oh(1)$ for large values for $N$ & \cite{BougeretDJOT09}\\
		\hline
		\hline
		\ac{MCS} & $3$ & Fast algorithm that can handle clusters with different sizes&\cite{SchwiegelshohnTY08}\\
		\hline
		\ac{MCS} &$5/2$& Fast algorithm & \cite{BougeretDJOT10}\\
		\hline
		\ac{MCS} &$7/3$& Fast algorithm & \cite{BougeretDTJR15}\\
		\hline
		\ac{MCS} &$2$& Fast algorithm; requires $\max_{j \in \jobs} \mN{j} \leq \nicefrac{1}{2} \cdot m$ & \cite{BougeretDTJR15}\\
		\hline
		\hline
		\ac{MCS}, \ac{MSP}& $2\OPT$& Running time $\mathcal{O}(n)$ for $N \geq 3$ and $\mathcal{O}(n\log(n))$ for \ac{MCS} and $N=2$, $\mathcal{O}(n \log^2(n))$ for \ac{MSP} and $N = 2$ & This paper\\
		\hline
		\ac{MCS}, \ac{MSP}&  AEPTAS & Additive term $p_{\max}$; linear in $n$ & This paper\\
		\hline
		\ac{MCS} & & Approximation ratio $9/4$ if $N \bmod 3 = 0$ and if $N$ is large& This paper\\
		\hline
		\ac{PTS}, \ac{SP}&AEPTAS& Additive term $p_{\max}$; linear in $n$  & This paper\\
	\end{tabular}	
	\caption{Overview of the results for \ac{MCS} and \ac{MSP}.}
	\label{tab:OverviewOfResults}
\end{table}

\ac{MCS} has also been studied for the case that clusters do not need to have the same number of machines. 
It is still $NP$-hard to approximate this problem better than $2$ \cite{Zhuk06}. 
Furthermore, it was proven in \cite{SchwiegelshohnTY08} and \cite{TchernykhRAKGZ05} that the List Schedule even cannot guarantee a constant approximation ratio for this problem. 
 
The first algorithm was presented by Tchernykh et al. \cite{TchernykhRAKGZ05} and has an approximation ratio of $10$. 
This ratio was improved to a $3$-approximation by Schwiegelshohn et al.~\cite{SchwiegelshohnTY08}, which is given by an online non-clairvoyant algorithm where the processing times are not known beforehand. 
Later, the algorithm was extended by Tchernykh et al.~\cite{TchernykhSYK10} to the case where jobs have release dates changing the approximation ratio to $2e +1$.
Bougeret et al. \cite{BougeretDJOT10Fast} developed an algorithm with approximation ratio $2.5$ for this case.
This algorithm needs the constraint that the largest machine requirement of a job is smaller than the smallest number of machines available in any given cluster.
This ratio was improved by Dutot et al. \cite{DutotJRT13} by presenting an algorithm with approximation ratio $(2+\eps)$. 
The currently best algorithm for this problem matches the lower bound of $2$ \cite{JansenT16}, but has a large running time of $\Omega(n^{256})$.

%

\subsection*{Organization of this Paper}
The $\Oh(n)$ algorithm consists of two steps. 
First, we use an AEPTAS for \ac{MCS} or \ac{MSP} to find a schedule on two clusters, one with makespan at most $(1+\eps)N\OPT$ and the other with mackespan at most $p_{\max}\leq \OPT$.
This schedule on the two clusters is then distributed onto the $N$ clusters using a partitioning technique, as we call it.
This partitioning technique is the main accomplishment of this paper and presented in Section \ref{sec:Partitioning}.
The AEPTAS for \ac{MCS} can be found in Section \ref{sec:scheduling} while the AEPTAS for \acl{MSP} can be found in Section \ref{sec:StripPacking}.
In Section \ref{sec:FastAlgorithm}, we present the algorithm from Theorem \ref{thm:fastApproximationMSCP} that finds an approximation without the need to call the AEPTAS as a subroutine but uses te partitioning technique as well.

\section{Partitioning Technique}
\label{sec:Partitioning}
In this section, we describe the central idea which leads to a linear running time algorithm.  
Indeed this technique can be used for any problem setting  where there is an AEPTAS with approximation ratio $(1+\eps)\OPT+p_{\max}$ for the single cluster version. In this context $p_{\max}$ is the largest occurring size in the minimization dimension, e.g. the maximal processing time or maximal height of the packing. 

Instead of scheduling the jobs on $N$ clusters, we first schedule them on two clusters $C_1$ and $C_2$. 
In a second step, we distribute the scheduled jobs to $N$ clusters.
In the following, let $\OPT$ be the height of an optimal schedule on $N$ clusters for a given instance $I$. 
Since there is a schedule with makespan $\OPT$ on $N$ clusters, there exists a schedule on one cluster with makespan at most $N\cdot\OPT$.  
Assume there is an algorithm \schedAlg\ which schedules the jobs on two clusters $C_1$ and $C_2$ such that the makespan of $C_1$ is at most $(1+\eps)N\cdot \OPT$ and $C_2$ has a makespan of at most $\OPT$. 
The algorithm mentioned in Theorem \ref{thm:AEPTASPTS} is an example of such an algorithm and we will present it in Section \ref{sec:scheduling}.

\begin{lemma}
	\label{lma:partitioningLemma}
	Let an algorithm \schedAlg\  be given that schedules the jobs on two clusters $C_1$ and $C_2$ such that the makespan of $C_1$ is at most $(1+\eps)N\OPT$ and $C_2$ has a makespan of at most $\OPT$ and which has a running time of $\mathcal{O}(n\cdot f(\eps))$. 
	Furthermore, let \texttt{Alg2} be an algorithm that finds for the single cluster variant a schedule or packing with height at most $2 \cdot \max\{\work(\jobs'), p_{\max}\}$ in $\Oh(\texttt{Alg2})$ time for any given set of jobs $\jobs'$.
	
	We can find a schedule on $N\geq 2$ clusters with makespan $2\OPT$ in $\mathcal{O}(n + n\cdot f(\lfloor N/3\rfloor/N)) = \Oh(n)$ operations if $N >2$ and $\mathcal{O}(\texttt{Alg2} + n\cdot f(1/8)) = \Oh(\texttt{Alg2})$ operations if $N = 2$. (Note that $\lfloor N/3\rfloor/N \in [1/5,1/3]$, and hence can be handled as a constant)
\end{lemma}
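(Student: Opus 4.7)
Plan: The proof splits into the cases $N > 2$ and $N = 2$; the former uses the partitioning argument directly, while the latter additionally invokes \texttt{Alg2}.

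For $N > 2$, I would set $k := \lfloor N/3 \rfloor$ and $\eps := k/N \in [1/5, 1/3]$ and invoke \schedAlg, obtaining a $C_1$ of makespan $\leq (1+\eps) N \OPT = (N+k)\OPT$ and a $C_2$ of makespan $\leq \OPT$ in $\Oh(n f(\eps)) = \Oh(n)$ time. I then horizontally slice $C_1$ at every multiple of $\OPT$ into $N+k$ consecutive slabs $P_1, \ldots, P_{N+k}$ of height $\leq \OPT$, bundle the first $4k$ slabs into $2k$ consecutive pairs $(P_{2i-1}, P_{2i})$, and keep the remaining $N - 3k$ slabs as singletons.

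Each pair is placed on its own cluster, stacked in $[0, 2\OPT]$ exactly as in $C_1$; because the internal cut between paired slabs coincides with the corresponding cut in $C_1$, jobs crossing that cut automatically stay feasible. This fills $2k$ clusters at makespan exactly $2\OPT$. For each of the $N - k - 1$ external cuts between consecutive groups, I collect the jobs straddling the cut; since they coexist at the cut height their total machine-requirement is at most $m$, so they can be repacked side-by-side in the machine dimension as a ``stripe'' of height $\leq p_{\max} \leq \OPT$. The remaining $N - 2k$ clusters then accommodate the $N - 3k$ singleton slabs, the $N - k - 1$ stripes, and the $C_2$ schedule, all of height $\leq \OPT$. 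The total item count $(N-3k) + (N-k-1) + 1 = 2(N - 2k)$ is exactly twice the number of remaining clusters, so two stacked items per cluster yield makespan $\leq 2\OPT$ everywhere. Slicing, grouping, stripe-formation, and placement each touch each job $\Oh(1)$ times, so the total running time is $\Oh(n)$.

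For $N = 2$ this partitioning breaks down because $\eps N$ cannot be a positive integer (so slabs of length $\OPT$ do not tile $C_1$ cleanly). I would instead invoke \schedAlg\ with $\eps = 1/8$, yielding $C_1$ of length $\leq (9/4)\OPT$ and $C_2$ of length $\leq \OPT$, then cut $C_1$ at a suitable height and apply \texttt{Alg2} to one of the resulting sub-instances. The subset handed to \texttt{Alg2} should have total work $\leq m \OPT$ so that \texttt{Alg2} produces makespan $\leq 2\max\{\work(\cdot)/m, p_{\max}\} \leq 2\OPT$ on one cluster, while the complementary jobs (together with $C_2$ and the cut-straddling jobs) can be stacked directly within $2\OPT$ on the other cluster; this gives total time $\Oh(\texttt{Alg2})$. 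The main obstacle is the precise counting in the $N > 2$ partitioning: the specific choice of $2k$ pair-groups is what forces the leftover half-height items to fill the remaining clusters at exactly two items each, and the $N = 2$ case additionally needs the stronger \texttt{Alg2} subroutine to absorb the residual $(1/4)\OPT$ slack that integer-aligned partitioning cannot handle.
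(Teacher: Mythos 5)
Your argument for $N>2$ is essentially the paper's: the paper also cuts $C_1$ into height-$\le 2\Ts$ blocks kept intact on their own clusters and pairs the cut-sets (each of height $\le p_{\max}\le\OPT$ and machine requirement $\le m$) two per cluster, with $C_2$ filling the last slot; your unified counting ($2k$ paired slabs plus $2(N-2k)$ height-$\le\OPT$ pieces on $N-2k$ clusters) replaces the paper's explicit case distinction $N=3i,3i+1,3i+2$ and checks out in all three cases. One flaw to repair: you slice at multiples of $\OPT$, which the algorithm does not know. The paper slices at multiples of $2\Ts$ with $\Ts:=T/(4i)$ (resp.\ $T/(4i+1)$, $T/(4i+2)$) computed from the actual makespan $T$ of $C_1$; your counting survives verbatim if you instead cut at multiples of a computable threshold such as $\max\{T/(N+k),\,p_{\max}\}\le\OPT$ (the maximum with $p_{\max}$ also guarantees that no job straddles two cuts).

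The case $N=2$ is a genuine gap. You only restate the target ("cut $C_1$ at a suitable height so that the part handed to \texttt{Alg2} has work $\le m\OPT$ and the rest stacks within $2\OPT$") without showing that such a cut exists or how to find it, and in general a single cut does not suffice: the jobs straddling it, the piece above it, and $C_2$ need not fit together in height $2\OPT$, and the work handed to \texttt{Alg2} can exceed $m\OPT$. The paper's proof needs two ingredients you omit. First, a strengthened guarantee on \schedAlg: for $N=2$ the cluster $C_2$ is assumed to use only $\eps m$ machines, so $\work(J(C_2))\le\eps m\OPT$; without this, $C_2$ alone can carry work close to $m\OPT$ and no split of $C_1$ can keep the \texttt{Alg2}-side below $m\OPT$. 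Second, a case distinction on the sets $J_1$ (jobs starting before $\eps T$) and $J_2$ (jobs ending after $(1-\eps)T$): if one of them, say $J_2$, has work at most $(1-\eps)\work(\jobs)/2$, it is removed (leaving makespan $(1-\eps)T<2\OPT$ on one cluster) and $J_2\cup J(C_2)$, of work at most $m\OPT$, goes to \texttt{Alg2}; otherwise both ends are heavy, which forces the middle jobs $J_3$ and the short jobs $J_4,J_5$ at the extremes, together with $J(C_2)$, to have work at most $(7\eps+4\eps^2)m\OPT\le m\OPT$ for $\eps=1/8$ — these go to \texttt{Alg2}, while the two cut-sets at $\eps T$ and $(1-\eps)T$ are stacked on the other cluster at start times $0$ and $p_{\max}$. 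This existence argument (and the specific accounting that makes $\eps=1/8$ work) is the actual content of the $N=2$ case and is missing from your proposal.
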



\subsection*{The case $N > 2$}
In the following, we will describe how to distribute a schedule given by \schedAlg\ to $N$ new clusters, and which value we have to choose for $\epsilon$ in \schedAlg\ to get the desired approximation ratio of $2$.
The partitioning algorithm distinguishes three cases: $N = 3i, N = 3i+1$ and $N = 3i +2$ for some $i \in \mathbb{N}_{\geq 1}$ and chooses the value for $\varepsilon$ dependent on this $N$, such that $\eps \in [1/5,1/3]$. 
In the following, when speaking of a schedule the processing time is on the vertical axis while the machines are displayed on the horizontal axis, see Figure \ref{fig:N1}.

In the following distributing algorithm, we draw horizontal lines at each multiple of $2\Ts$, where $\Ts \leq \OPT$ is a value which depends on the makespan of the schedule defined by \schedAlg\ and will be specified dependent on $N$ in the later paragraphs. 
Let $i \in \mathbb{N}$ and consider the jobs which start at or after $2i\Ts$ and end at or before $2(i+1)\Ts$.
We remove these jobs from $C_1$ and schedule them on a new cluster such that they keep their relative position. 
We say these new clusters have type $A$. 

Next, consider the set of jobs cut by the horizontal line at $2i\Ts$.
All these jobs have a processing time of at most $p_{\max} \leq \OPT$ and they can be scheduled at the same time without violating the machine constraint. 
In a new cluster, we can schedule two of these sets of jobs with makespan $2 p_{\max} \leq 2\OPT$, by letting the first set of jobs start at $0$ and the second set start at $p_{\max}$. 
We say, these clusters have type B. 
\begin{caseList}
\item[$N = 3i$.]
If $N = 3i$, we choose $\eps := \lfloor N/3\rfloor/N = 1/3$: 
As a result, the schedule on $C_1$ given by \schedAlg\  has a makespan of $T \leq (4/3)N\OPT = 4i\OPT$ and we define $\Ts := T/(4i) \leq \OPT$. 
We partition the given schedule as described above. 
Since it has a height of $4i\Ts$, we get $2i$ clusters of type A, see Figure \ref{fig:N1}. 
There are $ 4i\Ts / (2\Ts) -1 = 2i-1$ lines at multiples of $2\Ts$. 
Hence, we get $\left\lfloor\frac{2i-1}{2}\right\rfloor = i-1$ clusters of type B. 
The jobs intersecting the last line can be scheduled on one new cluster with makespan $\Ts$. 
On this last cluster after the point in time $\Ts$, we schedule the jobs from the Cluster $C_2$.
Remember, the schedule on $C_2$ has a makespan of at most $\OPT$ and, hence, the makespan of this last cluster is bounded by $2\OPT$ as well. 
In total, we have partitioned the schedule into  $2i + i-1 +1 = 3i = N$ clusters each with makespan at most $2\OPT$.

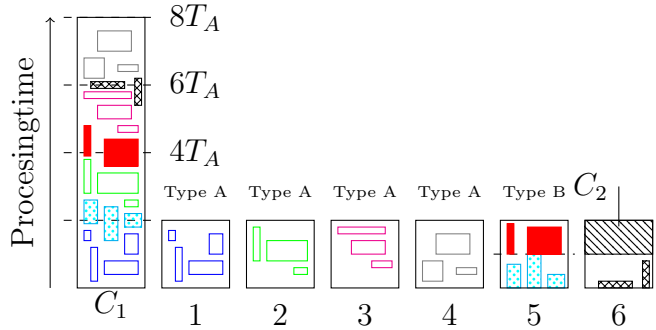
\begin{figure}[ht]
	\centering
	\begin{tikzpicture}
	\pgfmathsetmacro{\h}{0.9}
	\pgfmathsetmacro{\w}{0.9}
	
	\draw (0,0) rectangle (\w,4*\h);
	\node at (0.5 *\w,-0.25*\h) {$C_1$};
	\draw[->] (-0.4*\w,0) -- node[midway, rotate=90,above]{Procesingtime} (-0.4*\w,4*\h);
	
	\draw[dashed] (-0.2*\w, \h)  -- (1.2*\w, \h);
	\draw[dashed] (-0.2*\w, 2*\h) -- (1.2*\w, 2*\h) node[right] {$4\Ts$};
	\draw[dashed] (-0.2*\w, 3*\h) -- (1.2*\w, 3*\h) node[right] {$6\Ts$};
	\draw[dashed] (-0.2*\w, 4*\h) -- (1.2*\w, 4*\h) node[right] {$8\Ts$};
	
	\draw[blue, pattern color=blue] (0.2*\w,0.1*\h) rectangle (0.3*\w,0.6*\h);
	\draw[blue, pattern color=blue] (0.7*\w,0.5*\h) rectangle (0.9*\w,0.8*\h);
	\draw[blue, pattern color=blue] (0.1*\w,0.7*\h) rectangle (0.2*\w,0.85*\h);
	\draw[blue, pattern color=blue] (0.4*\w,0.2*\h) rectangle (0.9*\w,0.4*\h);
	
	\draw[cyan, pattern=crosshatch dots, pattern color=cyan] (0.4*\w, 0.7*\h) rectangle (0.6*\w,1.2*\h);
	\draw[cyan, pattern=crosshatch dots, pattern color=cyan] (0.1*\w, 0.95*\h) rectangle (0.3*\w,1.3*\h);
	\draw[cyan, pattern=crosshatch dots, pattern color=cyan] (0.7*\w, 0.9*\h) rectangle (0.95*\w,1.1*\h);
	
	\begin{scope}[yshift = 1*\h cm]
	\draw[green, pattern color=green] (0.3*\w,0.4*\h) rectangle (0.9*\w,0.7*\h);
	\draw[green, pattern color=green] (0.7*\w,0.2*\h) rectangle (0.9*\w,0.3*\h);
	\draw[green, pattern color=green] (0.1*\w,0.4*\h) rectangle (0.2*\w,0.9*\h);
	\end{scope}
	
	\begin{scope}[yshift = 1*\h cm]
	\draw[red, fill] (0.4*\w, 0.8*\h) rectangle (0.9*\w,1.2*\h);
	\draw[red, fill] (0.1*\w, 0.95*\h) rectangle (0.2*\w,1.4*\h);
	\end{scope}
	
	\begin{scope}[yshift = 2*\h cm]
	\draw[magenta] (0.1*\w,0.8*\h) rectangle (0.8*\w,0.9*\h);
	\draw[magenta] (0.6*\w,0.3*\h) rectangle (0.9*\w,0.4*\h);
	\draw[magenta] (0.3*\w,0.5*\h) rectangle (0.8*\w,0.7*\h);
	\end{scope}
	
	\begin{scope}[yshift = 2*\h cm]
	\draw[black, pattern=crosshatch] (0.85*\w, 0.7*\h) rectangle (0.95*\w,1.1*\h);
	\draw[black, pattern=crosshatch] (0.2*\w, 0.95*\h) rectangle (0.7*\w,1.05*\h);
	\end{scope}
	
	\begin{scope}[yshift = 3*\h cm]
	\draw[gray] (0.1*\w,0.1*\h) rectangle (0.4*\w,0.4*\h);
	\draw[gray] (0.6*\w,0.2*\h) rectangle (0.9*\w,0.3*\h);
	\draw[gray] (0.3*\w,0.5*\h) rectangle (0.8*\w,0.8*\h);
	\end{scope}
	
	\begin{scope}[xshift = 1.25*\w cm]
	\draw[blue, pattern color=blue] (0.2*\w,0.1*\h) rectangle (0.3*\w,0.6*\h);
	\draw[blue, pattern color=blue] (0.7*\w,0.5*\h) rectangle (0.9*\w,0.8*\h);
	\draw[blue, pattern color=blue] (0.1*\w,0.7*\h) rectangle (0.2*\w,0.85*\h);
	\draw[blue, pattern color=blue] (0.4*\w,0.2*\h) rectangle (0.9*\w,0.4*\h);
	\draw (0*\w,0) rectangle (1*\w,\h);
	\node at (0.5*\w,-0.4*\h) {$1$};
	\node at (0.5*\w,1.4*\h) {\tiny{Type A}};
	\end{scope}
	
	\begin{scope}[xshift = 2.5*\w cm]
	\draw[green, pattern color=green] (0.3*\w,0.4*\h) rectangle (0.9*\w,0.7*\h);
	\draw[green, pattern color=green] (0.7*\w,0.2*\h) rectangle (0.9*\w,0.3*\h);
	\draw[green, pattern color=green] (0.1*\w,0.4*\h) rectangle (0.2*\w,0.9*\h);
	\draw (0*\w,0) rectangle (1*\w,\h);
	\node at (0.5*\w,-0.4*\h) {$2$};
	\node at (0.5*\w,1.4*\h) {\tiny{Type A}};
	\end{scope}
	
	\begin{scope}[xshift = 3.75*\w cm]
	\draw[magenta] (0.1*\w,0.8*\h) rectangle (0.8*\w,0.9*\h);
	\draw[magenta] (0.6*\w,0.3*\h) rectangle (0.9*\w,0.4*\h);
	\draw[magenta] (0.3*\w,0.5*\h) rectangle (0.8*\w,0.7*\h);
	\draw (0*\w,0) rectangle (1*\w,\h);
	\node at (0.5*\w,-0.4*\h) {$3$};
	\node at (0.5*\w,1.4*\h) {\tiny{Type A}};
	\end{scope}
	
	\begin{scope}[xshift = 5*\w cm]
	\draw[gray] (0.1*\w,0.1*\h) rectangle (0.4*\w,0.4*\h);
	\draw[gray] (0.6*\w,0.2*\h) rectangle (0.9*\w,0.3*\h);
	\draw[gray] (0.3*\w,0.5*\h) rectangle (0.8*\w,0.8*\h);
	\draw (0*\w,0) rectangle (1*\w,\h);
	\node at (0.5*\w,-0.4*\h) {$4$};
	\node at (0.5*\w,1.4*\h) {\tiny{Type A}};
	\end{scope}
	
	\begin{scope}[xshift = 6.25*\w cm]
	\draw[cyan, pattern=crosshatch dots, pattern color=cyan] (0.4*\w, 0) rectangle (0.6*\w,0.5*\h);
	\draw[cyan, pattern=crosshatch dots, pattern color=cyan] (0.1*\w, 0) rectangle (0.3*\w,0.35*\h);
	\draw[cyan, pattern=crosshatch dots, pattern color=cyan] (0.7*\w, 0) rectangle (0.95*\w,0.2*\h);
	\draw (0*\w,0) rectangle (1*\w,\h);
	\node at (0.5*\w,-0.4*\h) {$5$};
	\node at (0.5*\w,1.4*\h) {\tiny{Type B}};
	\draw[dashed] (-0.1*\w, 0.5*\h)  -- (1.1*\w, 0.5*\h);
	\end{scope}
	
	\begin{scope}[xshift = 6.25*\w cm]
	\draw[red, fill] (0.4*\w, 0.5*\h) rectangle (0.9*\w,0.9*\h);
	\draw[red, fill] (0.1*\w, 0.5*\h) rectangle (0.2*\w,0.95*\h);
	\end{scope}
	
	\begin{scope}[xshift = 7.5*\h cm]
	\draw[black, pattern=crosshatch] (0.85*\w, 0) rectangle (0.95*\w,0.4*\h);
	\draw[black, pattern=crosshatch] (0.2*\w, 0) rectangle (0.7*\w,0.1*\h);
	\draw[pattern = north west lines](0, 0.5 *\h) rectangle node[midway](A){} (\w,\h);
	\draw (A) -- (0.5*\w,1.5*\h) node[left] {$C_2$};
	\draw (0*\w,0) rectangle (1*\w,\h);
	\node at (0.5*\w,-0.4*\h) {$6$};
	\end{scope}
	
	\end{tikzpicture}
	\caption{The case $N = 3i$ for $i = 2$.}
	\label{fig:N1}
\end{figure}

\item[$N = 3i+1$.]
If $N = 3i +1$ for some $i \in \mathbb{N}$, we choose $\eps := \lfloor N/3\rfloor/N = i/(3i +1) \geq 1/4$. 
As a result, the makespan of $C_1$ generated by the algorithm \schedAlg\ is given by $T \leq (1 + i/(3i +1))N\OPT = (4i +1)\OPT$ and we define $\Ts := T /(4i + 1) \leq \OPT$. 
There are $\lceil(4i +1)/2\rceil -1 = 2i$  multiples of $2\Ts$ smaller than $(4i +1)\Ts$, see Figure \ref{fig:N2}.
Above the last multiple of $2\Ts$ smaller than $(4i +1)\Ts$ namely $4i\Ts$, the schedule has a height of at most $\Ts \leq \OPT$ left. 
Hence using the above-described partitioning, we generate $2i$ clusters of type A. 
The jobs intersecting the $2i$ multiples of $2\Ts$ can be placed into $i$ clusters of type B. 
We have left the jobs above $4i\Ts$, which can be scheduled in a new cluster with makespan $\Ts \leq \OPT$. 
Last, we place the jobs from cluster $C_2$ on top of the schedule in the new cluster, such that it has a makespan of at most $\Ts+\OPT \leq 2\OPT$ in total. 
Altogether, we have distributed the given schedule on $2i + i + 1=3i +1 = N$ clusters, such that each of them has a makespan bounded by $2\OPT$.

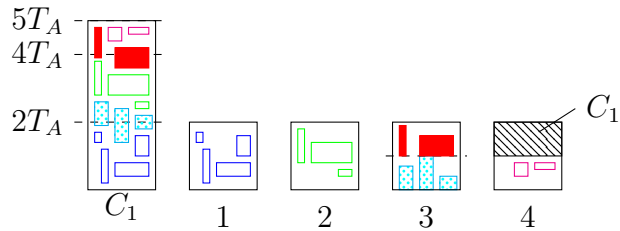
\begin{figure}[ht]
	\centering
	\begin{tikzpicture}
	\pgfmathsetmacro{\h}{0.9}
	\pgfmathsetmacro{\w}{0.9}
	
	\draw (0,0) rectangle (\w,2.5*\h);
	\node at (0.5 *\w,-0.25*\h) {$C_1$};
	
	\draw[dashed] (-0.2*\w, \h)  node[left] {$2\Ts$} -- (1.2*\w, \h);
	\draw[dashed] (-0.2*\w, 2*\h)  node[left] {$4\Ts$} -- (1.2*\w, 2*\h);
	\draw[dashed] (-0.2*\w, 2.5*\h)  node[left] {$5\Ts$} -- (1.2*\w, 2.5*\h);

	\draw[blue] (0.2*\w,0.1*\h) rectangle (0.3*\w,0.6*\h);
	\draw[blue] (0.7*\w,0.5*\h) rectangle (0.9*\w,0.8*\h);
	\draw[blue] (0.1*\w,0.7*\h) rectangle (0.2*\w,0.85*\h);
	\draw[blue] (0.4*\w,0.2*\h) rectangle (0.9*\w,0.4*\h);
	
	\draw[cyan, pattern=crosshatch dots, pattern color=cyan] (0.4*\w, 0.7*\h) rectangle (0.6*\w,1.2*\h);
	\draw[cyan, pattern=crosshatch dots, pattern color=cyan] (0.1*\w, 0.95*\h) rectangle (0.3*\w,1.3*\h);
	\draw[cyan, pattern=crosshatch dots, pattern color=cyan] (0.7*\w, 0.9*\h) rectangle (0.95*\w,1.1*\h);
	
	\begin{scope}[yshift = 1*\h cm]
	\draw[green] (0.3*\w,0.4*\h) rectangle (0.9*\w,0.7*\h);
	\draw[green] (0.7*\w,0.2*\h) rectangle (0.9*\w,0.3*\h);
	\draw[green] (0.1*\w,0.4*\h) rectangle (0.2*\w,0.9*\h);
	\end{scope}
	
	\begin{scope}[yshift = 1*\h cm]
	\draw[red, fill] (0.4*\w, 0.8*\h) rectangle (0.9*\w,1.1*\h);
	\draw[red, fill] (0.1*\w, 0.95*\h) rectangle (0.2*\w,1.4*\h);
	\end{scope}
	
	\begin{scope}[yshift = 2*\h cm]
	\draw[magenta] (0.6*\w,0.3*\h) rectangle (0.9*\w,0.4*\h);
	\draw[magenta] (0.3*\w,0.2*\h) rectangle (0.5*\w,0.4*\h);
	\end{scope}
	
	\begin{scope}[xshift = 1.5*\w cm]
	\draw[blue] (0.2*\w,0.1*\h) rectangle (0.3*\w,0.6*\h);
	\draw[blue] (0.7*\w,0.5*\h) rectangle (0.9*\w,0.8*\h);
	\draw[blue] (0.1*\w,0.7*\h) rectangle (0.2*\w,0.85*\h);
	\draw[blue] (0.4*\w,0.2*\h) rectangle (0.9*\w,0.4*\h);
	\draw (0*\w,0) rectangle (1*\w,\h);
	\node at (0.5*\w,-0.4*\h) {$1$};
	\end{scope}
	
	\begin{scope}[xshift = 3*\w cm]
	\draw[green] (0.3*\w,0.4*\h) rectangle (0.9*\w,0.7*\h);
	\draw[green] (0.7*\w,0.2*\h) rectangle (0.9*\w,0.3*\h);
	\draw[green] (0.1*\w,0.4*\h) rectangle (0.2*\w,0.9*\h);
	\draw (0*\w,0) rectangle (1*\w,\h);
	\node at (0.5*\w,-0.4*\h) {$2$};
	\end{scope}
	
	\begin{scope}[xshift = 4.5*\w cm]
	\draw[cyan, pattern=crosshatch dots, pattern color=cyan] (0.4*\w, 0) rectangle (0.6*\w,0.5*\h);
	\draw[cyan, pattern=crosshatch dots, pattern color=cyan] (0.1*\w, 0) rectangle (0.3*\w,0.35*\h);
	\draw[cyan, pattern=crosshatch dots, pattern color=cyan] (0.7*\w, 0) rectangle (0.95*\w,0.2*\h);
	\draw (0*\w,0) rectangle (1*\w,\h);
	\node at (0.5*\w,-0.4*\h) {$3$};
	\draw[dashed] (-0.1*\w, 0.5*\h)  -- (1.1*\w, 0.5*\h);
	\end{scope}
	
	\begin{scope}[xshift = 4.5*\w cm]
	\draw[red, fill] (0.4*\w, 0.5*\h) rectangle (0.9*\w,0.8*\h);
	\draw[red, fill] (0.1*\w, 0.5*\h) rectangle (0.2*\w,0.95*\h);
	\end{scope}
	
	\begin{scope}[xshift = 6*\h cm]
	\draw[magenta] (0.6*\w,0.3*\h) rectangle (0.9*\w,0.4*\h);
	\draw[magenta] (0.3*\w,0.2*\h) rectangle (0.5*\w,0.4*\h);;
	\draw[pattern = north west lines](0, 0.5 *\h) rectangle node[midway](A){} (\w,\h);
	\draw (A) -- (1.2*\w,1.2*\h) node[right] {$C_1$};
	\draw (0*\w,0) rectangle (1*\w,\h);
	\node at (0.5*\w,-0.4*\h) {$4$};
	\end{scope}
	
	\end{tikzpicture}
	\caption{The case $N = 3i+1$ for $i = 1$.}
	\label{fig:N2}
\end{figure}

\item[$N = 3i +2$.]
If $N = 3i +2$, we choose $\eps = \lfloor N/3\rfloor/N = i/(3i+2) \geq 1/5$: As a result, the makespan on $C_1$ generated by \schedAlg\ is bounded by $T \leq (1 + i/(3i +2))N\OPT = (4i +2)\OPT$ and we define $\Ts := T/(4i + 2) \leq \OPT$. 
Thus, there are $(4i +2)/2 -1 = 2i$ vertical lines at the multiples of $2\Ts$, which are strictly larger than $0$ and strictly smaller than $(4i +2)\Ts$, see Figure \ref{fig:N3}. 
As a consequence, we construct $2i+1$ clusters of type A and $i$ clusters of type B. 
The cluster $C_2$ defines one additional cluster of this new schedule. 
In total, we have a schedule on $2i +1 + i + 1 = N$ clusters with makespan bounded by $2\OPT$.
\end{caseList}

\begin{figure}[ht]
	\centering
	\begin{tikzpicture}
	\pgfmathsetmacro{\h}{0.9}
	\pgfmathsetmacro{\w}{0.9}
	
	\draw (0,0) rectangle (\w,3*\h);
	\node at (0.5 *\w,-0.25*\h) {$C_1$};
	
	\draw[dashed] (-0.2*\w, \h)  -- (1.2*\w, \h);
	\draw[dashed] (-0.2*\w, 2*\h) -- (1.2*\w, 2*\h) node[right] {$4\Ts$};
	\draw[dashed] (-0.2*\w, 3*\h) -- (1.2*\w, 3*\h) node[right] {$6\Ts$};
	
	\draw (7.5*\w,0) rectangle (8.5*\w,\h);
	\node at (8*\w,-0.4*\h) {$5$};
	
	\draw[blue] (0.2*\w,0.1*\h) rectangle (0.3*\w,0.6*\h);
	\draw[blue] (0.7*\w,0.5*\h) rectangle (0.9*\w,0.8*\h);
	\draw[blue] (0.1*\w,0.7*\h) rectangle (0.2*\w,0.85*\h);
	\draw[blue] (0.4*\w,0.2*\h) rectangle (0.9*\w,0.4*\h);
	
	\draw[cyan, pattern=crosshatch dots, pattern color=cyan] (0.4*\w, 0.7*\h) rectangle (0.6*\w,1.2*\h);
	\draw[cyan, pattern=crosshatch dots, pattern color=cyan] (0.1*\w, 0.95*\h) rectangle (0.3*\w,1.3*\h);
	\draw[cyan, pattern=crosshatch dots, pattern color=cyan] (0.7*\w, 0.9*\h) rectangle (0.95*\w,1.1*\h);
	
	\begin{scope}[yshift = 1*\h cm]
	\draw[green] (0.3*\w,0.4*\h) rectangle (0.9*\w,0.7*\h);
	\draw[green] (0.7*\w,0.2*\h) rectangle (0.9*\w,0.3*\h);
	\draw[green] (0.1*\w,0.4*\h) rectangle (0.2*\w,0.9*\h);
	\end{scope}
	
	\begin{scope}[yshift = 1*\h cm]
	\draw[red,fill] (0.4*\w, 0.8*\h) rectangle (0.9*\w,1.2*\h);
	\draw[red,fill] (0.1*\w, 0.95*\h) rectangle (0.2*\w,1.4*\h);
	\end{scope}
	
	\begin{scope}[yshift = 2*\h cm]
	\draw[magenta] (0.1*\w,0.8*\h) rectangle (0.8*\w,0.9*\h);
	\draw[magenta] (0.6*\w,0.3*\h) rectangle (0.9*\w,0.4*\h);
	\draw[magenta] (0.3*\w,0.5*\h) rectangle (0.8*\w,0.7*\h);
	\end{scope}

	\begin{scope}[xshift = 1.5*\w cm]
	\draw[blue] (0.2*\w,0.1*\h) rectangle (0.3*\w,0.6*\h);
	\draw[blue] (0.7*\w,0.5*\h) rectangle (0.9*\w,0.8*\h);
	\draw[blue] (0.1*\w,0.7*\h) rectangle (0.2*\w,0.85*\h);
	\draw[blue] (0.4*\w,0.2*\h) rectangle (0.9*\w,0.4*\h);
	\draw (0*\w,0) rectangle (1*\w,\h);
	\node at (0.5*\w,-0.4*\h) {$1$};
	\end{scope}
	
	\begin{scope}[xshift = 3*\w cm]
	\draw[green] (0.3*\w,0.4*\h) rectangle (0.9*\w,0.7*\h);
	\draw[green] (0.7*\w,0.2*\h) rectangle (0.9*\w,0.3*\h);
	\draw[green] (0.1*\w,0.4*\h) rectangle (0.2*\w,0.9*\h);
	\draw (0*\w,0) rectangle (1*\w,\h);
	\node at (0.5*\w,-0.4*\h) {$2$};
	\end{scope}
	
	\begin{scope}[xshift = 4.5*\w cm]
	\draw[magenta] (0.1*\w,0.8*\h) rectangle (0.8*\w,0.9*\h);
	\draw[magenta] (0.6*\w,0.3*\h) rectangle (0.9*\w,0.4*\h);
	\draw[magenta] (0.3*\w,0.5*\h) rectangle (0.8*\w,0.7*\h);
	\draw (0*\w,0) rectangle (1*\w,\h);
	\node at (0.5*\w,-0.4*\h) {$3$};
	\end{scope}
	
	\begin{scope}[xshift = 6*\w cm]
	\draw[cyan, pattern=crosshatch dots, pattern color=cyan] (0.4*\w, 0) rectangle (0.6*\w,0.5*\h);
	\draw[cyan, pattern=crosshatch dots, pattern color=cyan] (0.1*\w, 0) rectangle (0.3*\w,0.35*\h);
	\draw[cyan, pattern=crosshatch dots, pattern color=cyan] (0.7*\w, 0) rectangle (0.95*\w,0.2*\h);
	\draw (0*\w,0) rectangle (1*\w,\h);
	\node at (0.5*\w,-0.4*\h) {$4$};
	\draw[dashed] (-0.1*\w, 0.5*\h)  -- (1.1*\w, 0.5*\h);
	\end{scope}
	
	\begin{scope}[xshift = 6*\w cm]
	\draw[red, fill] (0.4*\w, 0.5*\h) rectangle (0.9*\w,0.9*\h);
	\draw[red, fill] (0.1*\w, 0.5*\h) rectangle (0.2*\w,0.95*\h);
	\end{scope}
	
	\begin{scope}[xshift = 7.5*\h cm]
	\draw[pattern = north west lines](0,0) rectangle node(A){} (\w,0.5*\h);
	\draw (A) -- (1.2*\w,0.7*\h) node[right] {$C_1$};
	\end{scope}
	
	\end{tikzpicture}
	\caption{The case $N= 3i+2$ for $i = 1$.}
	\label{fig:N3}
\end{figure}
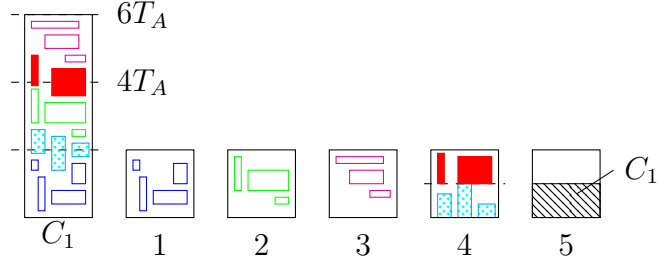

This distribution can be made in $\mathcal{O}(n)$ steps since we have to relocate each job at most once. 
Therefore the algorithm has a running time of at most $\mathcal{O}(n + n\cdot f(\lfloor N/3\rfloor/N)) = \Oh(n)$ since $\lfloor N/3\rfloor/N$ is a constant of size at least $1/5$.

\subsection*{The case $N = 2$}
To find a distribution for this case, we need to make a stronger assumption to the solution of the algorithm \schedAlg. 
Namely, we assume that the second cluster $C_2$ has just $\eps m$ machines. 
As a consequence, the total work of the jobs contained on $C_2$ is bounded by $\eps m\OPT$.

Let us consider the schedule on cluster $C_1$ with makespan $T \leq (1+\eps)2\OPT$. 
In the following, we will assume that $T>2p_{\max}$ since otherwise we have $T \leq 2\OPT$ and we do not need to reorder the schedule any further.
We draw horizontal lines at $\eps T$ and at $T -\eps T$. 
Next, we define two sets of jobs $J_1$ and $J_2$. 
$J_1$ contains all jobs starting before $\eps T$ and $J_2$ contains all jobs ending after $T -\eps T$. 
Note that since $T \leq (1+\eps)2\OPT$, we have that $(1-\eps)T < 2\OPT$. 
Furthermore, $J_1$ and $J_2$ are disjoint if $\eps \leq 1/4$ since $p_{\max} \leq T/2$ and therefore $\eps T +p_{\max} \leq T/4 +T/2 \leq \nicefrac{3}{4}T \leq (1 - \eps)T$. 
Note that the total work of the jobs is bounded by $2\OPT m$ and, hence, $\work(\jobs)/(2 m) \leq \OPT$ .
We distinguish two cases:  

\begin{figure}[ht]
	\centering
	\begin{tikzpicture}
	\pgfmathsetmacro{\h}{1.5}
	\pgfmathsetmacro{\w}{2}
	
	\draw (0,0) rectangle (\w, 2.2*\h);
	\draw[dashed] (-0.25*\w, 1.1*\h) -- (1.25*\w, 1.1*\h) node[right]{$T/2$};
	\draw[dashed] (-0.25*\w, 2*\h) -- (1.25*\w, 2*\h) node[right]{$T -\eps T$};
	\draw[dashed] (-0.25*\w, 2.2*\h) -- (1.25*\w, 2.2*\h) node[right]{$T$};
	\draw[dashed] (-0.25*\w, 0.2*\h) -- (1.25*\w, 0.2*\h) node[right]{$\eps T$};
	\draw[|-|] (0, -0.1*\h) -- node[midway, below]{$m$} (\w, -0.1*\h) ;
	
	\draw[pattern=north east lines] (0.1*\w,0.15*\h) rectangle (0.2*\w, 0.4*\h);
	\draw[pattern=north east lines] (0.4*\w,0.1*\h) rectangle (0.5*\w, 0.6*\h);
	\draw[pattern=north east lines] (0.7*\w,0.05*\h) rectangle (0.85*\w, 0.3*\h);
	
	\draw[pattern=north west lines] (0.15*\w,2.15*\h) rectangle (0.2*\w, 1.4*\h);
	\draw[pattern=north west lines] (0.3*\w,2.1*\h) rectangle (0.5*\w, 1.6*\h);
	\draw[pattern=north west lines] (0.8*\w,2.05*\h) rectangle (0.85*\w, 1.3*\h);
	
	\draw[pattern=north east lines] (0,0) rectangle (\w,0.2*\h);
	\draw[pattern=north west lines] (0,2*\h) rectangle (\w,2.2*\h);
	\end{tikzpicture}
	\caption{The case $N = 2$}
\end{figure}
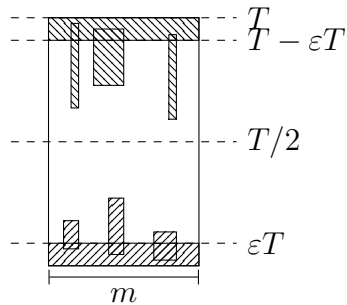

\begin{caseList}
\item[$\work(J_1) \leq (1-\eps)\work(\jobs)/2$ or $\work(J_2)\leq (1-\eps)\work(\jobs)/2$.] 
Let w.l.o.g $\work(J_2) \leq (1-\eps) \work(\jobs)/2 \leq (1-\eps)m\OPT$. 
We remove all jobs in $J_2$ from the cluster $C_1$. 
As a result this cluster has a makespan of $(1-\eps)T < 2\OPT$. 
The total work of the jobs contained in $C_2$ combined with the jobs in $J_2$ is at most $m\OPT$. 
Therefore, we can use the algorithm \texttt{Alg2} (for example the List-Scheduling algorithm by Garay and Graham \cite{GareyG75}) to find a schedule with makespan at most $2\max\{p_{\max},\work(J_2)\} \leq 2\OPT$. 
Hence, we can find a schedule on two clusters in at most $\mathcal{O}(\texttt{Alg2} + n \cdot f(\eps))$ for this case.

\item[$\work(J_1) > (1-\eps)\work(\jobs)/2$ and $\work(J_2)> (1-\eps)\work(\jobs)/2$.]
Consider the set of jobs $J_3$ scheduled on $C_1$ but not contained in $J_1$ or $J_2$. 
Since the total work of the jobs is at most $\work(\jobs) \leq m \OPT$ it holds that $\work(J_3) \leq \work(\jobs) - \work(J_1) - \work(J_2) = \eps \work(\jobs)\leq 2\eps m\OPT$. 
Let $J(C_1)$ be the set of jobs scheduled on $C_1$ and $J(C_2)$ be the set of jobs scheduled on $C_2$.
We define $J_4 := \{j \in J(C_1)|\sigma(j)+p(j) \leq \eps T\}$ and $J_5 := \{j \in J(C_1)|\sigma(j) \geq (1-\eps)T\}$. 
Clearly, both sets have a total work of at most $\eps mT \leq 2(\eps+\eps^2)m\OPT$ and 
therefore $\work(J_3 \cup J_4 \cup J_5 \cup J(C_2)) \leq (7\eps +4\eps^2)m \OPT$. 
If $\eps = \frac{1}{8}$, these jobs have a total work of at most $m\OPT$ and are scheduled with the algorithm \texttt{Alg2} to find a schedule on one cluster with makespan at most $2\max\{p_{\max},\work(J_3 \cup J_4 \cup J_5 \cup J(C_2))\} \leq 2\OPT$. 

To this point, we have scheduled all jobs except the ones cut by the line $\eps T$ and the jobs cut by the line $(1-\eps)T$. 
We schedule them in the second cluster by starting all the jobs cut by the first line at start point $0$ and the second set of jobs at the start point $p_{\max} \leq \OPT$. 
Note that the partition into the sets $J_1, \dots, J_5$ can be done in $\Oh(n)$ and hence the partitioning step is dominated by the running time of the algorithm \texttt{Alg2}.

In both cases for $N=2$, we choose $\eps = 1/8$ and, hence, can bound the running time of the algorithm by $\Oh(\texttt{Alg2} + n \cdot f(1/8)) = \Oh(\texttt{Alg2})$. 
\end{caseList}

This concludes the proof of Lemma \ref{lma:partitioningLemma}. 
However, to prove Theorem \ref{thm:ONAlg}, we need to prove the existence of the algorithm \schedAlg, which finds the schedule on the clusters $C_1$ and $C_2$. 
In the next section, we will see one example of such an algorithm. 

As \texttt{Alg2}, we can choose Steinbergs-Algorithm \cite{Steinberg97} in the case of \ac{SP}. 
It has a running time that is bounded by $\Oh(n\log^2(n))$. 
On the other hand for \ac{PTS}, we can use the algorithm by Garay and Graham \cite{GareyG75}, which was optimized by Turek et al. \cite{TurekWY92} to have a running time of $\Oh(n\log(n))$.

A direct conclusion of the lemma is the following corollary.
\begin{corollary}
	For all $N \geq 3 $, given a schedule on two clusters $C_1$ and $C_2$ such that the makespan of $C_1$ is at most $(1+\lfloor N/3\rfloor/N)N\OPT$ and $C_2$ has a makespan of at most $\OPT$, we can find a schedule on $N$ clusters with makespan at most $2\OPT$ in at most $\Oh(n)$ additional steps.
\end{corollary}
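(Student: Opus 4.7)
The plan is to invoke the partitioning portion of the proof of Lemma \ref{lma:partitioningLemma} directly. The hypothesis of the corollary provides precisely the pair $(C_1, C_2)$ that \schedAlg\ would output when called with parameter $\eps = \lfloor N/3 \rfloor / N$, so the only work that remains is the redistribution step itself; in particular, no AEPTAS call is incurred, and the running time collapses to the cost of that redistribution.

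I would split into the three residue classes $N = 3i$, $N = 3i+1$, $N = 3i+2$ for some $i \in \NN_{\geq 1}$. In each case the parameter $\eps = \lfloor N/3 \rfloor / N$ equals $1/3$, $i/(3i+1)$, or $i/(3i+2)$ respectively, all lying in $[1/5, 1/3]$, so the makespan of $C_1$ satisfies $T \leq (4i+c)\OPT$ with the appropriate $c \in \{0, 1, 2\}$, and we set $\Ts := T/(4i+c) \leq \OPT$. Horizontal lines are then drawn at every positive multiple of $2\Ts$ strictly less than $T$; jobs lying entirely between two consecutive lines are moved, with translated starting times, to a fresh \emph{type A} cluster (makespan $\leq 2\Ts \leq 2\OPT$), while the jobs intersecting any single line form an antichain using at most $m$ machines simultaneously and can be stacked back-to-back in pairs to form \emph{type B} clusters (makespan $\leq 2 p_{\max} \leq 2\OPT$). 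The jobs of $C_2$ are accommodated exactly as in the three case analyses of the lemma --- combined with the one unpaired crossing-line set in the $3i$ case, stacked on top of the residual top slab in the $3i+1$ case, or placed alone in the $3i+2$ case --- and in every residue the cluster count works out to exactly $N$.

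For the running time, each job is touched a constant number of times: once to classify it (lying in a slab of $C_1$, crossing a line of $C_1$, or originating in $C_2$), once to translate its starting time, and once to write it to its new cluster. A single linear sweep over $C_1$'s schedule followed by one over $C_2$'s schedule therefore produces the $N$-cluster schedule in $\Oh(n)$ additional steps. The only real \emph{obstacle} is the arithmetic that the counts of type A clusters, type B clusters, and the residual-or-$C_2$ cluster sum to exactly $N$ in each residue class modulo $3$, namely $2i + (i-1) + 1$, $2i + i + 1$, and $(2i+1) + i + 1$ respectively; this verification was already carried out in the proof of Lemma \ref{lma:partitioningLemma}, and the corollary follows.
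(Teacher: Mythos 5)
Your proposal is correct and follows essentially the same route as the paper: the corollary is stated there as a direct consequence of the $N>2$ case of Lemma~\ref{lma:partitioningLemma}, whose partitioning into type~A and type~B clusters (with the residue-class counting $2i+(i-1)+1$, $2i+i+1$, $(2i+1)+i+1$) you reproduce faithfully, together with the observation that no AEPTAS call is needed and each job is relocated only a constant number of times, giving $\Oh(n)$ additional steps.
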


Instead of using the algorithm in the next section, first, we can try to use any heuristic or other (fast) approximation algorithm. 
More precisely, we can do the following:
Given a schedule by any heuristic, we remove all the jobs that end after the point in time at which the last job is started and place them on the cluster $C_2$, by starting them all at the same time.
The schedule on $C_2$ obviously has a makespan bounded by $p_{\max} \leq \OPT$.
Next, we check weather the residual schedule on $C_1$ has a makespan of at most $(1+\lfloor N/3\rfloor/N)N\OPT$. 
For example, this can be done by comparing the makespan $T$ on $C_1$ to the lower bound on the optimal makespan $\max\{ p_{\max},\work(\jobs)/m,\pT{\jobs_{>m/2}}\}$, where $\jobs_{>m/2}$ is the set of all jobs with machine requirement larger than $m/2$.
If the makespan $T$ is small enough, i.e., if $T \leq (1+\lfloor N/3\rfloor/N)\max\{ p_{\max},\work(\jobs)/m,\pT{\jobs_{>m/2}}\}$, we will find a $2$-approximation by using the partitioning technique from above. 
Otherwise, we need to use the algorithm from the next section.

\section{An AEPTAS for \acl{PTS}}
\label{sec:scheduling}
In this section, we will present an $AEPTAS$ for \acf{PTS} with an approximation ratio $(1+\eps)\OPT +p_{\max}$ and running time $\mathcal{O}(n\log(1/\eps)) + \mathcal{O}_\eps(1)$. 
We can use this algorithm to find a schedule on the two clusters $C_1$ and $C_2$ needed for the algorithm in Section \ref{sec:Partitioning}.
It is inspired by the algorithm in \cite{Jansen12PrallelTasks} but contains some improvements. 
Furthermore, note the fact that in the following algorithm the processing times of the jobs do not have to be integral. 
Instead, we will discretize them by rounding.

The algorithm works roughly in the following way. 
The set of jobs is partitioned into large, medium, and small jobs, depending on their processing times. 
The medium jobs have a small total work and therefore can be scheduled at the end of the schedule using a 3-approximation algorithm without doing too much harm. 
The large jobs are partitioned into two sets: wide jobs and narrow jobs depending on their machine requirement. 
There are few large wide jobs which makes it possible to guess their starting times. 
The narrow jobs are placed with a linear program for which we guess the number of required machines for each occurring processing time at each possible start point of the schedule.    
After solving this linear program, a few jobs are scheduled fractionally. 
These jobs have a total number of required machines of at most $\gamma m$ for any chosen value $\gamma \in (0,1]$. 
Notice that the choice of $\gamma$ will affect the running time. 
We place these jobs on top of the schedule to gain a $(1+\eps)\OPT + p_{\max}$ approximation, or into an extra cluster to find a solution needed for the algorithm in Section \ref{sec:Partitioning}. 
The small jobs are scheduled with a linear program. 
An overview of the algorithm can be found in Section \ref{subsec:summary}.

We will now present a more detailed approach. 
We use an improved rounding strategy for large jobs compared to~\cite{Jansen12PrallelTasks}, which enables us to improve the running time.
Further, we present a different linear programming approach to schedule the narrow tall jobs. 

%

\subsection{Simplify}
Let an instance $I = (\jobs, m)$ be given. 
Note that the value $\max\{p_{\max},W(\jobs)/m\}$ is a lower bound on the makespan of the schedule. 
On the other hand, we know by Turek et al. \cite{TurekWY92} that $T := \max\{p_{\max},W(\jobs)/2\}$ is an upper bound on the optimal makespan. 
We can find $T$ in $\mathcal{O}(n)$.

Let $\delta$ and $\mu$ be values dependent on $\eps$.
We partition the set of jobs $\jobs$ into small $\jobs_S := \{j \in \jobs| p_j \leq \mu T\}$, medium $\jobs_M := \{j \in \jobs| \mu T <p_j < \delta T\}$, and large ones $\jobs_L := \{j \in \jobs| \delta T \leq p_j\}$. 
Consider the sequence $\sigma_{0} = \eps$, $\sigma_{i+1}= \sigma_{i}\eps^3$. 
By the pigeonhole principle there exists an $i \in \{0,\dots,1/\eps-1\}$ such that $W(\jobs_M) \leq \eps m\OPT$, when defining $\delta := \sigma_i$ and $\mu := \sigma_{i+1}$. 
We can find these values for $\delta$ and $\mu$ in $\mathcal{O}(n+1/\eps)$. 
Note that $\mu = \eps^3\delta \geq \eps^{3/\eps+3}$.

Resulting in a loss of at most $\eps T$ in the approximation ratio, we can assume that the smallest processing time is at least $\eps T/n$ since adding $\eps T/n$ to each processing time adds at most $n \cdot \eps T/n = \eps T$ to the total makespan.
Therefore, the largest $l$ such that $p_j \in \{\eps^l T,\eps^{l-1}T\}$ is bounded by $\mathcal{O}(\log(n))$ and we know $\delta \geq \min\{\eps/n,\eps^{3/\eps}\}$.
We round the sizes of the jobs by using the following lemma. 

\begin{lemma}[See \cite{JansenR16}]
	\label{lma:roundingProcessingTimes}
	At a loss of a factor of at most $(1+2\eps)$ in the approximation ratio, we can ensure that each job $j \in \jobs$ with $\eps^{l}  T < p_j \leq \eps^{l-1} T$ for some $l\in\mathbb{N}$ has processing time $p_j = k_j \eps^{l+1}  T$ for $k_j = \lceil p_j /(\eps^{l+1} T)\rceil \in \{1/\eps+1,\dots,1/\eps^2\}$ and a starting time, which is a multiple of $\eps^{l+1} T$ as well. 
\end{lemma}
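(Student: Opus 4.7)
The plan is to prove the two rounding claims separately, each costing a $(1+\eps)$ multiplicative stretch of the time axis of an optimal schedule. Their composition yields the stated $(1+2\eps)$ factor after absorbing the $\eps^2$ term via a cosmetic rescaling of $\eps$, since $(1+\eps)^2 = 1+2\eps+\eps^2$.

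\emph{Step 1 (processing-time rounding).} For each job $j$ with $\eps^l T < p_j \leq \eps^{l-1} T$, set $k_j := \lceil p_j/(\eps^{l+1} T) \rceil$ and $p'_j := k_j\,\eps^{l+1} T$. Assuming the standard convention $1/\eps \in \NN$, the strict inequality $p_j > \eps^l T$ yields $k_j \geq 1/\eps + 1$, and $p_j \leq \eps^{l-1}T$ yields $k_j \leq 1/\eps^2$, matching the claimed range. The rounding error is $p'_j - p_j < \eps^{l+1}T \leq \eps p_j$, so $p'_j < (1+\eps)p_j$; stretching an optimal schedule by $(1+\eps)$ therefore accommodates the rounded sizes with makespan at most $(1+\eps)\OPT$.

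\emph{Step 2 (start-time rounding).} Starting from the schedule $S'$ just produced (of makespan $T' \leq (1+\eps)\OPT$), construct a new schedule by placing each job $j$ of class $l$ at the new start time $s''_j := \lceil (1+\eps) s_j / (\eps^{l+1}T) \rceil \cdot \eps^{l+1}T$ while retaining its size $p'_j$. Both $s''_j$ and the completion time $s''_j + p'_j$ are multiples of $\eps^{l+1}T$, as required. For machine feasibility, if $j_1$ precedes $j_2$ on a common machine in $S'$, then combining $s''_{j_1} < (1+\eps) s_{j_1} + \eps^{l_1+1}T$ with $s''_{j_2} \geq (1+\eps)(s_{j_1}+p'_{j_1})$ reduces the target $s''_{j_1}+p'_{j_1} \leq s''_{j_2}$ to the single inequality $p'_{j_1} \geq \eps^{l_1} T$. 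This is exactly what Step 1 guarantees, since $p'_{j_1} \geq (1/\eps+1)\eps^{l_1+1}T > \eps^{l_1}T$. The resulting makespan is at most $(1+\eps) T' \leq (1+\eps)^2\OPT$.

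The main obstacle is precisely this feasibility inequality in Step 2: it is the tight coupling between the two rounding operations that makes the proof go through. The size lower bound $p'_j \geq \eps^l T$ provided by Step 1 is exactly what permits a start-time shift of at most $\eps^{l+1}T$ per job to be absorbed by a single additional $(1+\eps)$ time-stretch, without any per-machine shifting argument or inductive grid-alignment over classes.
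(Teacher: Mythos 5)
The paper does not prove this lemma at all --- it is imported by citation from \cite{JansenR16} --- so your argument cannot be compared to an in-paper proof; judged on its own, it is essentially the standard time-stretching argument and it is correct in substance. Two remarks. First, your composition of two $(1+\eps)$ stretches proves a loss factor of $(1+\eps)^2=1+2\eps+\eps^2$, which is strictly weaker than the stated $(1+2\eps)$; the ``cosmetic rescaling'' patch changes the grid $\eps^{l+1}T$ and the range $\{1/\eps+1,\dots,1/\eps^2\}$ along with $\eps$, so it does not literally reproduce the statement. You can get the exact constant by doing both roundings in a single stretch: set $s''_j:=\lceil (1+2\eps)\sigma(j)/(\eps^{l+1}T)\rceil\eps^{l+1}T$ with the rounded size $p'_j$; then $s''_j\ge (1+2\eps)\sigma(j)$ and $s''_j+p'_j<(1+2\eps)\sigma(j)+2\eps^{l+1}T+p_j\le(1+2\eps)(\sigma(j)+p_j)$, using $\eps^{l+1}T\le\eps p_j$, so each new interval sits inside the $(1+2\eps)$-stretched original interval. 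Second, your feasibility argument via ``$j_1$ precedes $j_2$ on a common machine'' presupposes a fixed machine assignment, which the paper's feasibility definition (a pure machine-count constraint, with jobs occupying $q_j$ machines) does not directly provide; such an assignment does exist, but it is cleaner to note that your own inequalities $s''_j\ge(1+\eps)s_j$ and $s''_j+p'_j\le(1+\eps)(s_j+p'_j)$ already show each new interval is contained in the stretched old interval, so the set of jobs active at any time $\tau$ is a subset of those active at $\tau/(1+\eps)$ in $S'$ and the count constraint is inherited immediately, with no per-machine or ordering argument needed. With these two adjustments your proof establishes the lemma exactly as stated.
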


This rounding can be done in $\mathcal{O}(n)$.
Afterward, there are at most $1/\eps^2$ different processing times between $\eps^{l} T$ and $\eps^{l-1} T$ for each $l \in \{1,\dots, 3/\eps+3\}$. 
Therefore, the number of different processing times of large jobs is bounded by $1/\eps^2 \cdot 3/\eps = 3/\eps^3$ 
since $\delta \geq \eps^{3/\eps}$.
Further, the number of different processing times for medium jobs is bounded by $3/\eps^2$ since the medium jobs have processing times in $(\mu = \eps^3\delta,\delta)$.
Note that the number of different processing times of small jobs is bounded by $\mathcal{O}(\min\{\log(n)/\eps^2,n/\eps\})$ 
since the smallest job has processing time $\eps T /n$.
Additionally, there are at most $\nicefrac{1}{\eps\delta}$ possible starting points for the large jobs. 
We denote the set of starting points for large jobs as $\startPoints$ and the set of their processing times as $P_L$.
After this step, we will only consider the rounded processing times and will denote them as $p_j$ for each job $j \in \jobs$. 

\subsection{Large Jobs}
\label{sec:LargeJobs}
Let $\gamma m \leq m$ be the width of the second cluster $C_2$ and let $\alpha$ be a constant dependent on $\eps$ and $\gamma$, which we will specify later on. 
We say a job $j \in \jobs_L$ is wide if it uses at least $\alpha m$ machines, and we denote the set of large wide jobs by $\jobs_{L,W}$.
Note that large wide jobs have a processing time larger than $\delta T$ and need at least $\alpha m$ machines while the total work of all jobs in $\jobs$ is bounded by $m T$.
Hence, the total number of them is bounded by $1/(\delta\alpha)$.
Therefore, there are at most $S^{\nicefrac{1}{(\delta\alpha)}}$ possibilities to schedule the jobs in $\jobs_{L,W}$. 
In the algorithm, we will try each of these options.

In the next step, we deal with the large narrow jobs $\jobs_{L,N} := \jobs_L \setminus \jobs_{L,W}$. 
Consider an optimal schedule $S = (\sigma,\rho)$, where we have rounded the processing times of the jobs as described in Lemma \ref{lma:roundingProcessingTimes}. 
For the schedule $S$ and each starting time $s \in \startPoints$, let $m_s$ be the number of machines used by jobs in $\jobs_{L,N}$ that are processed (not only started) at that time, i.e., we define $m_s := \sum_{j \in \jobs_{L,N,s}} q_j$ where $\jobs_{L,N,s}$ is the set of jobs $j \in \jobs_{L,N}$, which have both a start point $s_j \leq s$ and an endpoint $e_j := s_j+p_j > s$. 
Note that jobs ending at $s$, i.e., jobs with $e_j = s_j+p_j = s$, are not part of the set $\jobs_{L,N,s}$.

For each processing time $p \in P_L$ let $q(p)$ be the total number of machines used by jobs with this processing time, i.e $q(p) := \sum_{j \in J_{L,N}, p_j = p}q_j$. Consider the following linear program $LP_{large}$:

\begin{align}
\sum_{p \in P_L}\sum_{\substack{s \in \startPoints(p)_{\leq s'},\\ s + p > s'}} x_{s,p} & = m_s & \forall s' \in \startPoints \label{eq:lpLargeMachineConstraint}\\
\sum_{s \in \startPoints(p)} x_{s,p} & = q(p) & \forall p \in P_L \label{eq:lpLargeJobConstraint}\\
x_{s,p} & \geq 0 & \forall p \in P_L, s \in \startPoints(p)
\end{align}

The variable $x_{s,p}$ defines for each start point $s \in \startPoints$ and each processing time $p \in P$ how many machines are used by jobs with processing time $p$ starting at $s$.
The first inequality ensures that the number of machines required by jobs scheduled at a start point $s$, i.e., jobs from the set $\jobs_{L,N,s}$, equals the number of used machines in the considered optimal schedule. 
The second inequality ensures that for each processing time all the jobs are scheduled. 
Given the considered optimal solution, we generate a solution to this linear program by counting for each starting time $s \in \startPoints$ and each processing time $p \in P_L$ how many machines are used by jobs with processing time $p$ starting at $s$.
This linear program has $|\startPoints|+|P_L|$ conditions and $|\startPoints||P_L|$ variables.  
Since we have $|\startPoints|+|P_L|$ conditions, there are at most $|\startPoints|+|P_L|$ non zero components in a basic solution and for each $p \in P_L$ there has to be at least one non zero component. 

In the algorithm, we guess, (i.e., we try out all the possibilities) which variables are non zero variables in the basic solution. 
There are at most $\mathcal{O}((|\startPoints||P_L|)^{|\startPoints|+|P_L|})$ options. 
We cannot guess the exact values of the variables $x_{s,p}$ in polynomial time. Instead, we guess for each non zero variable $x_{s,p}$ the smallest multiple of $\alpha m$ that is larger than the value of $x_{s,p}$ in the basic solution. 
This can be done in $\mathcal{O}(1/\alpha^{|\startPoints|+|P_L|})$. 
So to find a schedule for the large jobs $\jobs_L$, we use at most $\mathcal{O}(|\startPoints|^{|\jobs_{L,W}|}\cdot (|\startPoints||P_L|/\alpha)^{|\startPoints|+|P_L|})$
guesses in total.

Note that this optimistic guessing, i.e., using the rounded up values for $m_s$, on the one hand ensures that all the narrow large jobs can be scheduled but on the other hand can cause violations to the machine constraints. 
To prevent this machine violation, the algorithm test for each guess whether the job condition (\ref{eq:lpLargeJobConstraint}) is fulfilled for each processing time. 
If this is the case, each value of a non-zero component is reduced by $\alpha m$. 
For these down-sized values, the algorithm test the machine constraint (\ref{eq:lpLargeMachineConstraint}) for each starting point $s \in S$.  
Note that the validation whether the constraints are fulfilled is possible in 
$\Oh((|P_L|+|\startPoints|)^2)$ since for each of the $(|P_L|+|\startPoints|)$ constraints, we have to add at most $(|P_L|+|\startPoints|)$ values for each constraint.
If both conditions are fulfilled, the algorithm tries to schedule the small jobs, see Subsection \ref{sec:smallJobs}.
If the small jobs can be scheduled the guess was feasible.

The actual narrow large jobs from the set $\jobs_{L,N}$ are scheduled only once in the final phase of the algorithm.
When scheduling the jobs in $\jobs_{L,N}$, we use the reduced guessed values. 
We greedily fill the jobs into the guessed starting positions $x_{s,p}$, while slicing jobs vertical if they do not fit totally at that starting position (i.e., if the total number of machines required by jobs with processing time $p$ starting at $s$ is larger than $x_{s,p}$ when adding the machine requirement of the currently considered job) and placing the rest of the job at the next starting position for the processing time $p$. 
We schedule the jobs which cannot be placed at the starting points defined by the values of $x_{s,p}$ (because we reduced these values) either on top of the schedule or on the second cluster $C_2$ depending on what is wanted: the algorithm described in Theorem \ref{thm:AEPTASPTS} or the algorithm needed for Lemma \ref{lma:partitioningLemma}. 
The total width of these jobs shifted to the end of the schedule or to Cluster $C_2$ is bounded by $\alpha m \cdot (|S|+|P_L|)$ since there are at most $|S|+|P_L|$ non zero components and before the reduction by $\alpha m$ all the jobs could be scheduled because the job constraint (\ref{eq:lpLargeJobConstraint}) was fulfilled.

In the described placement of the narrow large jobs, we have introduced at most one fractional job for each non zero variable and it has a width of at most $\alpha m$.
We remove all these fractional jobs and place them next to the jobs which did not fit.
The machine requirement of the removed fractional jobs can be bounded by $(|S|+|P_L|)\alpha m = (1/2\eps\delta +3/\eps^3) \alpha m$. 
Hence, if $\alpha \leq \eps\delta/4$, we have $2(|S|+|P_L|)\alpha m \leq m$, and we can schedule all the removed jobs (non-fitting ones and fractional ones) at the same time at the end of the schedule without violating the machine constraint, adding at most $p_{\max}$ to the makespan.
On the other hand, if $\alpha \leq \gamma\eps\delta/4$, it follows that $2(|S|+|P_L|)\alpha m \leq \gamma m$, and we can schedule all the removed jobs inside the extra cluster with makespan at most $p_{\max}$ and machine requirement at most $\gamma m$.
In the algorithm, we choose $\alpha$ as needed for the corresponding application.
We need at most $\mathcal{O}(n + |\startPoints|+|P_L|)$ operations to place the narrow large jobs.


\subsection{Small Jobs}
\label{sec:smallJobs}
We define a layer as the horizontal strip between two consecutive starting points in $\startPoints$ and say layer $l$ is the layer between $l\eps\delta T$ and $(l+1)\eps\delta T$. 
Note that during the processing time of a layer $l$ the machine requirement of large jobs will not change since large jobs start and end at multiples of $\eps\delta T$.
Let $m_l$ be the number of machines left for small jobs in layer $l$.
Note that this number is fixed by the guesses for the large jobs.

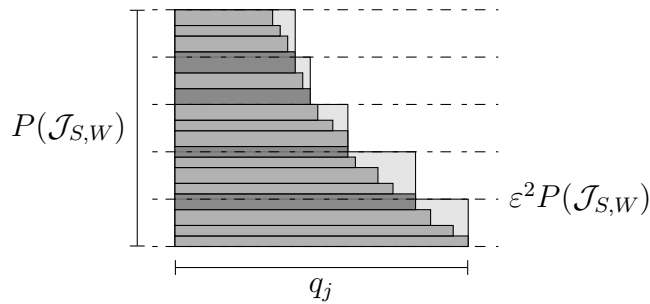
\begin{figure}[ht]
	\centering
	\begin{tikzpicture}
	\pgfmathsetmacro{\h}{0.7}
	\pgfmathsetmacro{\w}{1}
	
	\foreach \i\x\y\xx\yy in {
		0.2/0/0.0/3.9/0.9,
		0.2/0/0.9/3.2/1.8,
		0.2/0/1.8/2.3/2.7,
		0.2/0/2.7/1.8/3.6,
		0.2/0/3.6/1.6/4.5,
		0.5/0/0.0/3.9/0.2,
		0.5/0/0.2/3.7/0.4,
		0.5/0/0.4/3.4/0.7,
		0.9/0/0.7/3.2/1.0,
		0.5/0/1.0/2.9/1.2,
		0.5/0/1.2/2.7/1.5,
		0.5/0/1.5/2.4/1.7,
		0.9/0/1.7/2.3/1.9,
		0.5/0/1.9/2.3/2.2,
		0.5/0/2.2/2.1/2.4,
		0.5/0/2.4/1.9/2.7,
		0.9/0/2.7/1.8/3.0,
		0.5/0/3.0/1.7/3.3,
		0.9/0/3.3/1.6/3.7,
		0.5/0/3.7/1.5/4.0,
		0.5/0/4.0/1.4/4.2,
		0.5/0/4.2/1.3/4.5
	}{
		\draw[fill=gray,fill opacity=\i] (\x*\w,\h*\y) rectangle (\xx*\w,\yy*\h);
	}
	
	\foreach \i in {0,2,3,4,5}
	{ 
		\draw  [dash pattern=on 2pt off 3pt on 4pt off 4pt] (-0.3*\w,\i*0.9*\h) -- (4.3*\w,0.9*\i*\h);	
	}	
	
	\draw  [dash pattern=on 2pt off 3pt on 4pt off 4pt] (-0.3*\w,0.9*\h) -- (4.3*\w,0.9*\h) node[right] {$\eps^2P(\jobs_{S,W})$};
	
	\draw[|-|] (-0.5*\w,0*\h) -- node[midway, left]{$P(\jobs_{S,W})$} (-0.5*\w, 4.5*\h);
	\draw[|-|] (-0.0*\w,-0.4*\h) -- node[midway, below]{$q_j$} (3.9*\w, -0.4*\h);
	\end{tikzpicture} 
	\caption{Linear grouping for small wide jobs}
	\label{fig:LinearGrouping}
\end{figure}

We will partition the small jobs into wide and narrow jobs. 
A small job is wide if it requires at least $\eps m$ machines and narrow otherwise. 
Let $\jobs_{S,W}$ be the set of small wide jobs and $\jobs_{S,N}$ be the set of small narrow jobs.
We will round the machine requirements of the wide jobs using linear grouping, which was first introduced by Fernandez de la Vega \cite{VegaL81}.
The idea of this technique is to sort all the wide jobs by size and stack them on top of each other, such that the widest job is at the bottom and the narrowest job is at the top, see Figure \ref{fig:LinearGrouping}. 
Let $P(\jobs_{S,W})$ be the total processing time of all the small wide jobs. 
We will round the machine requirements of the wide jobs to $1/\eps^2$ sizes. 
For this purpose consider the multiples of $\eps^2P(\jobs_{S,W})$.
We draw a horizontal line at each of these multiples of $\eps^2 P(\jobs_{S,W})$ and define for each job intersected by one of these lines two new jobs, by cutting this job at that line in two parts (for the analysis and description of the rounding; in the algorithm no job will be cut). 
The jobs between two consecutive lines at $i\eps^2P(\jobs_{S,W})$ and $(i+1)\eps^2P(\jobs_{S,W})$ are called \textit{jobs of group} $i$.
For each group $i$, we generate one new job that has processing time $\eps^2 P(\jobs_{S,W})$ and the machine requirement of the widest job in this group. 
We call this job a \textit{size defining job} for the group.
Let $\bar{\jobs}_{S,W}$ be the set of rounded small wide jobs. 

When we round the wide jobs as described, we need $\Omega(n\log(n))$ operations, since we sort the jobs. 
However, we do not need to sort the jobs since we are just interested in the size defining job of each group.

\begin{lemma}
\label{lma:fastRounding}
We can generate the rounded jobs $\bar{\jobs}_{S,W}$ in $\mathcal{O}(n\log(1/\eps))$ operations.
\end{lemma}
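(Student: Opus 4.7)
The plan is to avoid a full sort by reducing the problem to multi-selection with weighted medians. Observe that the rounded instance $\bar{\jobs}_{S,W}$ is completely determined by the $1/\eps^2$ \emph{size defining} machine requirements: for each $i \in \{0,1,\dots,1/\eps^2 -1\}$, we need to know the machine requirement of the widest job in group $i$, where group $i$ contains the jobs whose cumulative processing time (in decreasing order of $q_j$) lies in the interval $[i\eps^2 P(\jobs_{S,W}),(i+1)\eps^2 P(\jobs_{S,W}))$. Equivalently, the size defining requirement of group $i$ is $q_{j_i^*}$, where $j_i^*$ is the job whose cumulative $p$-weight (among jobs of requirement at least $q_{j_i^*}$) first reaches $i\eps^2 P(\jobs_{S,W})$. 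Thus the task reduces to identifying these $1/\eps^2$ breakpoint jobs in the list $\jobs_{S,W}$ ordered by $q_j$ with weights $p_j$.

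First I would compute $P(\jobs_{S,W}) = \sum_{j \in \jobs_{S,W}} p_j$ in $\Oh(n)$. Then I would use the linear-time weighted median algorithm (a standard modification of the median-of-medians selection algorithm, where the splitter is chosen so that the total weight on either side is balanced rather than the count) to find a job $j^*$ whose cumulative $p$-weight in the sorted-by-$q$ order is $P(\jobs_{S,W})/2$. This job $j^*$ is the size defining job for the middle group. It induces a partition of $\jobs_{S,W}$ into the subset $\jobs^\mathrm{hi}$ of jobs with $q_j > q_{j^*}$ and $\jobs^\mathrm{lo}$ with $q_j < q_{j^*}$ (ties handled consistently), each of total processing time at most $P(\jobs_{S,W})/2$. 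I would then recurse on $\jobs^\mathrm{hi}$ and $\jobs^\mathrm{lo}$, in each call finding the size defining jobs for the corresponding $1/(2\eps^2)$ groups of weight $\eps^2 P(\jobs_{S,W})$ each. The recursion terminates after $\lceil \log_2(1/\eps^2)\rceil = \Oh(\log(1/\eps))$ levels, after which each subproblem needs a single weighted median.

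For the running time analysis, at each recursion level the subproblems are disjoint subsets of $\jobs_{S,W}$, so the total work on a single level is $\Oh(n)$ by linearity of the weighted median subroutine. Since there are $\Oh(\log(1/\eps))$ levels, the total running time is $\Oh(n \log(1/\eps))$, as required. Finally, once the $1/\eps^2$ size defining requirements are known, the rounded set $\bar{\jobs}_{S,W}$ consists of $1/\eps^2$ jobs of processing time $\eps^2 P(\jobs_{S,W})$ with the determined requirements, which are output in $\Oh(1/\eps^2) \subseteq \Oh(n)$ time (recall the algorithm may assume $n \geq 1/\eps^{\Oh(1)}$, as otherwise the whole problem is solved in constant time by the later subroutines).

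The main obstacle is invoking a weighted median subroutine correctly: the standard deterministic median-of-medians does not immediately give a weighted median, but a well-known adaptation (pivot on the element whose prefix sum of weights straddles half of the total weight, then recurse in the heavier side) still runs in $\Oh(n)$. Care must also be taken with ties in $q_j$ when partitioning around the splitter, so that weights on both sides are indeed bounded by half of the parent's total weight; this is handled by placing all ties on one designated side and still yields the correct cumulative-weight semantics for identifying each breakpoint job.
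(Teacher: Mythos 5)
Your proof is correct, and it rests on the same two pillars as the paper's: only the $\Oh(1/\eps^2)$ size defining jobs need to be identified, and a linear-time (weighted) selection routine replaces sorting inside a divide-and-conquer. The decomposition differs, though. The paper splits the current job set at the \emph{unweighted} median of the machine requirements, determines the group boundaries adjacent to that median job by two weighted-selection queries, and recurses on both halves; since the number of jobs halves while the number of still-unknown group sizes is split between the two calls, the running time is captured by the two-parameter recurrence $T(n,d) \leq T(n/2,d') + T(n/2,d'') + cn$ with $d'+d''<d$, solved to $cn(\log d +1)$ with $d = \Oh(1/\eps^2)$. You instead split at the \emph{weighted} median of the processing times, so that the total processing time (and hence the set of remaining breakpoints at multiples of $\eps^2 P(\jobs_{S,W})$) halves in each call; this gives a recursion of depth $\lceil\log_2(1/\eps^2)\rceil$ whose per-level cost is $\Oh(n)$ by disjointness of the subproblems, yielding the same $\Oh(n\log(1/\eps))$ bound with a more direct depth argument (at the price that the subproblem \emph{sizes} need not shrink, only their weights). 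Two small points to tidy up in your write-up: the weighted-median job coincides with a size defining job only when $1/(2\eps^2)$ is integral — otherwise it merely serves as a splitter, and the recursive calls must carry the global cumulative-weight offsets so that the breakpoints $i\eps^2 P(\jobs_{S,W})$ are interpreted with respect to the whole set, not the subset; and the size defining job of group $0$ (the overall widest job) is obtained by a separate $\Oh(n)$ maximum computation. Neither affects correctness or the claimed running time.
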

\begin{proof}
Define for a given job $j$ the set $\jobs_{S,W,>q_j} := \{i \in \jobs_{S,W}| q_i > q_j\}$ and analogously $\jobs_{S,W,\geq q_j} := \{i \in \jobs_{S,W}| q_i \geq q_j\}$.
For each group, we find the size defining job, by using a modified median algorithm with running time $\mathcal{O}(n)$.
Instead of searching for the $i$th largest job, we search for a job $j$ with $P(\jobs_{S,W,>q_j}) \leq i \eps^2 P(\jobs_{S,W})$ and $P(\jobs_{S,W,>q_j}) > i \eps^2 P(\jobs_{S,W})$ for each $i$ in $\{1, \dots, 1/\eps-1\}$.
Simply using this modified median algorithm for each group leads to $\mathcal{O}(n/\eps^2)$ operations.
 
However, we improve this approach. 
First, we search for the job with the median machine requirement in $\mathcal{O}(|\jobs_{S,W}|)$. 
Afterward, we search for the group size of the group containing this job in $\mathcal{O}(|\jobs_{S,W}|/2)$ and the group size above this group (if existing) in $\mathcal{O}(|\jobs_{S,W}|/2)$ as well. 
The set of jobs, where we do not know the rounded sizes, is now partitioned into two sets containing at most $|\jobs_{S,W}|/2$ jobs each. 
We iterate the process on both sets separately until each group size is found. 

Since there are at most $\mathcal{O}(1/\eps^2)$ groups,
this search can be done in $\mathcal{O}(n\log(1/\eps))$ operations. 
To see this, we consider the following recurrence equation
\begin{align*}
T(n,1) &\leq c n\\
T(n,d) &\leq T(n/2,d') + T(n/2,d'') + cn, \mathrm{for \ }d'+d'' < d
\end{align*}
where $n$ denotes the number of jobs and $d$ denotes the number of values we search for and $c \in \mathbb{N}$. 
To find the job with the median machine requirement and the group sizes of the group containing this item and the group above we need $\mathcal{O}(n)$ operations and hence there is a $c \in \mathbb{N}$ with these properties. 
After the set of jobs is partitioned into two sets such that each set contains at most $n/2$ jobs. 
The total number of sizes we search for is reduced by at least one since in this step we find one or two of them. 
However, the values we search for do not have to be distributed evenly to the sets.
Therefore, this recurrence equation represents the running time of the described algorithm adequately.

We claim that $T(n,d) \leq cn(\log(d)+1)$. 
We have $T(n,1) = cn= cn(\log(1)+1)$ and hence the claim is true for $d = 1$.
For $d\in \mathbb{N}_{\geq 2}$ it follows that 
\begin{align*}
T(n,d) &\leq T(n/2,d') + T(n/2,d'') +cn \\
&\leq c(n/2)(\log(d')+1) +  c(n/2)(\log(d'')+1) +cn  \\
&\leq c(n/2)\cdot (\log(d')+1 + \log(d'')+1) + cn \\
& = c(n/2)(\log(d'd'')) +2cn\\ 
&\leq c(n/2)(\log((d/2)^2)) +2cn \\
& = cn\log(d) + cn
\end{align*}
Since in our case we have $d = \eps^2$, this concludes the proof.
\end{proof}

\begin{remark}
	If we schedule the rounded jobs $\bar{\jobs}_{S,W}$ fractionally instead of the original jobs $\jobs_{S,W}$, we need to add at most $\eps T$ to the makespan of the schedule. 
\end{remark}
\begin{proof}
	Consider an optimal schedule of the original small jobs. 
	We can schedule the new jobs fractionally, by replacing all jobs contained in group $i$ by the new job generated for the jobs in the group $(i+1)$. 
	The widest rounded job cannot be scheduled instead of the original jobs, because the machine requirement might be too large. 
	We schedule this job at the end of the schedule. 
	This job has a processing time of $\eps^2 P(\jobs_{S,W})$.
	We know that $P(\jobs_{S,W})\cdot \eps m \leq m T$ since $\work(\jobs) \leq m T$ and each wide job needs at least $\eps m$ machines to be scheduled.
	Hence, it holds that $\eps^2 P(\jobs_{S,W}) \leq \eps T$.
\end{proof}

We say a configuration of wide jobs is a multiset of wide jobs $C :=\{a_{j,C}: j |j \in \bar{\jobs}_{S,W}\}$. 
We say a configuration $C$ requires at most $q$ machines, if $\sum_{j \in \bar{\jobs}_{S,W}}a_{j,C}q_j \leq q$ and define $q(C) := \sum_{j \in \bar{\jobs}_{S,W}}a_{j,C}q_j$. Let $\mathcal{C}_q$ be the set of configurations with machine requirement at most $q$, i.e.,  $\mathcal{C}_q := \{C \in \mathcal{C}\} | q(C) \leq q\}$. 

Consider the following linear program $LP_{small}$.
\begin{align}
\sum_{C \in \mathcal{C}_{m}}x_{C,|S|+1} & = \eps T \label{eq:lpSmallExtraBox}\\
\sum_{C \in \mathcal{C}_{m_l}}x_{C,l} & = \eps \delta T & \forall l = 1, \dots, |S| \label{eq:lpSmallConfigurationConstraint}\\
\sum_{l = 1}^{|S|} \sum_{C \in \mathcal{C}_{m_l}} x_{C,l}a_{j,C} & = p_j & \forall j \in \bar{\jobs}_{S,W} \label{eq:lpSmallJobConstr}\\
x_{C,l} &\geq 0  & \forall l = 1, \dots, |S|, C \in \mathcal{C}_{m_l}
\end{align}
The variable $x_{C,l}$ defines the processing time of the configuration $C$ in layer $l$.
The condition (\ref{eq:lpSmallConfigurationConstraint}) ensures that we do not give a too large processing time to the configurations used in Layer $l$, while condition (\ref{eq:lpSmallJobConstr}) ensures that the processing time of each job is covered. 
Condition (\ref{eq:lpSmallExtraBox}) is added to place the rounded jobs inside the extra box.
This linear program has $|S| + |\bar{\jobs}_{S,W}|$ conditions and at most $|S||\mathcal{C}_{m}|$ 
variables. 
If the values $m_l$ are derived from an optimal solution (or are larger than in the corresponding optimal solution), the linear program above has a solution.

To speed up the running time of our algorithm, we do not find a solution to $LP_{small}$. 
Instead, we find a solution to a relaxed version of the linear program, where we allow a slightly increased processing time per layer. 
This linear program is called $LP_{smal, rel}$ and is the same as $LP_{small}$ but we replace equation (\ref{eq:lpSmallConfigurationConstraint}) by 
\[
\sum_{C \in \mathcal{C}_{m_l}}x_{C,l}  = (1+\eps^2)\eps \delta T \ \forall l = 1, \dots, |S|,
\] while equation (\ref{eq:lpSmallExtraBox}) is replaced by 
\[ \sum_{C \in \mathcal{C}_{m}}x_{C,|S|+1} = (1+\eps^2) \eps T/2 \mathrm{.}\]

\begin{lemma}
	\label{lma:LPSolveHorizontalJobs}
	If there is a solution to $LP_{small}$, we can find a basic solution to $LP_{small, rel}$ in $\mathcal{O}((|S||\bar{\jobs}_{S,W}|(\ln(|\bar{\jobs}_{S,W}|) +\eps^{-4}))((|S|+|\bar{\jobs}_{S,W}|)^{1.5356} + (\log(1/\eps))^3/\eps^4)) \leq \mathcal{O}(1/\eps^{12}\delta^3)$ operations.
\end{lemma}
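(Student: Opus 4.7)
The plan is to recognize $LP_{small}$ as a block-angular max-min resource sharing problem in which the $|S|+1$ block constraints (\ref{eq:lpSmallExtraBox}) and (\ref{eq:lpSmallConfigurationConstraint}) restrict the configurations drawn from $\mathcal{C}_{m_l}$ (respectively $\mathcal{C}_m$) inside each layer, while the $|\bar{\jobs}_{S,W}|$ job constraints (\ref{eq:lpSmallJobConstr}) couple the blocks. This matches the setting of the max-min resource sharing framework of Grigoriadis--Khachiyan and its refinements (e.g.\ by Jansen), which can approximate such LPs to within a $(1+\eps^2)$ factor in the block budgets---precisely the slack introduced in $LP_{small, rel}$. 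Since $LP_{small}$ is feasible by hypothesis, this approximate scheme is guaranteed to return a feasible point of $LP_{small, rel}$.

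To instantiate the framework, I would implement two oracles. The \emph{block oracle} receives nonnegative prices $y \in \RR^{\bar{\jobs}_{S,W}}$ on the wide jobs and must return, for each layer $l$, a configuration $C \in \mathcal{C}_{m_l}$ that approximately maximizes $\sum_{j} y_j a_{j,C}$. This is a $0/1$ knapsack with items $\bar{\jobs}_{S,W}$, weights $q_j$, profits $y_j$ and capacity $m_l$; a standard knapsack FPTAS (after an initial sort) solves it to within a $(1+\eps^2)$ factor in $\mathcal{O}((\log(1/\eps))^3 / \eps^4)$ operations per call, which produces the corresponding additive term in the per-iteration cost. The \emph{update step} in each iteration solves a projection LP of size $\mathcal{O}(|S|+|\bar{\jobs}_{S,W}|)$ in $\mathcal{O}((|S|+|\bar{\jobs}_{S,W}|)^{1.5356})$ operations using a modern interior point method. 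The convergence analysis of the framework bounds the total number of oracle calls and LP updates by $\mathcal{O}(|S|\cdot|\bar{\jobs}_{S,W}|\cdot(\ln|\bar{\jobs}_{S,W}| + \eps^{-4}))$; multiplying this by the per-iteration cost yields the claimed running time. The final bound $\mathcal{O}(1/\eps^{12}\delta^3)$ follows by substituting $|\bar{\jobs}_{S,W}|\leq 1/\eps^2$ (from the linear grouping) and $|S|\leq 1/(\eps\delta)$.

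The scheme returns a vector supported on polynomially many configurations. To extract a \emph{basic} solution of $LP_{small, rel}$, I would apply Carath\'eodory-type column pruning on the active face of the constraint polytope, dropping redundant columns by re-solving LPs of size $\mathcal{O}(|S|+|\bar{\jobs}_{S,W}|)$; this step is absorbed inside the bound already computed. The main technical obstacle is controlling the \emph{width} of the coupling matrix $A=(a_{j,C})$: in a naive application the entries could span several orders of magnitude and inflate the iteration count. The fix is to rescale so that block budgets become unit and exploit the lower bound on processing times from Lemma \ref{lma:roundingProcessingTimes} together with $q_j \leq m$ to bound the width, which is what brings the iteration count down to the stated $\mathcal{O}(\ln|\bar{\jobs}_{S,W}| + \eps^{-4})$ factor per coupling constraint rather than a size-dependent bound.
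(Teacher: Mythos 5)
Your proposal follows essentially the same route as the paper: cast $LP_{small}$ as a max--min resource sharing problem, solve it within a factor $(1+\eps^2)$ via the Grigoriadis et al.\ framework with a knapsack FPTAS as block solver (giving the $(\log(1/\eps))^3/\eps^4$ term per call and $\mathcal{O}(|\bar{\jobs}_{S,W}|(\ln|\bar{\jobs}_{S,W}|+\eps^{-4}))$ iterations with $|S|$ oracle calls each), absorb the $(1+\eps^2)$ loss into the relaxed layer heights, and finally reduce the support to a basic solution, with the bound $\mathcal{O}(1/\eps^{12}\delta^3)$ following from $|\bar{\jobs}_{S,W}|\leq 1/\eps^2$ and $|S|=1/(\eps\delta)$.

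Three details differ from (and are less accurate than) the paper's argument. First, since configurations are multisets $\{a_{j,C}\colon j\}$ with $a_{j,C}\in\NN$, the block problem is an \emph{unbounded} knapsack, not a $0/1$ knapsack; restricting the oracle to $0/1$ configurations would optimize over a strict subset of $\mathcal{C}_{m_l}$ and could destroy the guarantee that feasibility of $LP_{small}$ yields a feasible point of $LP_{small,rel}$. The paper uses the unbounded knapsack FPTAS of Jansen and Kraft, which has exactly the per-call cost you state, so this is fixable by swapping the oracle. Second, the framework performs only cheap multiplicative price updates per iteration; there is no interior-point ``projection LP'' inside the loop. The exponent $1.5356$ enters only once, at the end, when the algorithm of Beling and Megiddo converts the accumulated solution (at most one new configuration per oracle call, hence $\mathcal{O}(|S||\bar{\jobs}_{S,W}|(\ln|\bar{\jobs}_{S,W}|+\eps^{-4}))$ columns) into a basic solution with at most $|S|+|\bar{\jobs}_{S,W}|$ nonzero components; your accounting happens to produce the same formula, but the attribution of the cost is wrong, and relatedly your worry about the ``width'' of the coupling matrix is moot because the iteration bound of the framework is width-independent. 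Third, the approximate solution (after scaling by $1/(1-\rho)$) may overcover some jobs, i.e.\ $\sum_{l}\sum_{C}x_{C,l}a_{j,C}>p_j$, while $LP_{small,rel}$ has equality constraints; the paper trims this surplus, creating at most one additional configuration per job in $\bar{\jobs}_{S,W}$, a step your write-up omits.
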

\begin{proof}
To solve this linear program, we translate it to a Max-Min-Resource-Sharing problem and solve it with approximation ratio $(1-\rho)$ for $\rho = \mathcal{O}(\eps^2)$ such that $1/(1-\rho) = (1+\eps^2)$.  

In the Max-Min-Resource-Sharing problem, we are given a nonempty convex compact set $B$, and a vector $f$ of $M \in \mathbb{N}$ non-negative continuous concave functions $f:B\rightarrow \mathbb{R}_+^{M}$. 
The objective is to find the value $\lambda^* := \max\sset{\lambda}{f(x)\geq \lambda 1_M, x \in B}$, where $1_M$ is the vector of dimension $M$ with all entries one.
In our translation, we define $f_j := \sum_{l = 1}^{|S|} \sum_{C \in \mathcal{C}_{m_l}} x_{C,l}a_{j,C}/  p_j$ for all 
$j \in \bar{\jobs}_{S,W}$, i.e., $M = |\bar{\jobs}_{S,W}|$ and 
\[B := \sett{x \in \mathbb{R}_{+}^{|S||\mathcal{C}_m|} }{ \sum_{C \in \mathcal{C}_{m_l}}x_{C,l} = \eps\delta T, \forall l = 1, \dots, |S| +1 }\text{.}\] 
We use the algorithm by Grigoriades et al. \cite{GrigoriadisKPV01} to solve this problem. 
This algorithm finds an $x \in B$ that satisfies $f(x) \geq (1-\rho)\lambda^*1_M$. 
To find this solution a so called \textit{approximate block solver} ($\mathcal{ABS}(p,\rho/6)$) has to be provided, where $p\in \mathbb{R}^M_+$.
$\mathcal{ABS}(p,\rho/6)$ has to solve for each $l\in \{1,\dots,|S|\}$ the problem
\begin{align*}
\max \sum_{j \in J}\frac{q_j}{p_j^l}&a_j& \forall j \in \bar{\jobs}_{S,W}\\
\sum_{j \in J}q_j a_j &\leq m_l\\
a_j &\in \mathbb{N}\text{.}
\end{align*}
Intuitively, $\mathcal{ABS}(p,\rho/6)$ computes one configuration for each layer, which is added to the solution $x$ in the next step of the algorithm. 

The above integer program is equivalent to the integer program of the Unbounded Knapsack problem and therefore can be solved approximatively with approximation ratio $(1-\rho/6)$ in $\mathcal{O}(|\bar{\jobs}_{S,W}|+ (\log(1/\rho))^3/\rho^2)$ operations \cite{JansenK18}. 
The algorithm needs at most $\mathcal{O}(M(\ln(M) +\rho^{-2}))$ steps where it calls the $\mathcal{ABS}(p,\rho/6)$ exactly $|S|$ times. Hence the total running time to find $x$ is bounded by 
\begin{align*}
&\mathcal{O}(M(\ln(M) +\rho^{-2})|S|(|\bar{\jobs}_{S,W}|+ (\log(1/\rho))^3/\rho^2)) \\
&= \mathcal{O}(|\bar{\jobs}_{S,W}||S|(\ln(|\bar{\jobs}_{S,W}|) +1/\eps^{4})(|\bar{\jobs}_{S,W}|+ (\log(1/\eps))^3/\eps^4)) \\
&=\mathcal{O}((\log(1/\eps))^3/\eps^{11}\delta),
\end{align*} 
since $|S| = 1/(\eps\delta)$ and $|\bar{\jobs}_{S,W}| = 1/\eps^2$

Note that if the linear program $LP_{small}$ has a solution, there exists an $x'\in B$ with $f_j(x') \geq 1$ for each $j \in \bar{\jobs}_{S,W}$.
However, we solved the Max-Min-Resource-Sharing problem just approximately, i.e., if there exist such an $x'$ with $f_j(x') \geq 1$, it holds for the calculated $x$ that $f_j(x) \geq (1-\rho)$. 
We scale $x$ with $1/(1-\rho)$ and call it $\tilde{x}$. 
If we have that $f_j(\tilde{x}) < 1$ for at least one $j \in \bar{\jobs}_{S,W}$, we know that the liner program $LP_{small}$ has no feasible solution and stop. 
This scaling step extends each layer to $\eps\delta T/(1-\rho) = (1+\eps^2)\eps\delta T$ and therefore it extends the generated schedule by at most $\eps^2T$.

Another obstacle why the given solution $x$ is not a solution to the linear program $LP_{small}$, is that the total reserved processing time for a job $j \in \bar{\jobs}_{S,W}$ in $\tilde{x}$ could be too large, i.e., it could be that
$\sum_{l = 1}^{|S|} \sum_{C \in \mathcal{C}_{m_l}} \tilde{x}_{C,l}a_{j,C} > p_j$ for some $j \in \bar{\jobs}_{S,W}$.
To subduct this surplus, we remove a total processing time of $\sum_{l = 1}^{|S|} \sum_{C \in \mathcal{C}_{m_l}} \tilde{x}_{C,l}a_{j,C} - p_j$ from the configurations for each $j \in \bar{\jobs}_{S,W}$. 
By this step, we create at most one more configuration for each job in $\bar{\jobs}_{S,W}$.
The vector changed in this way, from now on called $\bar{x}$, is a solution to $LP_{small,rel}$.

Since the algorithm in \cite{GrigoriadisKPV01} calls the block solver at most $\mathcal{O}(|S||\bar{\jobs}_{S,W}|(\ln(|\bar{\jobs}_{S,W}|) +\eps^{-4}))$ times, the generated solution $\bar{x}$ uses at most $\mathcal{O}(|S||\bar{\jobs}_{S,W}|(\ln(|\bar{\jobs}_{S,W}|) +\eps^{-4}) +|\bar{\jobs}_{S,W}|)$ configurations in total.
We use the algorithm by Beling and Megiddo \cite{BelingM98} to find a basic solution with at most $|S| +|\bar{\jobs}_{S,W}|$ non zero components in $$\mathcal{O}((|S||\bar{\jobs}_{S,W}|(\ln(|\bar{\jobs}_{S,W}|) +\eps^{-4}))(|S|+|\bar{\jobs}_{S,W}|)^{1.5356}) \leq \mathcal{O}(1/\eps^{12}\delta^3)$$ operations.
Hence the total running time needed to find the basic solution to $LP_{small,rel}$ is bounded by 
$$\mathcal{O}((|S||\bar{\jobs}_{S,W}|(\ln(|\bar{\jobs}_{S,W}|) +\eps^{-4}))((|S|+|\bar{\jobs}_{S,W}|)^{1.5356} + (\log(1/\eps))^3/\eps^4)) \leq \mathcal{O}(1/\eps^{12}\delta^3).$$
This concludes the prove.  
\end{proof}
We will find a schedule of the jobs $\jobs_{S,W}$, by placing the configurations into the corresponding layers and greedily filling the jobs into the configurations, see Figure \ref{fig:placingOfSmallItems}.
To ensure that each job can be scheduled integrally, we extend each configuration, by $\mu T$, which is the tallest height a small job can have. Since there are at most $|S| + |\bar{\jobs}_{S,W}|$ configurations we extend the schedule by at most $(|S| + |\bar{\jobs}_{S,W}|)\mu T \leq (1/\eps\delta + 1/\eps^2)\delta \eps^3T\leq 2\eps^2T$.
Note that after this extension the size defining job, which might has been cut for the analysis, can be scheduled in the group where it first appears. 

To schedule the small jobs, we use the next fit decreasing height (NFDH) algorithm to place them next to the configurations. 
We can sort the small jobs by height in $\mathcal{O}(n+\log(n)/\eps^2)$ since there are at most $\mathcal{O}(\log(n)/\eps^2)$ possible processing times.

Note that the total work of the small jobs has to fit next to the configurations. The reason is that the configurations have a total work which equals the total work of the wide jobs. 
Furthermore after scheduling the large jobs, the total idle time of the machines was at least as large as the total work of the small jobs. 

\begin{figure}[ht]
	\centering
	\begin{tikzpicture}
	\pgfmathsetmacro{\h}{0.65}
	\pgfmathsetmacro{\w}{0.55}
	\draw (0*\w,6*\h) -- (0*\w,0*\h) -- (10*\w,0*\h) -- (10*\w,6*\h);
	
	\draw[fill = gray] (0*\w,0.0*\h) rectangle (4.0*\w,0.5*\h);
	\draw[fill = gray] (0*\w,0.5*\h) rectangle (3.9*\w,0.9*\h);
	\draw[fill = gray] (0*\w,0.9*\h) rectangle (3.8*\w,1.4*\h);
	\draw[fill = gray] (0*\w,1.4*\h) rectangle (3.7*\w,1.7*\h);
	
	\draw[fill = gray] (4*\w,0.0*\h) rectangle (7.5*\w,0.4*\h);
	\draw[fill = gray] (4*\w,0.4*\h) rectangle (7.4*\w,0.8*\h);
	\draw[fill = gray] (4*\w,0.8*\h) rectangle (7.3*\w,1.3*\h);
	\draw[fill = gray] (4*\w,1.3*\h) rectangle (7.2*\w,1.8*\h);
	
	\draw[fill = lightgray] (7.5*\w,0*\h) rectangle (8.5*\w,1.0*\h);
	\draw[fill = lightgray] (8.5*\w,0*\h) rectangle (9.4*\w,0.9*\h);
	\draw[fill = lightgray] (7.5*\w,1*\h) rectangle (8.3*\w,1.9*\h);
	\draw[fill = lightgray] (8.3*\w,1*\h) rectangle (9.3*\w,1.8*\h);
	
	\draw[ pattern= north west lines] (7.5*\w,1*\h) rectangle (10*\w,0.9*\h);
	
	\draw[|-|] (11*\w,1*\h) -- (11*\w,0.9*\h);
	
	\draw[] (11*\w,0.95*\h) -- (12*\w,4*\h);
	
	\draw[fill = gray] (0*\w,2.5*\h) rectangle (3.1*\w,3.4*\h);
	\draw[fill = gray] (0*\w,3.4*\h) rectangle (2.9*\w,3.9*\h);
	\draw[fill = gray] (0*\w,3.9*\h) rectangle (2.8*\w,4.5*\h);
	
	\draw[fill = gray] (3.2*\w,2.5*\h) rectangle (6.1*\w,3.1*\h);
	\draw[fill = gray] (3.2*\w,3.1*\h) rectangle (6.0*\w,3.8*\h);
	\draw[fill = gray] (3.2*\w,3.8*\h) rectangle (5.9*\w,4.7*\h);
	
	\draw[fill = lightgray] (6.1*\w,2.5*\h) rectangle (7.0*\w,3.3*\h);
	\draw[fill = lightgray] (7.0*\w,2.5*\h) rectangle (8.0*\w,3.2*\h);
	\draw[fill = lightgray] (8.0*\w,2.5*\h) rectangle (8.5*\w,3.1*\h);
	\draw[fill = lightgray] (8.5*\w,2.5*\h) rectangle (9.4*\w,3.0*\h);
	
	\draw[ pattern= north west lines] (6.1*\w,3.3*\h) rectangle (10*\w,3.0*\h);
	\draw[|-|] (11*\w,3.3*\h) -- (11*\w,3.0*\h);
	\draw[] (11*\w,3.15*\h) -- (12*\w,4*\h);
	
	\draw[fill = lightgray] (6.1*\w,3.3*\h) rectangle (6.9*\w,3.8*\h);
	\draw[fill = lightgray] (6.9*\w,3.3*\h) rectangle (7.9*\w,3.8*\h);
	\draw[fill = lightgray] (7.9*\w,3.3*\h) rectangle (8.7*\w,3.7*\h);
	\draw[fill = lightgray] (8.7*\w,3.3*\h) rectangle (9.6*\w,3.7*\h);
	
	\draw[ pattern= north west lines] (6.1*\w,3.7*\h) rectangle (10*\w,3.8*\h);
	\draw[|-|] (11*\w,3.7*\h) -- (11*\w,3.8*\h);
	\draw[] (11*\w,3.75*\h) -- (12*\w,4*\h);
	
	\draw[fill = lightgray] (6.1*\w,3.8*\h) rectangle (6.8*\w,4.2*\h);
	\draw[fill = lightgray] (6.8*\w,3.8*\h) rectangle (7.8*\w,4.2*\h);
	\draw[fill = lightgray] (7.8*\w,3.8*\h) rectangle (8.6*\w,4.2*\h);
	\draw[fill = lightgray] (8.6*\w,3.8*\h) rectangle (9.2*\w,4.2*\h);

	\draw[fill = lightgray] (0.0*\w,5*\h) rectangle (1.0*\w,5.4*\h);
	\draw[fill = lightgray] (1.0*\w,5*\h) rectangle (1.9*\w,5.35*\h);
	\draw[fill = lightgray] (1.9*\w,5*\h) rectangle (2.9*\w,5.35*\h);
	\draw[fill = lightgray] (2.9*\w,5*\h) rectangle (3.8*\w,5.3*\h);
	\draw[fill = lightgray] (3.8*\w,5*\h) rectangle (4.7*\w,5.3*\h);
	\draw[fill = lightgray] (4.7*\w,5*\h) rectangle (5.7*\w,5.3*\h);
	\draw[fill = lightgray] (5.7*\w,5*\h) rectangle (6.6*\w,5.3*\h);
	\draw[fill = lightgray] (6.6*\w,5*\h) rectangle (7.5*\w,5.25*\h);
	\draw[fill = lightgray] (7.5*\w,5*\h) rectangle (8.5*\w,5.25*\h);
	\draw[fill = lightgray] (8.5*\w,5*\h) rectangle (9.5*\w,5.25*\h);
	
	\draw[ pattern= north west lines] (0*\w,5.25*\h) rectangle (10*\w,5.4*\h);
	\draw[|-|] (11*\w,5.25*\h) -- (11*\w,5.4*\h);
	\draw[] (11*\w,5.325*\h) -- (12*\w,4*\h) node[right]{$\sum \leq \mu T$};
	
	\draw[fill = lightgray] (0.0*\w,5.4*\h) rectangle (0.9*\w,5.65*\h);
	\draw[fill = lightgray] (0.9*\w,5.4*\h) rectangle (1.8*\w,5.65*\h);
	\draw[fill = lightgray] (1.8*\w,5.4*\h) rectangle (2.8*\w,5.65*\h);
	\draw[fill = lightgray] (2.8*\w,5.4*\h) rectangle (3.8*\w,5.6*\h);
	\draw[fill = lightgray] (3.8*\w,5.4*\h) rectangle (4.8*\w,5.6*\h);
	\draw[fill = lightgray] (4.8*\w,5.4*\h) rectangle (5.8*\w,5.6*\h);
	\draw[fill = lightgray] (5.7*\w,5.4*\h) rectangle (6.7*\w,5.6*\h);
	\draw[fill = lightgray] (6.7*\w,5.4*\h) rectangle (7.6*\w,5.55*\h);
	\draw[fill = lightgray] (7.6*\w,5.4*\h) rectangle (8.6*\w,5.55*\h);
	\draw[fill = lightgray] (8.6*\w,5.4*\h) rectangle (9.6*\w,5.55*\h);
	
	\draw[dashed, very thick] (0,1.5*\h) -- (10*\w,1.5*\h);
	\draw[dashed, very thick] (0,2.5*\h) -- (10*\w,2.5*\h);
	\draw[dashed, very thick] (0,4.0*\h) -- (10*\w,4.0*\h);
	\draw[dashed, very thick] (0,5.0*\h) -- (10*\w,5.0*\h);\draw[dashed, very thick] (0,5.45*\h) -- (10*\w,5.45*\h);

	\draw [decorate,decoration={brace,amplitude=5pt},thick] (10*\w,0*\h) -- (9*\w,0*\h) node [midway,yshift=-10]{$< \eps m$};
	\draw[ pattern= dots] (9*\w,0.0*\h) rectangle (10*\w,1.50*\h);
	\draw[ pattern= dots] (9*\w,2.5*\h) rectangle (10*\w,4.00*\h);
	\draw[ pattern= dots] (9*\w,5.0*\h) rectangle (10*\w,5.45*\h);
	
	\draw [decorate,decoration={brace,amplitude=5pt},thick] (10*\w,2.5*\h) -- (10*\w,1.5*\h) node [midway,xshift=15]{$\mu T$};

	\end{tikzpicture}
	\caption{Filled configurations, containing wide (dark gray) and narrow (light gray) small jobs}
	\label{fig:placingOfSmallItems}
\end{figure}
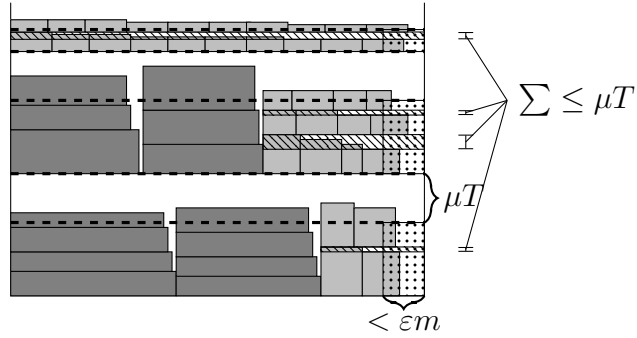

The NFDH algorithm sorts the small jobs by height and places them into shelves starting with the tallest job, see Figure~\ref{fig:placingOfSmallItems}.
In each shelf there are at most $\eps m$ machines which are completely idle since each narrow job requires at most $\eps m$ machines. 
If there would be more idle machines, another job would have fitted in this shelf.

Furthermore, there can be machines that start to idle before the starting time of the next shelf, namely in the moment when a job with a processing time smaller than the first job in this shelf has finished its processing time. 
Let $p_{\max,i}$ be the largest processing time in shelf $i$, then the idle time of the machines which start to idle in shelf $i$ is bounded by $p_{\max, i} - p_{\max,i+1}$. Therefore in total the processing time of machines starting to idle over all shelves is bounded by $p_{\max}\cdot m$ while the total idle time of machine being idle during the whole shelf is bounded by $\eps m \cdot T$. 
Hence the total work of narrow small jobs that cannot be scheduled next to the configurations, is bounded by $\eps m \cdot T + \mu T \cdot m$.
By using NFDH again to schedule these jobs, we add at most  $(\mu+\eps) m T/(1-\eps)m +p_{\max} \leq 2\eps T$ to the makespan. 

\subsection{Medium Jobs}
In the last step, we schedule the medium sized jobs. 
First, we sort them by their processing time. This can be done in $\mathcal{O}(n + 1/\eps^2)$ since there are at most $3/\eps^2$ different processing times between $\mu = \eps^3\delta$ and $\delta$. 
Afterward, we use the NFDH algorithm to place the jobs. 
Hence, we start with the tallest one and place the jobs one by one in shelves. 
Coffman et al. \cite{CoffmanGJT80} have shown the following (slightly adapted) Lemma:

\begin{lemma}[See \cite{CoffmanGJT80}]
	\label{thm:NFDH}
	For any list $L$ ordered by nonincreasing height, 
	\[\mathrm{NFDH}(L) \leq 2W(L)/m + p_{max} \leq 2\cdot \OPT(L) + p_{max}\text{.}\]
\end{lemma}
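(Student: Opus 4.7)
The plan is to adapt the classical shelf--pairing analysis of NFDH, with widths replaced by machine requirements and heights replaced by processing times. First I would enumerate the shelves $S_1,\dots,S_k$ produced by NFDH in the order in which they are created, write $h_i$ for the processing time of the first (and tallest) item of $S_i$, and observe that since $L$ is sorted by nonincreasing height we have $h_1\geq h_2\geq\dots\geq h_k$ and in particular $h_1\leq p_{\max}$. The total NFDH makespan is then $H=\sum_{i=1}^k h_i$.

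The key step is a pairwise work bound that couples each shelf with its successor. For $i\geq 2$ let $j_i$ be the first item placed in $S_i$. By definition of NFDH the item $j_i$ did not fit into $S_{i-1}$, so the items already in $S_{i-1}$ occupy strictly more than $m-q_{j_i}$ machines. Since every item in $S_{i-1}$ precedes $j_i$ in the sorted list, each of them has processing time at least $h_i$, giving
$$W(S_{i-1})\;\geq\; h_i\,(m-q_{j_i})\qquad\text{and}\qquad W(S_i)\;\geq\; h_i\, q_{j_i},$$
so that $W(S_{i-1})+W(S_i)\geq h_i\,m$ for every $i\geq 2$.

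Finally I would combine this pairwise bound with a pairing of the shelves. Summing the bound above over the disjoint pairs $(S_1,S_2),(S_3,S_4),\dots$ yields $W(L)\geq m\sum_{i\geq 1} h_{2i}$. On the other hand, by the monotonicity $h_{2i+1}\leq h_{2i}$ we obtain
$$H-h_1\;=\;\sum_{i\geq 2} h_i\;\leq\; 2\sum_{i\geq 1} h_{2i}\;\leq\; \frac{2\,W(L)}{m},$$
so that $\mathrm{NFDH}(L)=H\leq 2W(L)/m+p_{\max}$, which is the first inequality. The second inequality is immediate, since the average work $W(L)/m$ is a trivial lower bound on $\OPT(L)$.

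The only place I expect a subtle issue is the pairwise work bound: it is tempting to charge the shelf height $h_i$ purely against the work of $S_{i-1}$, but the contribution $h_i\,q_{j_i}$ coming from the very first item of $S_i$ is essential in order to reach a clean pairwise bound of the form $h_i\, m$. Once this observation and the monotonicity of the $h_i$ are in place, the remainder is routine rearrangement.
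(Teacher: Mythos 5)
Your proof is correct, and since the paper does not prove this lemma itself but cites Coffman et al.~\cite{CoffmanGJT80}, your shelf-pairing argument (bounding $W(S_{i-1})+W(S_i)\geq h_i m$ via the first item of each shelf not fitting into its predecessor, then pairing shelves and using monotonicity of the $h_i$) is essentially the standard analysis from that reference, adapted correctly from strip widths to machine requirements. Nothing further is needed.
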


We know that $W(\jobs_M)$ is bounded by $\eps m T$ and $p_{\max}$ is bounded by $\delta T \leq \eps T$.
Therefore, we add at most $\mathrm{NFDH}(\jobs_M) \leq 2\eps T + \eps T \leq 6 \eps \OPT$ to the makespan, by scheduling the medium sized jobs this way.

\subsection{Summary}
\label{subsec:summary}
Given $\jobs$, $m$ and $\eps$ the algorithm can be summarized as follows:

\begin{enumerate}
	\item In the first step of the algorithm, we simplify the instance. 
	We define the lower bound $T:= \max\{p_{\max}, (\sum_{j \in \jobs}p_jq_j)/m\}$ and round the processing times such that they are multiples of $\eps T/n$.
	Next, we find the correct values for $\delta$ and $\mu$ and partition the jobs into $\jobs_{L,W} \disjCup \jobs_{L,N} \disjCup \jobs_M \disjCup  \jobs_{S,W}\disjCup\jobs_{S,N}$ accordingly.
	Afterward, we round the processing times of all jobs using Lemma \ref{lma:roundingProcessingTimes} and generate $\bar{\jobs}_{L,N}$.
	Last, we generate $\bar{\jobs}_{S,W}$, i.e., we round the machine requirements of the horizontal jobs.
	\item After the simplification steps area done, we start a binary search for the correct size of $\OPT$ for the rounded instance. Note that to find the correct value for $\OPT$ for the rounded instance, we are only interested in the number of layers needed to place the jobs in $\jobs_{L,W} \disjCup \bar{\jobs}_{L,N}\disjCup \bar{\jobs}_{S,W}$.
	We know that we need at least $l= 1/(\eps\delta)$ layer but at most $u = 2(1+\eps)(1+2\eps)/(\eps\delta)$ layer.
	We start our binary search using $L = \lfloor(l + u)/2\rfloor$ layers.
	\item Given a number of layers $L$, we try each possibility to schedule $\jobs_{L,W} \cup \bar{\jobs}_{L,N}$ using at most this number of layers. For each of these possibilities, we try to solve $LP_{small}$ for $\bar{\jobs}_{S,W}$ with last allowed layer $L$. If $LP_{small}$ is solvable, we save the $LP$-solution and the choice for $\jobs_{L,W} \cup \bar{\jobs}_{L,N}$ and set the upper bound $u = L-1$ update $L$ accordingly.
	Otherwise, we try the next choice for $\jobs_{L,W} \cup \bar{\jobs}_{L,N}$. If all the possibilities to schedule these jobs fail, we set $l = L+1$ and update $L$ accordingly.
	\item The binary search part is finished as soon as $u < l$.
	When this is the case, we consider the last solution for $LP_{small}$ and the corresponding choice for $\jobs_{L,W}$ and $\bar{\jobs}_{L,N}$.
	we scale back all the processing times and assign the jobs $\jobs_{L,N}$ and $\jobs_{S,W}$ to this solution and shift the fractional placed jobs to the top or the extra cluster $C_2$ as described in Section \ref{sec:LargeJobs} and Section \ref{sec:smallJobs}.
	We schedule the jobs $\jobs_{S,N}$ next to the configurations for $\bar{\jobs}_{S,W}$ using the NFDH algorithm. 
	Finally, we schedule the medium sized jobs on top of the schedule using the NFDH algorithm.
\end{enumerate}

In most of the simplification steps, we have some loss in the approximation ratio of size $\mathcal{O}(\eps T)$. Since $T \leq \OPT$ it holds that the algorithm has an approximation ratio of the form $(1+\Oh(\eps))\OPT + p_{\max}$.
To reach a $(1+\eps)\OPT + p_{\max}$ algorithm the value $\eps$ has to be scaled accordingly.
Note that for the sake of simplicity, we did not optimize the above algorithm to guarantee the best possible running time with regard to the added $\Oh(\eps)$. 

The total running time of the algorithm is bounded by 
\begin{align*}
&\mathcal{O}(\log(n)/\eps^2+n \log(1/\eps)
+\log(1/\eps\delta)((|S|^{1/\delta\alpha} (|S||P_L|/\alpha)^{|S|+|P_L|}))(1/\eps^{10}\delta^2)) 
\\
=& \mathcal{O}(n \log(1/\eps))
+ 1/\eps^{1/\eps^{\mathcal{O}(1/\eps)}},
\end{align*}
which concludes the proof of Theorem \ref{thm:AEPTASPTS}. 

Next, we describe how this leads to the algorithm from Theorem \ref{thm:ONAlg}.
As described, we can use this algorithm to find the schedule on the clusters $C_1$ and $C_2$ as needed for the algorithm described in the proof of Lemma \ref{lma:partitioningLemma}. 
Both algorithms combined have the properties of the algorithm needed for the proof of Theorem \ref{thm:ONAlg}.
The algorithm from Lemma \ref{lma:partitioningLemma} will call the above algorithm with $\eps = 1/8$ if $N = 2$ or $\eps \geq 1/4$ for the case that $N \geq 6$. 
Hence in the worst case the additive constant becomes something like $\Omega(8^{8^8})$ for $N = 2$  and $\Omega(4^{256})$ for $N \geq 6$.
However, note that the above running time is a worst case running time and that, depending on the instance, we might have $\delta = \eps$ rather than $\delta = \eps^{3/\eps}$, what will reduce the additive constant significantly.

To prove the \ac{MCS} part of Theorem \ref{thm:AEPTASMCS} note that we can use the algorithm described in this section to find a schedule on $N$ clusters with ratio $(1+\eps)\OPT + p_{\max}$ in the same running time. 
Let $\OPT$ be the makespan of an optimal schedule on $N\geq 2$ clusters.
Consider a solution for an instance of \acl{PTS} generated by the algorithm above.
It has a makespan of $T_{\schedAlg}\leq (1+\eps)N\OPT +p_{\max}$.
Define $T_{\schedAlg}' := T_{\schedAlg} -p_{\max}$.
We partition the schedule at multiples of $T_{\schedAlg}'/N$, and schedule each job starting between two of these multiples on the same cluster, such that the jobs remain their relative starting positions. 
Since $T_{\schedAlg}'\leq (1+\eps)N\OPT$, each of these parts has a height of at most $(1+\eps)\OPT +p_{\max}$.
This concludes the proof of Theorem \ref{thm:AEPTASMCS}.

\section{A Faster Algorithm for a Practical Number of Jobs}
\label{sec:FastAlgorithm}
Note that in the algorithm described above, we have a running time of $\mathcal{O}(n)$, but the hidden constant can be extremely large. 
Hence, in practical applications it can be more useful to use an algorithm with running time $\mathcal{O}(n\log(n))$ or $\mathcal{O}(n^2)$, to find an $\alpha\OPT+p_{\max}$ approximation for \acf{PTS}.
For $N \geq 6$, we use $\eps \in [1/4,1/3]$ 
and, hence, a fast $\mathrm{poly}(n)$ algorithm without large hidden constants and  approximation ratio $(5/4)\OPT + p_{\max}$ would bring an significant improvement for the vast majority of cluster numbers with $2$ and $5$ being the only exceptions. 
Even an algorithm with approximation ratio $(4/3)\OPT +p_{\max}$ would speed up the algorithm for one third of all the possible instances, namely all the instances where the number of clusters is dividable by three.

To this point, we did not find either of the algorithms, and we leave this as an open question. 
Instead, we present a fast algorithm with approximation ratio $(3/2)\OPT + p_{\max}$. 
This algorithm for \ac{PTS} leads to an algorithm for \ac{MCS} with approximation ratio $9/4$ for all instances where $N \bmod 3 =0$.

In the description of the following algorithm, we need the concept of an idle machine.
A machine is \textit{idle} at a time $\tau$ if it does not processes any job at that time. 
Given a point in time $\tau$ the number of idle machines at that time is given by \[\free(\tau) := m - \sum_{\substack{j \in \jobs,\\ \sigma(j) \leq \tau < \sigma(j)+\pT{j},\\ \rho(j) = i}}\mN{j}\] and the total idle time up to $\tau$ is defined by 
\[\tau m - \sum_{\substack{j \in \jobs,\\  \sigma(j) +\pT{j} \leq \tau}} \mN{j}\pT{j}  + \sum_{\substack{j \in \jobs,\\  \sigma(j) \leq \tau < \sigma(j)+\pT{j}}} \mN{j}(\tau-\sigma(j)).\]

\begin{lemma}
	\label{lma:fastPTS}
	There is an algorithm for \ac{PTS} with approximation guarantee $ (3/2)\OPT + p_{\max}$ and running time $\mathcal{O}(n\log(n))$. 
	This schedule can be divided into two clusters $C_1$ and $C_2$, where the schedule on $C_1$ has a makespan of at most $(3/2) \OPT$ and the makespan of $C_2$ is bounded by $p_{\max}$.
\end{lemma}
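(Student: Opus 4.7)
I would partition $\jobs$ into \emph{wide} jobs $\jobs_W = \{j \in \jobs : q_j > m/2\}$ and \emph{narrow} jobs $\jobs_N = \jobs \setminus \jobs_W$, and exploit three lower bounds on $\OPT$: $\OPT \geq p_{\max}$, $\OPT \geq \work(\jobs)/m$, and $\OPT \geq P_W := \sum_{j \in \jobs_W} p_j$. The last bound holds because no two wide jobs can be scheduled in parallel. All three will be needed; if we could only use the first two we would obtain the naive bound $2\OPT + p_{\max}$ from list scheduling, so the whole point is to squeeze the factor down to $3/2$ using $P_W \leq \OPT$.

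The algorithm has two phases, both implementable in $\Oh(n \log n)$. First, sort the wide jobs by non-increasing $p_j$ and schedule them sequentially in $[0, P_W]$, each one occupying machines $1, \ldots, q_j$. This creates a staircase of idle machine-time of width strictly less than $m/2$ below each wide job. Second, sort the narrow jobs by non-increasing $p_j$ and insert each one greedily at the earliest start time that has $\geq q_j$ free machines, spilling past $P_W$ when no earlier slot suffices. Maintaining the free region as a balanced binary search tree keyed on start time makes each insertion $\Oh(\log n)$; together with the two sorts this gives the overall $\Oh(n \log n)$ bound.

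For the approximation guarantee I would analyse the last-finishing job $j^*$ with start time $s^*$ and processing time $p^* \leq p_{\max}$. If $j^* \in \jobs_W$, then $s^* + p^* \leq P_W \leq \OPT$, well within the target. Otherwise $j^*$ is narrow, and the greedy rule forces the used machines to exceed $m - q^* \geq m/2$ at every time $t \in [0, s^*)$. A careful double counting---charging the wide-job work to $P_W \leq \OPT$ and the narrow-job work completed during the staircase-fill of $[0, P_W]$ and during the overflow $[P_W, s^*]$ separately against $\work(\jobs)/m \leq \OPT$---is designed to yield $s^* + p^* \leq \tfrac{3}{2}\OPT + p_{\max}$. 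Finally, the split is obtained by cutting the schedule at height $\tfrac{3}{2}\OPT$: everything above that cut must have started in a time window of length at most $p_{\max}$, and because these jobs were already compatible on a single cluster of $m$ machines they can be moved verbatim to $C_2$ with makespan at most $p_{\max}$.

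The main obstacle is sharpening the work-counting step to obtain the factor $3/2$ rather than the easy $2$. The refinement hinges on the observation that the staircase below the wide jobs absorbs up to $\sum_{j \in \jobs_W}(m - q_j) p_j$ units of narrow work ``for free'' during $[0, P_W]$; combined with the sequential bound $P_W \leq \OPT$, this reduces the area that must be scheduled past $P_W$ and shaves off the missing $\OPT/2$. A secondary obstacle is the data-structure detail: realising the greedy insertions in $\Oh(\log n)$ amortised time requires a balanced BST that tracks, for each maximal free rectangle, its start time and number of free machines; the update after each insertion splits one rectangle into a bounded number of successors, keeping the tree size linear.
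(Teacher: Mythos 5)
Your overall plan (use the three lower bounds $\OPT\geq p_{\max}$, $\OPT\geq \work(\jobs)/m$, $\OPT\geq\sum_{j:q_j>m/2}p_j$, schedule the widest jobs first, then insert the remaining jobs greedily, and finally peel off the jobs responsible for the additive $p_{\max}$ onto $C_2$) is in the same spirit as the paper, and the $\Oh(n\log n)$ running-time part is unproblematic. But the central step is missing: the ``careful double counting'' that is supposed to turn the trivial factor $2$ into $3/2$ is only announced, not carried out, and I do not believe it can be carried out with your setup. Your only width threshold is $m/2$, so when the last-finishing job $j^*$ is narrow you only get utilization above $m-q^*\geq m/2$ before $s^*$, which is exactly the factor-$2$ bound. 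The proposed rescue -- charging narrow work against the ``staircase'' of idle capacity below the wide jobs -- is vacuous in the hard case: if there are no (or very few) jobs wider than $m/2$ and the last job has $q^*\in(m/3,m/2]$, there is no staircase to charge against, yet the utilization bound is still only $m/2$. This is precisely why the paper introduces the second threshold $m/3$, schedules \emph{all} jobs with $q_j>m/3$ in the first phase, measures the time $a$ during which one such job runs and the time $b$ during which two run, and then either dismantles and restacks everything wider than $m/3$ (case $a>b$, using $a+b\leq\OPT$ to bound $a+2b\leq(3/2)\OPT$) or proves via a pairing argument that the average idle capacity over the times counted in $a$ and $b$ is at most $m/3$ (case $a\leq b$). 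Some case analysis of this kind, distinguishing jobs in $(m/3,m/2]$ from jobs in $(0,m/3]$, is the actual content of the lemma and is absent from your proposal. A further, smaller flaw: the property you assert -- that more than $m-q^*$ machines are busy at \emph{every} time in $[0,s^*)$ -- does not hold for your insertion rule (earliest start that is feasible for the whole duration), because a start time $t$ can be rejected due to a capacity conflict later inside $[t,t+p^*)$ while capacity at $t$ itself is ample; the sweep-style list schedule of Turek et al.\ used in the paper does have this property, which is why the paper uses it.

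The final splitting step is also not correct as stated. Cutting at height $(3/2)\OPT$ leaves you with jobs that \emph{cross} the cut: either they stay on $C_1$, whose makespan then exceeds $(3/2)\OPT$, or they are moved together with the jobs lying entirely above the cut, in which case the moved jobs occupy a time window of length up to $2p_{\max}$ (starts as early as $(3/2)\OPT-p_{\max}$, ends as late as $(3/2)\OPT+p_{\max}$), so a verbatim shift only gives a $C_2$-makespan of $2p_{\max}$, and the moved jobs are not all simultaneously active at any single time. The paper avoids this by cutting at the \emph{last start time} $\rho$: it removes the last-started job together with every job ending strictly after $\rho$; all of these are running at time $\rho$, hence their total machine requirement is at most $m$ and they can all be started at time $0$ on $C_2$ with makespan at most $p_{\max}$, while $C_1$ keeps makespan $\rho\leq(3/2)\OPT$. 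You would need to adopt this $\rho$-based peeling (and prove $\rho\leq(3/2)\OPT$, which is again the missing counting argument) for the two-cluster statement of the lemma.
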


\begin{proof}
	In the following, we describe the steps of the algorithm. 
	The first part of the algorithm is to find a schedule for the jobs with machine requirement larger than $m/3$.
 	In the second part, we schedule the jobs with machine requirement at most $m/3$ in a best fit manner.
 	This second part depends on one property from the schedule for the jobs with resource requirement larger than $m/3$, as we will see later.
 	This algorithm uses the following optimized variant of List-Scheduling as described in Turek et al~\cite{TurekWY92}: 
 	Starting at time $\tau = 0$ for every endpoint of a job, schedule the widest job that can be started at this point if there is one; otherwise, go to the next endpoint and proceed as before.
	
	The first part of the algorithm can be summarized as follows:
	\begin{enumerate}
	\item For a given set of jobs $\jobs$, first consider the jobs $j\in\jobs$ with $\mN{j} \in [m/3, m]$ and sort them by decreasing size of the machine requirement $\mN{j}$. 
	\item We stack all the jobs $j \in \jobs$ with $\mN{j} >m/2$ ordered by their machine requirement such that the largest starts at time $0$, see Figure~\ref{fig:HeuristicalAlgorithm}.
	\item Look at the job with the smallest requirement of machines larger than $m/3$ and place it at the first possible point in the schedule next to the jobs with machine requirement larger than $m/2$. 
	We call this point in time $\tau$. 
	\item Schedule all the other jobs with machine requirement at least $m/3$ with the optimized List-Schedule starting at $\tau$. The List-Schedule includes the endpoints of the already scheduled jobs.
	\end{enumerate}
	
	Let $\tau'$ be the point in time, where the last job ends, which needs more than $m/2$ machines and define $\tau''$ to be the first point in time where both jobs scheduled at $\tau'$ have ended.
	Furthermore, let $T'$ be the last point in the schedule where two jobs are processed and define $T := \max\{T',\tau'\}$.
	Note that at each point in the schedule between $\tau'$ and $T$ there will be scheduled exactly two jobs with machine requirement in $[m/3,m/2]$, while between $\tau'$ and $\tau$ it can happen that there is no job from this set. 
	
	\begin{figure}[ht]
		\begin{center}
			\begin{tikzpicture}
			\pgfmathsetmacro{\h}{0.9}
			\pgfmathsetmacro{\w}{5}
			
			\draw[] (0,4*\h)--(0,0) -- (\w,0) -- (\w,4*\h);
			\draw[dashed] (\w/2,-0.4*\h) node[below]{$m/2$} -- (\w/2,4*\h);
			\draw[dashed] (\w/3,-0.1*\h) node[below]{$m/3$} -- (\w/3,4*\h);
			\draw[dashed] (2*\w/3,-0.1*\h) node[below]{$\frac{2}{3}m$} -- (2*\w/3,4*\h);
			
			\draw[fill=gray,fill opacity=0.3] (0.1*\w,0.0*\h) rectangle (\w,0.2*\h); 
			\draw[fill=gray,fill opacity=0.3] (0.2*\w,0.2*\h) rectangle (\w,0.5*\h); 
			\draw[fill=gray,fill opacity=0.3] (0.3*\w,0.5*\h) rectangle (\w,0.7*\h); 
			\draw[fill=gray,fill opacity=0.3] (0.35*\w,0.7*\h) rectangle (\w,1.0*\h);
			\draw[fill=gray,fill opacity=0.3] (0.38*\w,1.0*\h) rectangle (\w,1.2*\h);
			\draw[fill=gray,fill opacity=0.3] (0.42*\w,1.2*\h) rectangle (\w,1.4*\h);
			\draw[fill=gray,fill opacity=0.3] (0.46*\w,1.4*\h) rectangle (\w,1.6*\h);
			\draw[fill=gray,fill opacity=0.3] (0.48*\w,1.6*\h) rectangle (\w,1.7*\h);
			\draw[dashed] (\w,1.7*\h) --(1.1*\w,1.7*\h) node[right]{$\tau'$};
			
			\draw[dashed] (-0.1*\w,0.7*\h) node[left]{$\tau$} -- (\w,0.7*\h);
			\draw[fill=gray,fill opacity=0.3] (0*\w,0.7*\h) rectangle (0.34*\w,0.9*\h);
			\draw[fill=gray,fill opacity=0.3] (0*\w,1.2*\h) rectangle (0.39*\w,1.5*\h);
			\draw[fill=gray,fill opacity=0.3] (0*\w,1.5*\h) rectangle (0.46*\w,1.8*\h);
			
			\draw[fill=gray,fill opacity=0.3] (0.5*\w,1.7*\h) rectangle (\w,2.0*\h);
			\draw[fill=gray,fill opacity=0.3] (0.52*\w,2.0*\h) rectangle (\w,2.3*\h);
			\draw[fill=gray,fill opacity=0.3] (0.54*\w,2.3*\h) rectangle (\w,2.5*\h);
			\draw[fill=gray,fill opacity=0.3] (0.56*\w,2.5*\h) rectangle (\w,2.7*\h);
			\draw[fill=gray,fill opacity=0.3] (0.58*\w,2.7*\h) rectangle (\w,3.0*\h);
			\draw[fill=gray,fill opacity=0.3] (0.60*\w,3.0*\h) rectangle (\w,3.3*\h);
			\draw[dashed](-0.1*\w,3.3*\h) node[left]{$T$} -- (1.1*\w,3.3*\h);
			
			\draw[fill=gray,fill opacity=0.3] (0*\w,1.8*\h) rectangle (0.49*\w,2.1*\h);
			\draw[fill=gray,fill opacity=0.3] (0*\w,2.1*\h) rectangle (0.47*\w,2.4*\h);
			\draw[fill=gray,fill opacity=0.3] (0*\w,2.4*\h) rectangle (0.45*\w,2.6*\h);
			\draw[fill=gray,fill opacity=0.3] (0*\w,2.6*\h) rectangle (0.43*\w,2.9*\h);
			\draw[fill=gray,fill opacity=0.3] (0*\w,2.9*\h) rectangle (0.41*\w,3.2*\h);
			\draw[fill=gray,fill opacity=0.3] (0*\w,3.2*\h) rectangle (0.39*\w,3.5*\h);
			
			
			\draw[|-|] (-0.05*\w,  0.0*\h) -- (-0.05*\w,0.7*\h);
			\draw[|-|] (-0.05*\w, 0.7*\h) -- (-0.05*\w,0.9*\h);
			\draw[|-|] (-0.05*\w,  0.9*\h) -- (-0.05*\w,1.2*\h);
			\draw[|-|] (-0.05*\w, 1.2*\h) --  (-0.05*\w,3.3*\h);
			
			\coordinate (A) at (-0.4*\w, 0.4*\h);
			\node[left] at (A) {a};
			\coordinate(B) at (-0.4*\w, 2.0*\h);
			\node[left] at (B) {b};
			
			\coordinate (C) at (-0.05*\w, 0.5*0.0*\h + 0.5*0.7*\h);
			\coordinate (D) at (-0.05*\w, 0.5*0.7*\h + 0.5*0.9*\h);
			\coordinate (E) at (-0.05*\w, 0.5*0.9*\h + 0.5*1.2*\h);
			\coordinate (F) at (-0.05*\w, 0.5*1.2*\h + 0.5*3.3*\h);
			\draw[gray] (A) -- (C);
			\draw[gray] (A) -- (E);
			\draw[gray] (B) -- (D);
			\draw[gray] (B) -- (F);
			\end{tikzpicture}
		\end{center}
		\caption{A placement of the jobs with processing time larger than $m/3$.}
		\label{fig:HeuristicalAlgorithm}
	\end{figure}
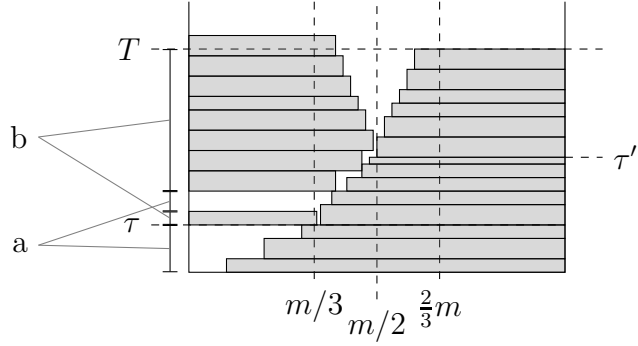
	
	We claim that $T \leq \OPT$. 
	If $T = \tau'$, this is obvious since we never can schedule jobs with machine requirement larger than $m/2$ at the same time. 
	Consider the case that $T = T'$. 
	Between $\tau'$ and $T$ there will be scheduled two jobs at each point in the processing time. 
	Hence if $T$ is larger than $\OPT$ there has to be a schedule, where one more of the jobs between this points of time is scheduled below $\tau'$. 
	But since each job is scheduled as early as possible, there can be no such job, which proves the claim.
	
	In the next step, we are going to schedule the residual jobs, which have a machine requirement of at most $m/3$. 
	In order to schedule these jobs, we might reconstruct the schedule generated so far.
	This reconstruction is necessary, if the schedule generated so far has a to large amount of idle time on the machines. 
	As a result of this large amount of idle time, we cannot guarantee a small approximation ratio, when scheduling the residual jobs. 
	Furthermore, note that if there are no jobs with machine requirement at most $m/3$, we do not need to add further steps and have found a schedule with approximation guarantee $\OPT + p_{\max}$.
	
	Let $a$ be the total processing time before $T$, where only one job is scheduled. This job has to be a job with machine requirement larger than $m/2$. 
	Let $b$ be the total processing time, where just two jobs are scheduled. 
	We will now consider two cases: $a>b$ and $a \leq b$.
	In the first case, we have to reconstruct the schedule found so far, while in the second case this is not necessary.
	
	We can summarize the second part of the algorithm, where we schedule the jobs with machine requirement at most $m/3$, as follows:
	\begin{enumerate}[resume]
			\item Find $a$ and $b$
		\item If $a > b$,
		dismantle the schedule and stack all the jobs with machine requirement larger than $m/3$ on top of each other, sorted by machine requirement such that the widest one starts at $0$.
		Schedule the residual jobs with the modified List-Schedule starting at $0$ and using the endpoints of all jobs.
		\item Else if $a \leq b$, determine $\tau''$ and use the optimized List-Schedule to schedule the remaining starting at $\tau''$ while using the endpoints of all scheduled jobs.
	\end{enumerate}
	
	In the following, we will argue that the second part of the described algorithm delivers a schedule with approximation guarantee $(3/2)\OPT + p_{\max.}$
	\begin{caseList}
	\item[$a>b$.]
	In this case, the algorithm performs the following steps: 
	We stack all the jobs with machine requirement larger than $m/3$ on top of each other sorted by decreasing number of required machines. 
	This stack has a height of at most $a+2b+p_{\max}$ and the last job of this stack is starting before or at $a+2b$.
	For the remaining jobs, i.e., the jobs with machine requirement at most $m/3$, we us the the improved List-Schedule algorithm as described in Turek et al. \cite{TurekWY92}.
	This means, we go through the schedule from the bottom to the top and look for each end point of jobs $t$, starting with $t = 0$, at the number of idle machines $\free(t)$. 
	We search for the widest unscheduled job with $\mN{j} \leq \free(t)$ and start it at this time, if one exists, and calculate the new number of idle machines at this point in time. 
	If no such job exists, we go to the next end point of a job since the number of idle machines only changes at these points.
	
	We claim that this schedule has a makespan of at most $(3/2)\OPT +p_{\max}$.
	Let $\rho$ be the last starting point of a job in this schedule. 
	If this point is larger than $a+2b$, the last scheduled job has a machine requirement of at most $m/3$. 
	By construction of the schedule, this job could not be scheduled at any earlier time. 
	Hence at each time in the schedule before $\rho$, we use at least $(2/3)m$ machines and therefore $\rho (2/3)m \leq W(\jobs)$. 
	Furthermore, we know that $\OPT \geq W(\jobs)/m$. 
	As a consequence it holds that $\rho \leq (3/2)\OPT$. 
	Since $\rho$ is the last starting position of all jobs, the makespan of the schedule is bounded by $(3/2)\OPT +p_{\max}$.
	
	On the other hand if $\rho \leq a+2b$, the last starting job can be a job with machine requirement larger than $m/3$. 
	However, the schedule is then bounded by $a+2b+p_{\max}$.
	Since $a>b$ and $a+b \leq \OPT$ it holds that $b \leq \OPT/2$ and therefore $a+2b+p_{\max} \leq (3/2)\OPT + p_{\max}$.
	
	\item[$a\leq b$.]
	We now consider the case that $a \leq b$. 
	In this scenario, we do not dismantle the given schedule as we do in Case 1. 
	Instead, we use the improved List-Schedule algorithm as described in Turek et al. \cite{TurekWY92} to schedule the remaining jobs.
	To prove that the resulting algorithm has an approximation guarantee of $(3/2)\OPT + p_{\max}$, we analyze the total idle time up to the point $T$ before we schedule the residual jobs.
	
	Let $t_a$ be an arbitrary point in time before $T$ where only one job is scheduled and let $t_b$ be an arbitrary point in time where two jobs are scheduled. 
	Note that $\free(t_b) < m/3$ since both jobs scheduled at this time have a machine requirement of at least $m/3$.
	We differentiate two cases $t_b < \tau'$ and $t_b \geq \tau'$ and claim that in both cases the sum of numbers of idle machines at $t_a$ and $t_b$ is bounded by $\frac{2}{3}m$. 
	As a a consequence of this claim, the average number of idle machines at all of these pairs of points is bounded by $m/3$ and hence, the total idle time up tp the point $T$ is bounded by $m/3 \cdot (a+b)\leq T m/3$ because $a \leq b$ and at each point $t_b$ the idle time is bounded by $m/3$.
	
	\begin{caseList}
	\item[$t_b < \tau'$.]
	In the case that $t_b < \tau'$, the number of idle machines $\free(t_b)$ is bounded by $m/6$ since there is scheduled one job with machine requirement at least $m/2$ and one job with machine requirement at least $m/3$. 
	On the other hand, $\free(t_a)$ is bounded by $m/2$. 
	Therefore, the sum of free machines on both points is bounded by $\frac{2}{3}m$ and hence the average is bounded by $m/3$.
	
	\item[$t_b \geq \tau'$.]
	If $t_b \geq \tau'$, there are two jobs with machine requirement at least $m/3$ scheduled at this point in time and hence $\free(t_b) < m/3$. 
	Remember that $t_a < \tau'$ since at each point in time after the point $\tau'$ up to the point $T$ there will be two jobs scheduled.
	Therefore, $t_a < t_b$ and the jobs scheduled at $t_b$ did not fit at the time $t_a$ since otherwise they would have been scheduled there. 
	As a consequence, it holds that $\free(t_a) \leq (m-\free(t_b))/2$ because the job with the smaller machine requirement scheduled at $t_b$ has a machine requirement of at most $(m-\free(t_b))/2$. 
	Hence it holds that $\free(t_a)+\free(t_b)\leq m/2+\free(t_b)/2$.
	Since $\free(t_b) \leq m/3$, we have $\free(t_a)+\free(t_b)\leq \frac{2}{3}m$.
	\end{caseList}
	
	In conclusion, we have $\free(t_a)+\free(t_b)\leq \frac{2}{3}m$ in both cases $t_b < \tau'$ and $t_b \geq \tau'$. 
	Hence the average number of idle machines for each pair of two points $t_a$ and $t_b$ is bounded by $m/3$.
	Since $a \leq b$ and at each point $t_b$, where two jobs are scheduled, there are at most $m/3$ machines idle, the total idle time below $T$ is bounded by $T m/3$. 
	The residual jobs are scheduled by the best fit algorithm in \cite{TurekWY92}.
	Let $\tau''$ be the first point in time where both jobs scheduled at $\tau'$ have ended.
	Note that after this point in time the number of idle machines is monotonically increasing per time step. 
	Hence, we can use the improved List-Schedule algorithm without constructing any machine conflicts.
	
	To analyze the approximation ratio after adding the residual jobs, let $\rho$ be the last point in the schedule where a job is started. 
	If this job has a width of at most $m/3$ at every time before $\rho$ and after $T$ the number of idle machines is at most $m/3$ since otherwise this job would have been started earlier. 
	If this job has a machine requirement larger than $m/3$ it has been started before $T$. 
	In both cases the total idle time up to  $\rho$ is bounded by $\rho m/3$. 
	As a consequence, we have $\rho \leq 3/2 \cdot \OPT$ since all jobs start before $\rho$ and $m\OPT \geq \sum_{j \in \jobs} p_jq_j\geq (2/3)\rho m$. 
	Therefore, the schedule has a makespan of at most $(2/3)\OPT +p_{\max}$.
	\end{caseList}
	
	We have proven that in both cases $a>b$ and $a\leq b$ the described algorithm produces a schedule with makespan at most $(2/3)\OPT +p_{\max}$.
	This algorithm has a running time of the form $\Oh(n \log(n))$: The sorting of the items is possible in $\Oh(n \log(n))$; each of the values $a$, $b$ and $\tau''$ can be found in $\Oh(n)$; and last the optimized List-Schedule can be implemented to be in $\Oh(n \log(n))$ by organizing the relevant points in time as well as the set of items inside a search tree.
	
	Last, we describe how to partition this schedule into the schedule on the two clusters $C_1$ and $C_2$ as needed for the algorithm in Lemma \ref{lma:partitioningLemma}.
	Note that in all the described cases the additional $p_{\max}$ is added by the last started job. 
	To partition this schedule such that it is scheduled on the two clusters $C_1$ and $C_2$, we look at the starting time $\rho$ of the last started job. 
	We remove this last started job and all the jobs which end strictly after $\rho$ and place them into the second cluster $C_2$ and leave the rest untouched to be the schedule for $C_1$. 
	As we noted before the schedule up to $\rho$ has a height of at most $(3/2)\OPT$.
	Furthermore, since the last job starts at $\rho$, all the removed jobs have a total machine requirement of at most $m$, and, hence, we can start them all at the same time. 
	The resulting schedule on $C_2$ has a height of at most $p_{\max}$.
\end{proof}


In the next step, we present the technique to divide this schedule on $C_1$ and $C_2$ to $N$ clusters and prove Theorem \ref{thm:fastApproximationMSCP} in this way. 
The used technique is similar to the technique in Section \ref{sec:Partitioning}.
However, it is no longer possible to partition the schedule into sections of height at most $2\OPT$.

\subsection{Proof of Theorem \ref{thm:fastApproximationMSCP}}
In this section, we prove Theorem \ref{thm:fastApproximationMSCP}.
We start with the schedule given by the $(3/2)\OPT +p_{\max}$ algorithm from Lemma~\ref{lma:fastPTS} and its partition onto the two clusters $C_1$ and $C_2$.
To partition the schedule on $C_1$ onto the different clusters, we differentiate the three cases $N = 3i$, $N = 3i+1$ and $N=3i+2$. 

\begin{caseList}
\item[$N = 3i$]
In this case, the schedule on $C_1$ has a height of $T \leq (9i/2)\OPT$.
We partition it into $2i$ parts of equal height $T/(2i) \leq (9/4)\OPT$. 
During this partition step, we cut the schedule $2i-1$ times.
The jobs intersected by this cut have to be scheduled separately using height $p_{max}$. 
Together with the jobs in $C_2$, we have $2i$ sets of jobs with height bounded by $p_{max}$ and machine requirement bounded by $m$.
We schedule these sets pairwise in $i$ additional clusters analogously to the clusters of type B in Section \ref{sec:Partitioning}. 
In total, we use $3i = N$ Clusters and the largest one a has height of at most $(9/4) \OPT = 2.25\OPT$.

\item[$N = 3i +1$]
In this case, the schedule on $C_1$ has a height of $T \leq (3(3i+1)/2)\OPT = ((9i+3)/2)\OPT$.
We partition the schedule into $2i$ parts of equal height and one part with a smaller height. 
On this part, we schedule the jobs from $C_2$ as well.
Let $\Ts := (2/(9i+3))T \leq \OPT$.
The $2i$ parts of equal height have a size of $((9i+5)/(4i+2))\Ts$ and the last part has a height of $((5i+3)/(4i+2))\Ts$. 
It is easy to verify the $2i \cdot ((9i+5)/(4i+2))\Ts + ((5i+3)/(4i+2)) \Ts = ((9i+3)/2) \Ts = T$ and hence we have partitioned the complete schedule on $C_1$. 
By partitioning the schedule on $C_1$ into these parts, we have cut the schedule $2i$ times. 
Therefore, together with the jobs on $C_2$, we have to schedule $2i+1$ parts of height $p_{\max}$. 
We schedule $C_2$ on the cluster with current makespan $((5i+3)/(4i+2))\Ts$ resulting in a schedule of height $((5i+3)/(4i+2))\Ts +p_{\max} \leq ((9i+5)/(4i+2))\OPT$, (since $p_{\max} \leq \OPT$).
We pair the other $2i$ parts and schedule them on $i$ distinct clusters. 
In total, we generate $2i+1+i = 3i+1$ cluster and the largest occurring makespan is bounded by $((9i+5)/(4i+2))\OPT$.

\item[$N = 3i +2$]
In this case, the schedule on $C_1$ has a height of $T \leq (3(3i+2)/2)\OPT = ((9i+6)/2)\OPT$.
Again, we partition this schedule into $2i+1$ parts of equal height and one part with a smaller height. 
On top of this part, we will schedule two parts with processing time $p_{\max}$.
Let $\Ts := (2/(9i+6))T \leq \OPT$.
The first $2i+1$ parts of $C_1$ have a height of $((9i+10)/(4i+4))\Ts$ and the last part has a height of at most $((i+2)/(4i+4))\Ts$. 
It is easy to verify that $(2i+1)((9i+10)/(4i+4))\Ts + ((i + 2)/(4i+4))\Ts = ((9i+6)/2)\Ts = T$ and, hence, we have scheduled all parts of $C_1$. 
Since $((i+2)/(4i+4))\Ts +2p_{\max} \leq ((9i+10)/(4i+4))\OPT$, we can schedule two parts with processing time at most $p_{\max}$ on this cluster. 
We have cut the schedule on $C_1$ exactly $2i+1$ times. 
Together with the jobs from $C_2$, we have $2i+2$ parts with processing time at most $p_{\max}$ we have to schedule inside the other clusters. 
Since we already have scheduled two of these parts, we pair the residual $2i$ parts and generate $i$ new clusters with makespan at most $2p_{\max} \leq 2\OPT$.
In total, we generated $2i+2 +i= 3i+2$ clusters and the largest makespan occurring on the clusters is bounded by $((9i+10)/(4i+4))\OPT$.
\end{caseList}

For each of the three cases $N = 3i$, $N = 3i+1$, and $N = 3i+2$, we have presented a partitioning strategy which partitions the schedule from clusters $C_1$ and $C_2$ onto $N$ clusters such that each cluster has a makespan of at most $(9/4) \OPT$, $((9i+5)/(4i+2))\OPT$ or $((9i+10)/(4i+4))\OPT$ respectively. 
Hence, we have proven Theorem \ref{thm:fastApproximationMSCP}. 

\section{An AEPTAS for \acl{SP}}
\label{sec:StripPacking}

In this section, we present an $(1+\eps)\OPT +h_{\max}$ algorithm for \acf{SP} with running time $\mathcal{O}(n/\eps) + \mathcal{O}_{\eps}(1)$ proving Theorem \ref{thm:SP}.
It is inspired by the algorithm in \cite{JansenS09}. 
However, we made some improvements to guarantee an efficient running time.
In the description of this algorithm, we will assume that $1/\eps \in \NN$. 
 
This algorithm combined with the techniques from Section \ref{sec:Partitioning} delivers a 2-approximation algorithm for \acf{MSP}. 
To prove Theorem \ref{thm:MSP}, we need to place some of the jobs on another strip named $C_2$, which has a width of at most $\gamma W$. 
We have either $\gamma = 1$ for the case $N \geq 3$ or $\gamma = 1/8$ for the case that $N = 2$.
In the following description of the algorithm, we proof that the total width of the items placed on top of the packing can be bounded by $\gamma W$, and hence it is possible to place them inside the extra cluster instead of at the top of the packing.
When interested solely in the AEPTAS the value $\gamma$ can be set to $1$.

To find the algorithm for Theorem~\ref{thm:AEPTASMCS} for the \ac{MSP} case, we call the algorithm from this section with $\gamma = 1$ and cut the resulting schedule with makespan $T \leq (1+\eps)N\OPT +h_{\max}$ into parts $N$ of height $(T - h_{\max})/N$, such that the items overlapping the cut stick together with the part below the cut.
As a result each part has a height of at most $(T - h_{\max})/N + h_{\max} \leq (1+\eps)\OPT + h_{\max}$.

\subsection{Simplify}

Similar as in Section \ref{sec:scheduling}, we start with defining an upper and a lower bound for the approximation ratio. 
Let $\areaI{\items} := \sum_{i \in \items}\iW{i}\iH{i}$ be the total area of all the items and let $h_{\max}$ be the largest occurring height in $\items$. 
By Steinberg \cite{Steinberg97}, we now that $\max\{\areaI{\items},h_{\max}\} \leq \OPT \leq 2\max\{\areaI{\items},h_{\max}\}$ and we define $T := \max\{\areaI{\items},h_{\max}\}$.

In the first step, we partition the items by their size. 
Other than in the algorithm for \acf{PTS}, we need a gap between wide and narrow items as well. 
Hence, we partition the items into 
large $\itemsL :=\{i \in \items| \iH{i} \geq \delta T, \iW{i}\geq \delta W\}$, 
vertical $\itemsV :=\{i \in \items| \iH{i} \geq \delta T, \iW{i} \leq \mu W\}$, 
horizontal $\itemsH :=\{i \in \items| \iH{i} \leq \mu T, \iW{i} \geq \delta W\}$, 
small $\itemsS :=\{i \in \items| \iH{i} \leq \mu T, \iW{i} \leq \mu W\}$ 
and medium sized items $\itemsM :=\items \setminus (\itemsL \cup \itemsV \cup \itemsH \cup \itemsS)$ for some $\delta,\mu \leq \eps$, see Figure \ref{fig:partitoningOfTheItems}.

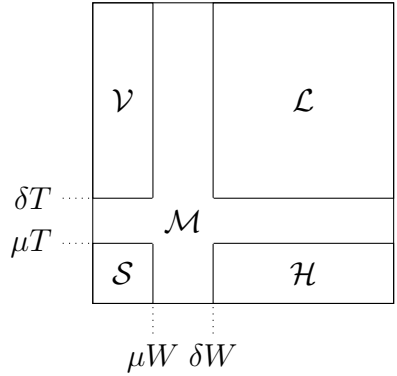
\begin{figure}[ht]
	\centering
	\begin{tikzpicture}
	\pgfmathsetmacro{\w}{4}
	\pgfmathsetmacro{\h}{4}
	
	\pgfmathsetmacro{\dW}{0.4}
	\pgfmathsetmacro{\mW}{0.2}
	
	\pgfmathsetmacro{\dT}{0.35}
	\pgfmathsetmacro{\mT}{0.2}
	
	\draw (0,0) rectangle (\w,\h);
	\draw[dotted] (\dW*\w,-0.1*\h) node[below]{$\delta W$} -- (\dW*\w,0*\h); 
	\draw[dotted] (\mW*\w,-0.1*\h) node[below]{$\mu W$} -- (\mW*\w,0*\h); 
	
	\draw[dotted] (-0.1*\w,\dT*\h) node[left]{$\delta T$} -- (0*\w,\dT*\h); 
	\draw[dotted] (-0.1*\w,\mT*\h) node[left]{$\mu T$} -- (0*\w,\mT*\h); 
	
	\draw (0, \dT *\h) rectangle node[midway] {$\itemsV$}(\mW *\w,1 *\h);
	\draw (\dW*\w, \dT *\h) rectangle node[midway] {$\itemsL$}(1 *\w,1 *\h);
	
	\draw (0*\w, 0 *\h) rectangle node[midway] {$\itemsS$}(\mW *\w,\mT *\h);
	
	\draw (\dW *\w,0 *\h) rectangle node[midway] {$\itemsH$}(1 *\w,\mT *\h);
	\draw[white] (\mW *\w,\mT *\h) rectangle node[midway,black] {$\itemsM$}(\dW *\w,\dT *\h);
	\end{tikzpicture}
	\caption{This Figure shows the partition of the items. Each item $i \in \items$ can be represented by a point in this two-dimensional plane. The x-coordinate of the point corresponds to the width of the item, while the y-coordinate corresponds to the height of the item.}
	\label{fig:partitoningOfTheItems}
\end{figure}

We will discard the medium sized items and place them at the end of the packing. 
To make this possible the total area of the medium sized items has to be small. 
In the next lemma, we show that we can find values for $\delta$ and $\mu$ which guarantee this property.

\begin{lemma}
	\label{lma:StripPackingGapMediumitems}
	Consider the sequence $\sigma_0 := \eps $, $\sigma_{i+1} = \sigma_{i+1}^3\eps^5\gamma/x$. 
	There exists an $j \in \{0,\dots,2/(\eps\gamma)-1\}$ such that when defining $\delta = \sigma_j$ and $\mu = \sigma_{j+1}$ the total area of the medium sized items $\items_M$ is bounded by $\gamma \eps W T$.
\end{lemma}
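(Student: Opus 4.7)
The plan is to apply a simple pigeonhole/averaging argument over the $K := 2/(\eps\gamma)$ candidate pairs $(\delta,\mu) = (\sigma_j, \sigma_{j+1})$, combined with a double-counting bound on how many of these pairs any single item can be ``medium'' for. (The precise shrinking factor in the recurrence is irrelevant for the argument beyond ensuring that the sequence is strictly decreasing; one may treat the recurrence simply as giving $\sigma_0 > \sigma_1 > \cdots > \sigma_K > 0$.)

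First, I would unpack the definition of $\itemsM$ into a usable characterization. By directly checking each of the four classes $\itemsL, \itemsV, \itemsH, \itemsS$ on the nine regions of the $(\iW{i},\iH{i})$-plane delimited by the thresholds $\mu W, \delta W, \mu T, \delta T$, one sees that an item $i$ lies in $\itemsM$ if and only if either $\iH{i} \in (\mu T,\delta T)$ or $\iW{i} \in (\mu W,\delta W)$. In other words, $\itemsM$ is exactly the ``L-shaped'' band of items whose height or width falls in one of the two gaps.

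Second, for each $j \in \{0,\dots,K-1\}$ let $\itemsM^{(j)}$ denote the medium set for the pair $(\sigma_j,\sigma_{j+1})$. Since the sequence $\sigma_0>\sigma_1>\cdots>\sigma_K$ is strictly decreasing, the height gaps $(\sigma_{j+1}T,\sigma_j T)$ are pairwise disjoint as $j$ varies, and likewise for the width gaps $(\sigma_{j+1}W,\sigma_j W)$. Therefore a fixed item $i$ can satisfy the height condition for at most one index $j$ and the width condition for at most one index $j$, so $i$ lies in at most two of the sets $\itemsM^{(j)}$.

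Third, double counting and the standard lower bound on $\OPT$ (so that $T \ge \area(\items)/W$) give
\[
\sum_{j=0}^{K-1}\area\!\left(\itemsM^{(j)}\right) \;=\; \sum_{i\in\items} \iW{i}\iH{i}\cdot\bigl|\{j : i\in \itemsM^{(j)}\}\bigr| \;\le\; 2\,\area(\items) \;\le\; 2WT.
\]
By averaging, some index $j^\star$ satisfies $\area(\itemsM^{(j^\star)}) \le 2WT/K = \gamma\eps\,WT$, and setting $\delta := \sigma_{j^\star}$, $\mu := \sigma_{j^\star+1}$ completes the proof. The only real obstacle is the case analysis giving the ``L-shape'' characterization of $\itemsM$; the factor-of-two overcounting, which corresponds exactly to the two independent dimensions of each item, is the reason $K = 2/(\eps\gamma)$ suffices rather than $K = 1/(\eps\gamma)$ as in the one-dimensional analogue for \ac{PTS}.
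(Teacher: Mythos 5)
Your proof is correct and takes essentially the same route as the paper: the same double-counting/pigeonhole observation that each item can be medium for at most two of the $2/(\eps\gamma)$ candidate pairs (once because its width lies in a width gap, once because its height lies in a height gap), combined with $\area{\items} \leq W\cdot T$. Your explicit L-shape characterization of $\items_M$ and the disjointness of the gaps merely spell out what the paper's proof leaves implicit.
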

\begin{proof}
	This Lemma follows by a direct application of the pigeon hole principle. 
	Let $\itemsM_{j}$ be the set of medium sized items when defining $\delta = \sigma_j$ and $\mu = \sigma_{j+1}$.
	Each item $i \in \items$ can appear in at most two of these sets, in the first because its width is between $\mu W$ and $\delta W$ and in the second, because its height is between $\mu T$ and $\delta T$.
	Assume that all the sets $\itemsM_{j}$ have an area $\areaI{\itemsM_{j}} > \gamma \eps \cdot W\cdot T$.
	As a consequence the total area of all these sets is at least $\sum_{j = 0}^{2/(\eps\gamma)-1}\areaI{\itemsM_{j}} > 2 \cdot W\cdot T$, a contradiction since the total area of all the items is bounded by $W\cdot T$.
\end{proof}
Furthermore, it holds that $\delta \geq (\eps\gamma/x)^{3^{\mathcal{O}(1/(\eps\gamma))}}$. 
We define $\delta ':= \eps^k$ as the maximum number such that $\eps \delta' \geq \delta \geq \delta'$.
Note that $k \in 3^{\mathcal{O}(1/(\eps\gamma))}$ and use $\delta'$ for the partitioning of the items.
As a consequence, the the area of the medium items is still at most $\gamma \eps W T$, but the distance between $\delta'$ and $\mu$ is reduced, i.e. we have $\mu = \delta^3\eps^5 \gamma/x \leq (\delta'/\eps)^3\eps^5 \gamma/x = \delta'^3\eps^2 \gamma/x$. 
However, for simplicity of notation, we will write $\delta$ instead of $\delta'$ in the following and use $\mu \leq \delta\eps^2 \gamma/x$ respectively.

In the second step, we round the heights of the items. 
By increasing the packing height by at most $\eps T$, we can round the heights of the items to multiples of $\eps T/n$, because adding $\eps T/n$ to each processing time lengthens the packing by at most $n \cdot \eps T/n$.
Hence after this rounding step, we have $T \leq \OPT \leq (2+\eps)T$.
Since each item has a height of at most $T$, there are at most $n/\eps$ different item sizes, and hence, sorting them by height can be done in $\mathcal{O}(n/\eps)$ using Bucket-Sort.
Furthermore, the largest $l$ such that $p_j \in \{\eps^l T,\eps^{l-1}T\}$ is bounded by $\mathcal{O}(\log(n))$.

In the next step, we scale the instance with $n/\eps T$. 
As a result all the items have a height that is one of the integral values $\{1,2,\dots, n/\eps\}$ and the optimal packing height for this scaled instance is one of the integral values $\{n/\eps,n/\eps +1, \dots,2n/\eps +n\}$, because for the rounded instance it holds that $T \leq \OPT \leq 2T + \eps T$ and the optimal packing height has to be integral since all the item heights are integral.
We scale $T$ accordingly such that $T = n/\eps$.
In the algorithm, we will do a binary search over the packing heights. 

In the next step, we use the same geometric rounding as above to round the heights of the items to fewer different sizes using Lemma \ref{lma:roundingProcessingTimes} and loose a factor of at most $(1+2\eps)$ in the approximation ratio with regard to the scaled instance.
Now the items have at most $\mathcal{O}(\min\{n/\eps,\log(n)/\eps^2\})$ possible different sizes and, without any further loss, we can assume that all large and vertical items start at multiples of $\eps\delta'T$. 
We call the area between two consecutive multiples of $\eps\delta'T$ a layer and number them starting at zero.
To ensue the integrity of the item heights, we scale the instance with $1/(\eps \delta)$ before the rounding step and scale $T$ accordingly such that $T = n/(\eps^2\delta)$.
Note that $1/(\eps\delta) \in \NN$ since $1/\eps \in \NN.$
To this point, we know that with out all the scaling steps it holds that $T \leq \OPT \leq (1+2\eps)(2+\eps)T$.
Hence the number of layers $L$ in an optimal solution is at least $1/(\eps\delta)$ and at most $(1+2\eps)(2+\eps)/(\eps\delta) \leq 5/(\eps\delta)$ for $\eps \leq 1/2$.

In the next step, we remove all small and medium sized items from the optimal packing and use a Lemma from \cite{JansenRau17} which states that we can partition any optimal packing into a constant number of sub areas, such that each subarea contains just one type of item.

\begin{lemma}[See \cite{JansenR16}]
\label{lma:roughPartition}
We can partition the area $W \times (1 +2\eps)\OPT$ into $ \mathcal{O}(1/(\eps\delta'^2))$ rectangular areas called boxes. 
\begin{itemize}
\item Each large item $i\in \itemsL$ is contained in its personal box of height $\iH{i}$ and width $\iW{i}$.
\item There are at most $\Oh(1/(\eps\delta^2))$ many boxes containing horizontal items $i \in \itemsH$. 
Each of them has a height of $\eps\delta' T$ and a width larger than $\delta' W$.
\item There are at most $\mathcal{O}(1/(\eps\delta^2))$ many boxes containing vertical items $i \in \itemsV$.
\item No item in $\itemsH$ is intersected vertically by any box border, but can be intersected horizontally
\item No item in $\itemsV$ is intersected horizontally by any box border, but can be intersected vertically.
\item Each boxes lower and upper borders are at multiples of $1/(\eps\delta)\OPT$
\end{itemize}
\end{lemma}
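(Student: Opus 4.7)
The plan is to start from any fixed optimal packing of height $\OPT$, throw away the small and medium items, and transform the remaining packing (large, horizontal, vertical items only) into a structured packing of height at most $(1+2\eps)\OPT$ that admits the claimed box decomposition. The $(1+2\eps)$ slack is exactly what lets me shift items to align with a coarse grid.

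\emph{Aligning to a grid.} First I would round the $y$-coordinate of every large and every vertical item up to the nearest multiple of $\eps\delta' T$. To create room for these shifts I insert an empty horizontal strip of height $\eps\delta' T$ at each multiple of $\OPT/(\eps\delta')$ in $[0,\OPT]$; there are $O(1/(\eps\delta'))$ such shifts needed but they can be charged to the $2\eps\OPT$ slack. After this preprocessing the strip is divided by horizontal lines at the multiples of $\eps\delta' T$ into $O(1/(\eps\delta'))$ horizontal \emph{slabs}, and no large or vertical item is cut horizontally by a slab boundary. Since horizontal items have height at most $\mu T \ll \eps\delta' T$, each of them lies inside a unique slab (possibly after a second negligible shift).

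\emph{Boxes for large items.} Each large item has area at least $\delta'^2 WT$, so an area argument against the total area $WT$ yields at most $O(1/\delta'^2)$ large items; each is put in its own personal box, contributing $O(1/\delta'^2) \subseteq O(1/(\eps\delta'^2))$ boxes.

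\emph{Boxes for horizontal and vertical items.} Within a single slab, the large items and the vertical items crossing the slab partition the horizontal interval $[0,W]$ into at most $O(1/\delta')$ maximal free sub-intervals (since each large item and each maximal group of vertical items uses width $\geq \delta' W$, apart from the thin vertical strips which I merge). Each free sub-interval of width $\geq \delta' W$ inside a slab becomes one horizontal-item box of height $\eps\delta' T$; by construction, horizontal items (which have width $\geq \delta' W$ and height $\leq \mu T \leq \eps\delta' T$) fit inside such boxes and no such box cuts a horizontal item vertically. Vertical items, which are narrow ($w \leq \mu W$) and span slab boundaries, I would group using the shifting technique of Kenyon--Remila: coalesce adjacent vertical columns that line up across consecutive slabs into a single vertical box, using the slack from the first step to justify any small shifts. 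Summing over the $O(1/(\eps\delta'))$ slabs, both the number of horizontal boxes and the number of vertical boxes come out to $O(1/(\eps\delta'^2))$.

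\emph{Main obstacle.} The delicate part is reconciling two opposing requirements: keeping the total number of boxes at $O(1/(\eps\delta'^2))$, while also ensuring that no horizontal item is cut by a vertical border and no vertical item by a horizontal border. Being stingy with boxes pushes me toward merging aggressively, whereas preserving item integrity forces me to cut around every item boundary. The standard resolution, which I would follow, is the stack-and-shift argument used in \cite{JansenR16}: push each horizontal item down to the bottom of its slab and each vertical item to the left of its maximal vertical strip; then the resulting axis-aligned decomposition has a bounded number of distinct rectangular pieces precisely because there are only $O(1/\delta'^2)$ large items and $O(1/(\eps\delta'))$ slabs. All shifts are absorbed by the $2\eps\OPT$ height slack introduced at the start, so the total height of the restructured packing stays within $(1+2\eps)\OPT$, completing the proof.
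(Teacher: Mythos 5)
First, a point of reference: the paper does not prove this lemma at all --- it is imported verbatim from \cite{JansenR16} (``See \cite{JansenR16}''), so there is no in-paper proof to compare against; your attempt is an independent reconstruction of the structural result. Judged on its own, the sketch follows the right general outline (discard small/medium items, align large and vertical items to a vertical grid of mesh $\eps\delta' T$, count large items by an area argument, count horizontal boxes by their minimum dimensions $\eps\delta'T\times\delta'W$), but it has two concrete gaps. First, your charging of the grid alignment is quantitatively wrong: inserting an empty horizontal strip of height $\eps\delta' T$ at each of the $\Oh(1/(\eps\delta'))$ grid lines adds total height on the order of $T\approx\OPT$, not $2\eps\OPT$. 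The correct argument (which is exactly Lemma~\ref{lma:roundingProcessingTimes}, already invoked in the paper before the box lemma is used) stretches the packing by a factor $(1+2\eps)$ and exploits that every item being aligned has height at least $\delta T$, so each item gains slack proportional to its own height; per-item rounding to multiples of $\eps\delta T$ then costs only $2\eps\OPT$ in total. As written, your shifts either do not suffice to reach the grid or blow the budget.

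Second, and more importantly, the per-slab counting for the boxes is not established. You claim each slab splits into $\Oh(1/\delta')$ maximal free intervals because ``each maximal group of vertical items uses width $\geq \delta' W$,'' but vertical items have width at most $\mu W$, so a slab can contain many isolated thin columns of vertical items; ``merging'' these thin strips into neighbouring horizontal boxes would either place items of $\itemsV$ inside boxes reserved for $\itemsH$ or cut horizontal items vertically, both of which the lemma forbids. This interleaving of narrow vertical columns with wide flat items is precisely the combinatorial core of the structural lemma --- the bound of $\Oh(1/(\eps\delta'^2))$ boxes for $\itemsV$ needs a construction that ties vertical-box boundaries to large-item edges and grid lines (and lets vertical items be cut vertically, as the statement allows), not a per-slab count. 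Your sketch defers exactly this step to ``the stack-and-shift argument used in \cite{JansenR16},'' so the hard part of the lemma is assumed rather than proved. (Minor: the ``second negligible shift'' to confine each horizontal item to one slab is unnecessary, since the lemma explicitly permits horizontal items to be cut horizontally by box borders, and shifting many thin items individually would itself need a charging argument.)
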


In the algorithm, we cannot try each of these partitions since then the width of the strip $W$ would appear linear in the running time.
Instead, we are interested in the relative positioning of the large items and the boxes for horizontal items.

\subsection{Boxes for horizontal rectangles}

The last simplification step is the rounding of widths of the horizontal items. 
We call the set of generated rounded items $\bar{\itemsH}$.

\begin{lemma}
	\label{lma:RoundingHorizontalItems}
We can round the width of the horizontal items to $\mathcal{O}(\log(1/\delta)/\eps)$ different sizes in at most $\mathcal{O}(n \log(1/\eps))$ operations. 
These rounded items can be placed fractionally instead of the horizontal items and an extra box of height $\eps T$.
\end{lemma}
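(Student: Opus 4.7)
The plan is to combine a geometric partition of widths with linear grouping inside each class, in the spirit of Kenyon--Rémila, and then to absorb the ``widest'' slab of each class into the extra box of height $\eps T$. First I partition the horizontal items into width classes
\[ \mathcal{W}_k := \bigl\{ i \in \itemsH \,\bigl|\, \iW{i} \in (W/2^{k+1},\, W/2^k] \bigr\}, \qquad k = 0, 1, \ldots, \lceil \log(1/\delta)\rceil,\]
so that since $\iW{i} \geq \delta W$ there are at most $\lceil \log(1/\delta)\rceil + 1 = \mathcal{O}(\log(1/\delta))$ non-empty classes. The class of every item can be computed in $\mathcal{O}(1)$ time, so the class assignment takes $\mathcal{O}(n)$ in total.

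Next, within each class $\mathcal{W}_k$ I do linear grouping. Conceptually, sort the items of $\mathcal{W}_k$ by decreasing width and stack them; partition the stack into $2/\eps$ horizontal slabs of equal total height $\eps H_k/2$, where $H_k := \sum_{i \in \mathcal{W}_k} \iH{i}$; round every item's width up to the maximum width in its slab. The widest slab of the class (the one containing the widest items) is removed and set aside for the extra box, while each of the remaining $2/\eps - 1$ slabs contributes one rounded width. Across all classes this produces at most $(\lceil \log(1/\delta)\rceil + 1)\cdot (2/\eps - 1) = \mathcal{O}(\log(1/\delta)/\eps)$ distinct rounded widths, as required. The fractional replacement in the original positions uses the standard shifting argument: for every slab $j < 2/\eps$, the rounded width is the width of the topmost item of slab $j$, which is at most every width in slab $j+1$; hence the slab-$j$ rounded items fit fractionally into the horizontal space that the slab-$j+1$ items used to occupy in the optimum packing, and only the widest slab is left over.

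The main thing to check is that the removed widest slabs fit, fractionally, into a single box of width $W$ and height $\eps T$. Let $A_k$ be the total area of $\mathcal{W}_k$. Since every item in $\mathcal{W}_k$ has width at least $W/2^{k+1}$, we have $H_k \leq 2^{k+1} A_k / W$, and the widest slab of class $k$ has total area at most $(\eps H_k/2)\cdot (W/2^k) \leq \eps A_k$. Summing over $k$ gives a total removed area of at most $\eps \sum_k A_k \leq \eps W T$, which matches the area of the extra box. Because the removed items have height at most $\mu T$ with $\mu \ll \eps$, they can be packed fractionally (cutting vertically whenever needed, and resuming on the next row) into the $W\times \eps T$ box: the total width used is exactly their total area divided by $W$, which is at most $\eps T$. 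This is the step I expect to be the most delicate to write up precisely, since one needs to argue that the fractional cuts do not conflict with the global LP structure of the packing.

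Finally, for the running time, I avoid an explicit sort per class by adapting the fast grouping scheme of Lemma~\ref{lma:fastRounding}: within each class, the $2/\eps$ slab-defining widths are found by a modified median-based recursion on widths (weighted by item height), costing $\mathcal{O}(|\mathcal{W}_k|\log(1/\eps))$ operations. Summing over the classes,
\[ \sum_k \mathcal{O}(|\mathcal{W}_k|\log(1/\eps)) = \mathcal{O}(n\log(1/\eps)),\]
since each item lies in exactly one class. Adding the $\mathcal{O}(n)$ for class assignment yields the claimed $\mathcal{O}(n\log(1/\eps))$ bound.
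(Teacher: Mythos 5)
Your proposal follows essentially the same route as the paper's proof: geometric partition of $\itemsH$ into $\mathcal{O}(\log(1/\delta))$ width classes by powers of two, linear grouping with $\mathcal{O}(1/\eps)$ groups inside each class, moving the widest group of each class into the extra box whose height is bounded via the width lower bound $W/2^{k+1}$ (placing $2^k$ full-width fractions side by side), and the median-based recursion of Lemma~\ref{lma:fastRounding} per class to get $\mathcal{O}(n\log(1/\eps))$ time. The only blemishes are cosmetic: ``total width used'' should read total height, and the extra-box bound is cleanest when stated via the per-class side-by-side placement (height $\leq \eps A_k/W$ per class) rather than a pure area argument, which is exactly the computation you already set up.
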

\begin{proof}
To round the width of the items, we use a similar technique as for rounding the machine requirements of the small wide jobs in Lemma \ref{lma:fastRounding} called geometric grouping. 
This technique was first introduced by \cite{VegaL81} as well.
The difference to linear grouping is an additional partitioning step prior to the steps of the linear grouping, as described below.

We first partition the set of horizontal items into the following $\mathcal{O}(\log(1/\delta))$ sets $\items_{H,i} := \sset{i \in \items_H}{ W/2^{i+1}<w_i\leq W/2^i}$. 
For each of these sets, we perform the steps of linear grouping with a customized adjustment to the height of the segments per set.
For these adjusted heights, we use the fact that it is possible to place at least $2^{i}$ and at most $2^{i+1}$ items from the set $\items_{H,i}$ next to each other into the strip.

For each $\items_{H,i}$, we stack the contained items in order of decreasing width and partition this stack into $1/\eps$ segments of size $\eps h(\items_{H,i})$, where $h(\items_{H,i})$ is the total height of the items in $\items_{H,i}$ using the original item heights. 
We define a new job for each segment which has height $\eps h(\items_{H,i})$ and width of the widest item intersecting this segment, see Figure \ref{fig:LinearGrouping}.
The widest item will be placed at the end of the schedule inside a new box. 
Since we are allowed to place this item fractionally and we can place at least $2^i$ of these fractions next to each other, we need at most $(\eps h(\items_{H,i}))/2^i = \eps h(\items_{H,i})/2^{i}$ additional height to place this item.

To place all the largest rounded items from each set $\items_{H,i}$, we introduce a new box for horizontal items.
We define the boxes height as $\sum_{i = 0}^{\mathcal{O}(1/\log(\delta))}\eps h(\items_{H,i})/2^{i}$.
For each $i \in \mathbb{N}$, the total width of $2^{i+1}$ items from the set $\items_{H,i}$ is larger than $W$
and hence $2 \sum_{i = 0}^{\mathcal{O}(\log(1/\delta))}h(\items_{H,i})/2^{i+1} \leq 2T$ since $ T \cdot W \leq 2 \area{\items}$. 
Therefore, the height of the introduced box is bounded by $\eps T$.
The total number of different item widths is bounded by $\mathcal{O}(\log(1/\delta)/\eps)$.

Regrading the running time, as seen above in the proof of Lemma \ref{lma:fastRounding}, the size defining items can be found in $\mathcal{O}(|\items_{H,i}|\log(1/\eps))$ for each set $\items_{H,i}$. 
Therefore, all the sizes can be found in $\mathcal{O}(\sum_{i = 0}^{\mathcal{O}(2^{x/\eps}\log(1/\eps))}|\items_{H,i}|\log(1/\eps))= \mathcal{O}(n \log(1/\eps))$.
\end{proof}

In the next step, we show that it is possible to reduce the number of widths for horizontal boxes to be constant depending on $\delta$. 
We do this in order to make it possible for the algorithm to guess their sizes in polynomial time.

\begin{lemma}
Given a partition of the optimal solution into boxes, we can reduce the number of possible width for the boxes to $|\bar{\items}_H|^{1/\delta}$ and guarantee that at most $\Oh(1/(\eps\delta))$ of these sizes are used in the partition by exactly $1/\delta$ boxes each.
This rounding step adds at most $\eps T$ to the packing height.
\end{lemma}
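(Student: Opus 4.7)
The plan is to apply a linear shifting argument to the widths of the horizontal boxes provided by Lemma~\ref{lma:roughPartition}. Recall that the partition contains $\Oh(1/(\eps\delta^2))$ boxes for horizontal items, each of fixed height $\eps\delta T$ and width larger than $\delta W$. The goal is to replace this collection by a new collection in which only few distinct widths appear, all drawn from an explicit small pool, while paying only an extra $\eps T$ in packing height.

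First I would make the harmless simplification of shrinking each horizontal box to the width of its widest row of items; this can only decrease box widths and the items continue to fit. After this step, every box width equals the total width of at most $1/\delta$ items from $\bar{\itemsH}$, because each such item has width at least $\delta W$ while the box has width at most $W$. Hence every realizable box width lies in the pool of all sums of at most $1/\delta$ widths drawn from $\bar{\itemsH}$, a pool of cardinality at most $|\bar{\itemsH}|^{1/\delta}$, which is exactly the universal pool the lemma asks for.

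Next I would sort the horizontal boxes by width in decreasing order and partition them into consecutive groups $G_1,G_2,\dots,G_{k'}$ of exactly $1/\delta$ boxes each (padding the last group with empty boxes of width $0$ if necessary), where $k'\in \Oh(1/(\eps\delta))$. The widest group $G_1$ is removed from the partition altogether: its $1/\delta$ boxes, each of height $\eps\delta T$ and width at most $W$, are stacked on top of one another above the current packing, adding exactly $(1/\delta)\cdot\eps\delta T=\eps T$ to the height. For every remaining group $G_i$ with $i\ge 2$, let $w_i^\star$ denote the maximum width of a box in $G_i$; by the sorting, $w_i^\star\le \min_{B\in G_{i-1}}\mathrm{width}(B)$. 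I then pair the $1/\delta$ boxes of $G_i$ with the $1/\delta$ slots formerly occupied by $G_{i-1}$ and move the items of each $G_i$-box into the corresponding $G_{i-1}$-slot, declaring the new box in that slot to have width exactly $w_i^\star$. This is geometrically feasible because $w_i^\star$ does not exceed the width of the slot and the heights agree at $\eps\delta T$, while the items still fit since the new box is at least as wide as the original $G_i$-box. After the shift, the slots originally belonging to $G_{k'}$ are empty, and each of the $\Oh(1/(\eps\delta))$ widths $w_2^\star,\dots,w_{k'}^\star$ is used by exactly $1/\delta$ boxes, which is the desired structure.

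The most delicate step will be verifying the geometric feasibility of the shift: one must check that replacing a $G_{i-1}$-slot by a (possibly narrower) $G_i$-box does not create overlaps with the surrounding large and vertical items of Lemma~\ref{lma:roughPartition}, and does not destroy the property that no vertical border cuts a horizontal item. The key observation making this work is that inside each affected slot the shift only frees space (since $w_i^\star\le\mathrm{width}(G_{i-1}\text{-slot})$) while all heights and external positions are untouched; the unused right-hand portion of each slot is simply empty. Formalising this bookkeeping, together with the padding convention and the fact that the newly declared width $w_i^\star$ is itself realised by a tuple of at most $1/\delta$ items from $\bar{\itemsH}$ (so that the new box lies in the claimed pool), is the main obstacle I expect.
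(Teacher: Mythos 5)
Your proposal is correct and follows essentially the same two-step argument as the paper: first normalize each box width to a sum of at most $1/\delta$ widths of rounded horizontal items (giving the pool of size $|\bar{\items}_H|^{1/\delta}$), then apply linear grouping with groups of $1/\delta$ boxes of total height $\eps T$ each, rounding within groups and displacing one group of height $\eps T$ to the top of the packing. The only cosmetic difference is bookkeeping: you displace the widest group and shift each rounded group into the slots of the previous one (the standard direction, which is indeed the feasible one), while the paper phrases it as the leftover group being moved to the end; your worry about overlaps is unnecessary in the paper's framework since box positions are re-derived later by the LP, but your resolution of it is sound.
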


\begin{proof}
We reduce the number of box sizes in two steps. 
First, we reduce the possible number of box sizes, by shrinking the boxes to be a combination of widths of the rounded horizontal items.
In the second step, we reduce the number of different box sizes per solution by using a linear grouping step.

Look at one box $B$ for horizontal items.
We can shift all the horizontal items in this box to the left as much as possible such that all the left borders of the horizontal items are touching either the box border or the right side of another horizontal item.
If the left border of the box does not touch the leftmost item, we can move this border to the left until it does. 
Now the box for horizontal items has a width which is the sum of widths of rounded horizontal items, i.e. $w(B)\in \{\sum_{i = 1}^{1/\delta-1}w_i|i \in \bar{\items}_H\}$.
As a result the total number of possible box widths is bounded by $|\bar{\items}_H|^{1/\delta}$.

Given such a set of boxes, we can use linear grouping to reduce the total number of different box widths. 
Since the optimal packing has a height of at most $(1+2\eps)(1+\eps)2T$ and each box has a height of $\eps\delta T$ and there are at most $1/\delta$ boxes for horizontal items in each layer,
a sorted stack of all the boxes has a total height of at most $\eps\delta T \cdot (1+2\eps)(1 +\eps)2/\eps\delta^2 \leq (1+2\eps)(1+\eps)2T /\delta$.
We partition the set of boxes such that the the $1/\delta$ widest boxes are contained in the first set, the $1/\delta$ next most wide boxes are contained in the second set and so on.
As a result, the total height of each set of boxes is bounded by $\eps T$ and the set of boxes is partitioned into at most $(1+2\eps)(1+\eps)2/(\eps\delta) = \Oh(1/(\eps\delta))$ groups. 
Note that the last group might contain less boxes than $1/\delta$. 
To enforce that after the rounding there are $1/\delta$ boxes of each width, we assume that the last group has additional boxes with width zero.
We round the box widths to the largest box width of the corresponding set.
Again the last rounded group of boxes has to be positioned at the end of the packing adding at most $\eps T$ to the packing height.
\end{proof}

Let $\mathcal{W}_B$ be the set of rounded widths of the boxes.
Note that $\mathcal{W}_B$ can contain less than $2(1+2\eps)(1+\eps)/(\eps\delta)$ sizes if there are less than $2(1+2\eps)(1+\eps)/(\eps\delta^2) -1/(2\delta)$ boxes in the partition of the optimal instance.
To place the horizontal items, we first guess the set $\mathcal{W}_B$. 
There are at most $\mathcal{O}((|\bar{\items}_H|^{1/\delta})^{\Oh(1/(\eps\delta))}) \leq \mathcal{O}((\log(1/\delta)/\eps)^{\Oh(1/(\eps\delta^2))})$ possibilities for this set.

After we guessed the set of boxes, we check with a linear program whether all the rounded horizontal items can be placed into the boxes. 
Similar to the placing of small jobs in Section \ref{sec:smallJobs}, we use configurations to place the horizontal items into the boxes. 
A configuration of horizontal items is a multiset $C :=\{a_{i,C}: i |i \in \bar{\items}_H\}$. 
Let $\mathcal{C}$ be the set of all configurations.
We say a configuration $C$ has width $w(C) := \sum_{i \in \bar{\items}_H}a_{i,C}w(i)$. 
Let $\mathcal{C}_w$ be the set of configurations with width at most $w$, i.e.,  $\mathcal{C}_w := \{C \in \mathcal{C} | w(C) \leq w\}$.

Consider the following linear program $LP_{small}$.
\begin{align}
\sum_{C \in \mathcal{C}_{W}}x_{C,W} & = \eps T & \label{eq:confLPStripPackingExtraBox}\\
\sum_{C \in \mathcal{C}_{w}}x_{C,w} & = \eps \delta T & \forall w \in \mathcal{W}_B \label{eq:confLPStripPackingConfigConstraint}\\
\sum_{l = 1}^{|S|} \sum_{C \in \mathcal{C}_{m_l}} x_{C,l}a_{j,C} & = h_j & \forall j \in \bar{\items}_H \label{eq:confLPStripPackingItemConstraint}\\
x_{C,l} &\geq 0  & \forall l = 1, \dots, |S|, C \in \mathcal{C}_{w}
\end{align}
The variables $x_{C,w}$ represent the height of a configuration $C$ inside the boxes of width $w$. 
The sum of these heights should equal the total height of the boxes having this width, which is ensured by the equation (\ref{eq:confLPStripPackingConfigConstraint}).
Equation (\ref{eq:confLPStripPackingExtraBox}) is introduced to represent the extra box for the horizontal items we need due to the rounding of these items.
In the other hand each horizontal item should be covered by the configurations, which is ensured by the equation (\ref{eq:confLPStripPackingItemConstraint}).

Similar as for placing the small narrow jobs in Section \ref{sec:smallJobs}, we solve a relaxed version of this linear program called $LP_{small,rel}$. 
In this relaxed version, we replace equation (\ref{eq:confLPStripPackingConfigConstraint}) by the equation 
\[\sum_{C \in \mathcal{C}_{w}}x_{C,w} = (1+\eps^2)\eps \delta T \  \forall w \in \mathcal{W}_B \]
and, similarly, we replace the equation (\ref{eq:confLPStripPackingExtraBox}) by 
\[\sum_{C \in \mathcal{C}_{W}}x_{C,W}  = (1+\eps^2)\eps T.\]

\begin{lemma}
	If there is a solution to $LP_{small}$, we can find a basic solution to $LP_{small, rel}$ in $\mathcal{O}((|\mathcal{W}_B||\bar{\items}_H|(\ln(|\bar{\items}_H|) +\eps^{-4}))^{1.5356}(|\mathcal{W}_B|+|\bar{\items}_H| + (\log(1/\eps))^3/\eps^4)) \leq \mathcal{O}(\log(1/\delta)^{1.5356}/\eps^{6}\delta^6)$ operations.
\end{lemma}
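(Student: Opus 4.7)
The plan is to follow the template established in the proof of Lemma \ref{lma:LPSolveHorizontalJobs}, since $LP_{small,rel}$ has exactly the same structure as the configuration LP for small wide jobs: configurations are packed into slots of fixed total height (here, boxes of width $w\in\mathcal{W}_B$ plus one extra slot of width $W$), and each ``item'' (here each rounded horizontal item $j\in\bar{\items}_H$) must be covered up to its height $h_j$. First, I would recast $LP_{small,rel}$ as a Max-Min-Resource-Sharing instance. Define
\[
B := \sett{x\in\mathbb{R}_+^{(|\mathcal{W}_B|+1)|\mathcal{C}|}}{\sum_{C\in\mathcal{C}_w}x_{C,w}=\eps\delta T\ \forall w\in\mathcal{W}_B,\ \sum_{C\in\mathcal{C}_W}x_{C,W}=\eps T}
\]
and for each $j\in\bar{\items}_H$ set $f_j(x):=(\sum_{w}\sum_{C\in\mathcal{C}_w}x_{C,w}a_{j,C})/h_j$, so $M=|\bar{\items}_H|$.

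Next I would apply the algorithm of Grigoriadis et al.~\cite{GrigoriadisKPV01} with precision $\rho=\Theta(\eps^2)$ chosen so that $1/(1-\rho)=1+\eps^2$. The corresponding approximate block solver $\mathcal{ABS}(p,\rho/6)$ must compute, for each width $w\in\mathcal{W}_B\cup\{W\}$ separately, a configuration of horizontal items fitting in width $w$ that approximately maximizes $\sum_{j}(p_j/h_j)a_{j,C}$; this is exactly an unbounded knapsack and can be solved with ratio $(1-\rho/6)$ in $\mathcal{O}(|\bar{\items}_H|+(\log(1/\rho))^3/\rho^2)$ operations using \cite{JansenK18}. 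The Grigoriadis et al.\ framework performs $\mathcal{O}(M(\ln M+\rho^{-2}))$ coordination iterations, each of which calls the block solver once per width, yielding a total of $\mathcal{O}(|\mathcal{W}_B||\bar{\items}_H|(\ln|\bar{\items}_H|+\eps^{-4}))$ block-solver calls.

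If $LP_{small}$ is feasible, there exists $x'\in B$ with $f_j(x')\geq 1$ for all $j$, so the computed $x$ satisfies $f_j(x)\geq 1-\rho$. Scaling to $\tilde{x}:=x/(1-\rho)$ gives $f_j(\tilde{x})\geq 1$ at the price of stretching each box width and the extra box by a factor $1+\eps^2$, matching equations (\ref{eq:confLPStripPackingConfigConstraint}) and (\ref{eq:confLPStripPackingExtraBox}) in $LP_{small,rel}$. If any $f_j(\tilde{x})<1$ we conclude infeasibility of $LP_{small}$. To remove potential surplus coverage, for each $j$ we drop from some configurations a total height of $\sum_{w,C}\tilde{x}_{C,w}a_{j,C}-h_j$, which introduces at most one extra configuration per item; the resulting $\bar{x}$ is a feasible (but not necessarily basic) solution of $LP_{small,rel}$ supported on $\mathcal{O}(|\mathcal{W}_B||\bar{\items}_H|(\ln|\bar{\items}_H|+\eps^{-4})+|\bar{\items}_H|)$ configurations.

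Finally, I would apply the sparsification procedure of Beling and Megiddo~\cite{BelingM98} to compress $\bar{x}$ to a basic solution with at most $|\mathcal{W}_B|+|\bar{\items}_H|$ non-zero components; the cost of this step is bounded by the number of support columns times $(|\mathcal{W}_B|+|\bar{\items}_H|)^{1.5356}$. Summing the block-solver calls and the Beling-Megiddo cost gives the claimed bound
\[
\mathcal{O}\bigl((|\mathcal{W}_B||\bar{\items}_H|(\ln|\bar{\items}_H|+\eps^{-4}))((|\mathcal{W}_B|+|\bar{\items}_H|)^{1.5356}+(\log(1/\eps))^3/\eps^4)\bigr),
\]
and substituting $|\mathcal{W}_B|=\mathcal{O}(1/(\eps\delta))$ and $|\bar{\items}_H|=\mathcal{O}(\log(1/\delta)/\eps)$ yields the claimed simplification $\mathcal{O}(\log(1/\delta)^{1.5356}/\eps^{6}\delta^{6})$. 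The only non-routine point is verifying that the unbounded knapsack block solver really produces the stated per-call running time for each width in $\mathcal{W}_B\cup\{W\}$; this follows because the knapsack instance depends on $w$ only through the capacity, and the algorithm of \cite{JansenK18} handles this within the stated bound.
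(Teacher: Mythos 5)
Your proposal is correct and follows essentially the same route as the paper: the paper's own proof simply observes that this configuration LP is structurally identical to the one for the small narrow jobs and invokes the algorithm of Lemma \ref{lma:LPSolveHorizontalJobs}, and your sketch just spells out that same reduction (Max-Min-Resource-Sharing via Grigoriadis et al., unbounded-knapsack block solver per width, scaling by $1/(1-\rho)$, surplus removal, then Beling--Megiddo sparsification) with widths $\mathcal{W}_B\cup\{W\}$ playing the role of the layers.
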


\begin{proof}
Note that the described linear program and the described configurations are equivalent to the ones for the small narrow jobs. 
Hence, we can use the algorithm proposed in Lemma \ref{lma:LPSolveHorizontalJobs} to find the desired basic solution.
\end{proof}

We call the set of guessed boxes for horizontal items $\mathcal{B}_H$. 
In the end of the algorithm, we place the configurations inside the boxes and the horizontal items (fractionally) into the configurations similar to the placement of small wide jobs in Section \ref{sec:smallJobs}.
A basic solution of the above linear program has at most $|\mathcal{W}_B| + |\bar{\items}_H| +1$ non zero components. 
When filling the configurations inside the boxes $\mathcal{B}_H$, we have to cut the configurations at the box borders of boxes with the same size.
Hence inside the boxes, we have at most $|\mathcal{B}_H| + |\bar{\items}_H| +1$ configurations.
At each configuration border, we generate fractionally placed horizontal items.
However these items all fit next to each other since they are inside one configuration. 
Hence, we can remove the cut items and shift them up to the top of the packing. 
This step adds at most $\mu T \cdot (|\mathcal{B}_H| + |\bar{\items}_H| +1) \leq \mu T (\log(1/\delta)/\eps + \mathcal{O}(1/\eps\delta^2)) = \mathcal{O}(\eps T)$ to the packing height.

\subsection{Positioning containers as well as  large and vertical rectangles}
\label{sec:PositioningContainersLargeVerticalItems}
In this section, we handle the positioning of the boxes for horizontal items and the placement of large and vertical items. 
These boxes and items are positioned by guessing the x-coordinate of the lower left corner, which has to be a multiple of $\eps\delta T$.
Afterward, we guess the order from left to right in which these items and boxes will appear.
The technique described in this section is inspired by the techniques described in \cite{JansenS09} Chapter 4.

In the first step, we guess the position of the lower corners of the items and boxes in $\items_L$ and $\mathcal{B}_H$. Note that since the boxes have an area of at least $\eps\delta T \cdot \delta W$ and the large items have an area of at least $\delta T \cdot \delta W$ and the packing has an area of at most $(1+2\eps)(1+\eps)T W$, there are at most $\mathcal{O}(1/(\eps\delta^2))$ boxes and items.
Hence, the total number of possible guesses for positions of their bottom edges is bounded by $(1/\eps\delta)^{\mathcal{O}(1/(\eps\delta^2))}$.

Consider an optimal packing where all the items are rounded and the horizontal items are positioned in the rounded boxes. 
For each large item or box $i \in \items_L \cup \mathcal{B}_H$, we can determine the value of the $y$-coordinates of their left and right borders $y_{i,l}$ and $y_{i,r}$. 
Let $\mathcal{Y}$ be the set of all these $y$-coordinates $y_{i,l}$ and $y_{i,r}$.
We order $\mathcal{Y}$ by value of the coordinates in the optimal packing.
This gives us a permutation $\pi: \mathcal{Y} \to \{1, \dots, |\mathcal{Y}|\}$ from the left and right corners of items and boxes to positions in the ordered list.
Since the value of $W$ is not logarithmically bounded in the input size, we cannot guess the values of the $y$-coordinates in polynomial time.
However, it is possible to guess the correct permutation $\pi$ in $|\mathcal{Y}|! \in (1/(\eps\delta^2))^{\mathcal{O}(1/(\eps\delta^2))}$ guesses.  
For a given item or box $i \in \items_L \cup \mathcal{B}_H$, we write $\pi(i,l)$ to refer to the position of $y_{i,l}$ and analogously $\pi(i,r)$ for the position of $y_{i,r}$ and write $y_j$ to refer to the $y$-coordinate which is mapped to position $j$ in the ordered list.

After these two guesses, the guess of the positions of lower borders and the guess of order of the items, the algorithm tests if this guess was feasible, by testing if it is possible  at all to position the items as forced by this guess.
This can be done in $\mathcal{O}(n)$ by starting with the left most item and position the items one by one in order of the $y$-coordinates as most to the left as possible by the constraints guessed.
As soon as a constraint has to be violated, we stop and discard the guess.
Possible violations of the constraints can be, e.g., that an items left border has to be placed between a left and a right border of another item but this item and the to be placed item overlap  the same horizontal line that an item has to be placed such that it overlaps the right border of the strip that $\pi(i,l) > \pi(i,r)$.  

Consider a feasible guess of starting positions and permutation. 
The next step of the algorithm is to find values for the $y$-coordinates of the left and right borders. 
It determines these values by using a linear program as described below. 
Indeed, since the vertical items have to be placed correctly as well, the linear program is not only concerned about determining the $y$-coordinates, but to place the vertical items as well. 
Consider two consecutive $y$-coordinates $y_j$ and $y_{j+1}$ and the segments of the layers between these. 
Some of them are occupied by an item or a box in $\items_L \cup \mathcal{B}_H$ and some are not. 
We will use the not occupied layers to place the vertical items. 
We scan the area between $y_j$ and $y_{j+1}$ from bottom to top and fuse each set of contiguous unoccupied layers to a box for vertical items. 
Let $\mathcal{B}_{V,j}$ be the set of constructed boxes for the area between the coordinates $y_j$ and $y_{j+1}$. 
Note that there can be at most $\mathcal{O}(1/\eps\delta)$ of them.

Similar as for the horizontal items, we define configurations for the vertical items. 
However instead of placing these items next to each other, we will stack the items inside a configuration for vertical items on top of each other.
Note that in each optimal packing a vertical line through the packing intersects at most $1/\delta$ of these items and hence configurations should contain at most this number of items.
We define a new set of vertical items called $\bar{\items}_V$.
For each appearing item height $h \in \{\iH{i} | i \in \items_V \}$, the set $\bar{\items}_V$ contains one job of height $h$ and width $\sum_{i \in \items_V, \iH{i} = j} \iW{i}$.
To reduce the running time, we will schedule the jobs in the set $\bar{\items}_V$ fractionally instead of the original vertical items. 
Note that $|\bar{\items}_V| \leq \log_{\eps}(1/\delta)/\eps^2$ due to the rounding of the vertical items.

A configuration for vertical items is a multiset $C := \{a_{i,C}: i |i \in \bar{\items}_V\}$ such that $ \sum_{i \in \bar{\items}_V}a_{i,C}\cdot \iH{i} \leq 1/\delta$ and we define its height as $\iH{C} :=\sum_{i \in \bar{\items}_V} a_{i,C} \cdot \iH{i}$.
Let $\mathcal{C}_V$ be the set of all these configurations and let $\mathcal{C}_{V,h}$ be the set of all configurations with height at most $h$.
These configurations for vertical items are combined to hyper configurations which represent the distribution of vertical items in a vertical line through the packing. 
For each segment between two coordinates $y_j$ and $y_{j+1}$, we define a configuration $C_j$ as a tuple of configurations, such that there is exactly one configuration for each of the boxes in $\mathcal{B}_{V,j}$, i.e., $C_j = (C \in \mathcal{C}_{V,\iH{b}}:b \in \mathcal{B}_{V,j})$.
Let $\mathcal{C}_{V,j}$ be the set of all configurations for the section between the coordinates  $y_j$ and $y_{j+1}$.
We define $a_i(C)$ at the number of appearances of item $i \in \bar{\items}_V$ inside the configuration $C \in \mathcal{C}_{V,j}$.
Note that the configurations for the boxes each have a maximum amount of vertical items they can contain and the sum of these numbers is bounded by $1/\delta$.
Hence the total number of different configurations in $\mathcal{C}_{V,j}$ is bounded by $|\bar{\items}_V|^{1/\delta}$.
To find fitting values for the $y$-coordinates the algorithm solves the following linear program:

\begin{align}
y_0 &= 0  \label{eq:LPVerticalCoordinateCostr1}\\
y_{|\mathcal{Y}| +1} &= W \label{eq:LPVerticalCoordinateCostr2}\\
y_{j+1} - y_{j} &= w_j & \forall j \in \{0,\dots, |\mathcal{Y}|\} \label{eq:LPVerticalCoordinateCostr4}\\
y_{\pi(i,r)} - y_{\pi(i,l)} &= \iW{i} & \forall i \in \items_L \cup \mathcal{B}_H \label{eq:LPVerticalCoordinateCostr5}\\
\sum_{C \in \mathcal{C}_{V,j}} x_{C,j}& = w_j & \forall j \in \{0,\dots, |\mathcal{Y}|\} \label{eq:LPVerticalConfigConstr}\\
\sum_{j \in \{0,\dots, |\mathcal{Y}|\}}\sum_{C \in \mathcal{C}_{V,j}} a_i(C) \cdot x_{C,j}& = \iW{i} & \forall i \in \bar{\items}_V \label{eq:LPVerticalWidthItemConstr}\\
w_j &\geq 0 &\forall j \in \{0,\dots, |\mathcal{Y}|+1\} \\
x_{C,j} &\geq 0 & \forall j \in \{0,\dots, |\mathcal{Y}|+1\}, C \in \mathcal{C}_{V,j} \label{eq:LPVerticalVariableX}\\
y_j &\geq 0 &\forall j \in \{0,\dots, |\mathcal{Y}|+1\}
\end{align}

In this linear program there are three types of variables: $x$, $y$ and $w$.
The variables $y_j$ for $j \in \{0,\dots, |\mathcal{Y}|+1\}$ represent the values of the $y$-coordinates of the item and box borders in $\items_L \cup \mathcal{B}_H$, wheres $y_0$ represents the left border of the strip and $y_{|\mathcal{Y}|+1}$ represents the right borer of the strip. 
The variables $w_j$ for $j \in \{0,\dots, |\mathcal{Y}|\}$ represent the distance between the consecutive $y$-coordinates $y_j$ and $y_{j+1}$.
Last, the variables $x_{C,j}$ represent the width of the configuration $C$ in box $b$ which is positioned between $y_j$ and $y_{j+1}$.

The first three constraints (\ref{eq:LPVerticalCoordinateCostr1}) to (\ref{eq:LPVerticalCoordinateCostr4}) ensure that the $y$-coordinates are positioned in the right order and that we use exactly the width of the strip.
Furthermore, the variables $w_j$ for the width between the $y$-coordinates are defined.
The equation (\ref{eq:LPVerticalCoordinateCostr5}) ensures the $y$-coordinates of the items and boxes in $\items_L \cup \mathcal{B}_H$ are positioned such that their distance equals the widths of the corresponding item.
Equations (\ref{eq:LPVerticalConfigConstr}) and (\ref{eq:LPVerticalWidthItemConstr}) ensure that the vertical items are placed correctly.
The first equation ensures that we do not use a to large width for the configurations inside the boxes while the second equation ensures that all the vertical items can be placed.

The total number of constraints is bounded by 
\begin{align*}
&2|\mathcal{Y}| +2 + |\items_L \cup \mathcal{B}_H| + |\bar{\items}_V| \\
&= \mathcal{O}(1/(\eps\delta^2) + \log_{\eps}(1/\delta)/\eps^2) \\
&= \mathcal{O}(1/(\eps\delta^2)),
\end{align*}
While the total number of variables is bounded by 
\begin{align*}
&2|\mathcal{Y}| + 1 + \sum_{j = 0}^{|\mathcal{Y}|+1}|\mathcal{C}_{V,j}|\\
&= \mathcal{O}((1/(\eps\delta^2))((\log_{\eps}(1/\delta)/\eps^2)^{1/\delta}))\\
&= 2^{\mathcal{O}(1/\eps\delta)}
\end{align*}
Furthermore, all appearing values in the linear program are integer, the largest one on the left hand side is bounded by $1/\delta$ while the right hand side is bounded by $W$. 
We can solve this linear program by guessing the right set of at most $\mathcal{O}(1/(\eps\delta^2))$ non-zero components and then solving the corresponding equation system using Gauß-Jordan elimination in $\Oh((2^{\mathcal{O}(1/\eps\delta)})^{\mathcal{O}(1/(\eps\delta^2))}\cdot (1/(\eps\delta^2))^3) = 2^{\Oh(1/(\eps^2\delta^3))}$.

After we have found such a solution, we fix the values for the variables $y_j$ and $w_j$ for each $j \in \{1,\dots,|\mathcal{Y}|+1\}$ and find a basic solution to the linear program consisting just of the equations (\ref{eq:LPVerticalConfigConstr}), (\ref{eq:LPVerticalWidthItemConstr}), and (\ref{eq:LPVerticalVariableX}).
Such a basic solution has at most $|\bar{\items}_V| + |\mathcal{Y}| \in \mathcal{O}(1/(\eps\delta^2))$ non zero components and hence uses at most this number of configurations.

In the very end of the algorithm, these configurations are filled (fractionally) with the rounded vertical items analogously as small wide jobs items are filed into their configurations, see Section \ref{sec:smallJobs}. 
Since each configuration contains at most $1/\delta$ items and we use at most $\mathcal{O}(1/(\eps\delta^2))$ of them, there are at most $\mathcal{O}(1/(\eps\delta^3))$ fractionally placed vertical items which have a total width of at most $\mathcal{O}(\mu W/(\eps\delta^3))$. 
Since $\mu \leq \gamma\delta^3\eps/x$ for a large enough constant $x$, it holds that the total width of the discarded items is smaller than $\gamma W/2$.
These items are placed on top of the packing, adding at most $p_{\max}$ to the packing height, or in the additional container $C_2$.

\subsection{Placing the Small Items}
\label{sec:PlacingTheSmallItems}
Note that the configurations for vertical and horizontal items might be smaller in height or width as the box they are placed inside, i.e., if a configuration $C$ for vertical items is placed in side a box $b \in \mathcal{B}_{V,j}$ there is a box of free area of width $X_{C,b,j}$ and height $h(b) - h(C)$.
We will use this area to place the small items.
The total free area of this kind has to have the size of $\area{\items_S}$, 
since the configurations contain exactly the total area of the corresponding items and
the total area of all items is at most $(1+2\eps)(1+\eps)TW$ while the packing has a height of at least $(1+2\eps)(1+\eps)T$. 

Since we use at most $\mathcal{O}(1/(\eps\delta^3))$ configurations for vertical items and at most $|\mathcal{B}_H| + |\bar{\items}_H| = \mathcal{O}(1/(\eps\delta^2))$ configurations for horizontal items, there are at most $\mathcal{O}(1/(\eps\delta^3))$ boxes for small items.
We call the set of these boxes $\mathcal{B}_S$.

\begin{lemma}
	We can place the small items inside the $\mathcal{O}(1/(\eps\delta^3))$ boxes $\mathcal{B}_S$ and one additional box of width $W$ and height $2 \eps T + \mu T$.
\end{lemma}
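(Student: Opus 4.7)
The plan is to apply Next Fit Decreasing Height (NFDH) box-by-box to $\items_S$, collect into an overflow set those items that never fit, and then invoke the standard NFDH area bound one more time to pack the overflow into the extra strip. First I would sort $\items_S$ by non-increasing height; since the rounding phase left at most $\mathcal{O}(\log(n)/\eps^2)$ distinct heights this costs $\mathcal{O}(n)$ via bucket sort. Then I would fix any enumeration of $\mathcal{B}_S$ and, for each box of dimensions $w_b \times h_b$, repeatedly pull the next (tallest) remaining item from the sorted list and place it with NFDH inside $b$, stopping only when the next item would overflow $b$ vertically. Any items still remaining when all boxes are done form an overflow set $\items_S^{\mathrm{over}}$ that will be re-packed by NFDH into the extra strip of width $W$.

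The analytical core is to bound the wasted area inside each individual box by a shelf-by-shelf argument. Label the shelves of a fixed box bottom-to-top by heights $s_1 \geq s_2 \geq \cdots \geq s_k$; because items are processed in globally decreasing order of height, every item placed in shelf $i$ has height at least $s_{i+1}$ (the height of the item that opened shelf $i+1$, or the height of the first overflow item from $b$ when $i=k$), and the unused horizontal space of each shelf is strictly less than $\mu W$ (else the next item would have fit there). Hence the items in shelf $i$ cover area at least $(w_b-\mu W)s_{i+1}$, and the unused vertical strip above the top shelf has height less than $\mu T$. Telescoping the per-shelf waste $w_b(s_i-s_{i+1})+\mu W s_{i+1}$ over all shelves and adding the top waste yields a per-box waste bounded by $2\mu T\, w_b + \mu W\, h_b$.

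Because by hypothesis the total area of the boxes in $\mathcal{B}_S$ is exactly $\area{\items_S}$, the area of $\items_S^{\mathrm{over}}$ equals the aggregate per-box waste. Summing the bound over the $\mathcal{O}(1/(\eps\delta^3))$ boxes and using $w_b\leq W$, $h_b=\mathcal{O}(T)$ gives $\area{\items_S^{\mathrm{over}}} \leq \mathcal{O}(\mu W T/(\eps\delta^3))$. Substituting the value $\mu \leq \delta^3\eps^2\gamma/x$ guaranteed by Lemma~\ref{lma:StripPackingGapMediumitems} and taking the hidden constant $x$ large enough to absorb the constants above yields $\area{\items_S^{\mathrm{over}}} \leq \eps W T$. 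Feeding $\items_S^{\mathrm{over}}$ to NFDH in a strip of width $W$ then produces a packing of height at most $2\area{\items_S^{\mathrm{over}}}/W + \mu T \leq 2\eps T + \mu T$ by Lemma~\ref{thm:NFDH} (in its standard strip-packing form), which is precisely the claimed extra-box height.

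The main obstacle I anticipate is tightening the constants so that the overflow really lands inside the stated box rather than inside some looser $\mathcal{O}(\eps T)$ bound. In particular, care is needed for the last shelf of each box (whose height bound must use the first overflow item rather than a following shelf), for boxes that are themselves degenerate with $w_b < \mu W$ or $h_b < \mu T$ (which may be packed trivially or skipped), and for verifying that the box-dimension sums $\sum_b w_b$ and $\sum_b h_b$ are not overly generous — but all of these can be absorbed into the constant $x$ from Lemma~\ref{lma:StripPackingGapMediumitems} without affecting the asymptotic running time.
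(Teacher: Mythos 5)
Your proposal is correct and follows essentially the same route as the paper: sort the small items by height, fill the boxes of $\mathcal{B}_S$ one by one with NFDH, bound the wasted area per box by the side strip of width at most $\mu W$, the top strip of height at most $\mu T$, and the inter-shelf waste, then use $\mu \leq \eps^2\delta^3\gamma/x$ to bound the overflow area by $\eps T W$ and pack the overflow with NFDH into the extra box of height $2\eps T + \mu T$. The only cosmetic difference is that you carry out the shelf-by-shelf waste telescoping explicitly per box, whereas the paper cites the Coffman et al.\ analysis (Lemma~\ref{thm:NFDH}) for the inter-shelf waste; the bounds and conclusions coincide.
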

\begin{proof}
Remember that the total area of the boxes is at least $\area{\items_S}$.
The algorithm first sorts the small items by height in $\mathcal{O}(n + \log(n)/\eps^2)$ time since the small items have at most $\mathcal{O}(\log(n)/\eps^2)$ different sizes. 
Afterward it considers the boxes for the small items $\mathcal{B}_S$ one by one and fills the small items inside them using the NFDH algorithm.
If an item does not fit inside the considered box, because the item is to wide or has a to large height, the algorithm is finished with this box and considers the next.
All the items that cannot be placed inside the boxes $\mathcal{B}_S$ are placed inside the newly introduced box of width $W$ and height $2 \eps T + \mu T$.

Let us consider the boxes next to the configurations and the free area inside them. 
Let $B$ be such a box.
In $B$ there is a free area of at most $\mu W \cdot \iH{B}$ on one side of $B$ since the small items have a width of at most $\mu W$.
Additionally, there can be free area of at most $\mu T \cdot \iW{B}$ on the top of the box since the items have a height of at least $\mu T$. 
Lastly there can be free area between the items. 
However as indirectly shown by Coffman et al. in \cite{CoffmanGJT80} in the proof of Lemma \ref{thm:NFDH}, the free area provoked this way over all the boxes is bounded by $\mu T \cdot W$ since the items have a maximal height of at most $\mu T$ and the boxes have a maximal width of at most $W$.
In total the free area inside the boxes $\mathcal{B}_S$ is bounded by $\mu T W + \mu T\sum_{B \in \mathcal{B}_S} \iW{B} + \mu W \sum_{B \in \mathcal{B}_S} \iH{B} \leq \mu T W \cdot \mathcal{O}(1/(\eps\delta^3))$.
Since it holds that $\mu \leq \eps^2\delta^3/x$ for a suitable large constant $x$, the total area of the non placed small items has to be bounded by $\eps T W$.
Using Lemma \ref{thm:NFDH}, we can place these non placed items with a total height of at most $2 \eps T + \mu T$ inside the extra box.
\end{proof}

\subsection{Packing medium sized items}
\label{sec:PackingMediumSizedItems}
To place the medium sized items, we partition them into two sets, $\items_{M,V}$  which contains all the items taller than $2\eps T$ and $\items_{M,S} := \items_M \setminus \items_{M,V}$. 
Since the total area of the medium sized items is bounded by $\gamma \eps T W$, the total width of the items in $\items_{M,V}$ is bounded by $\gamma W/2$. 
Hence, we can place all these items at the end of the schedule next to the discarded vertical items. 
In total this adds at most $h_{\max}$ to the schedule.

The jobs in $\items_{M,V}$ have a height of at most $2\eps T$ and an area of at most $\eps T W$.
Hence by Lemma \ref{thm:NFDH}, when using the NDFH algorithm to place these items, we add at most $4\eps T$ to the packing height. 

\subsection{Summary of the algorithm}
In the following, we summarize the steps of the algorithm and give a short overview of the running time. An overview of the generated packing can be found in Figure~\ref{fig:packingOverview}.

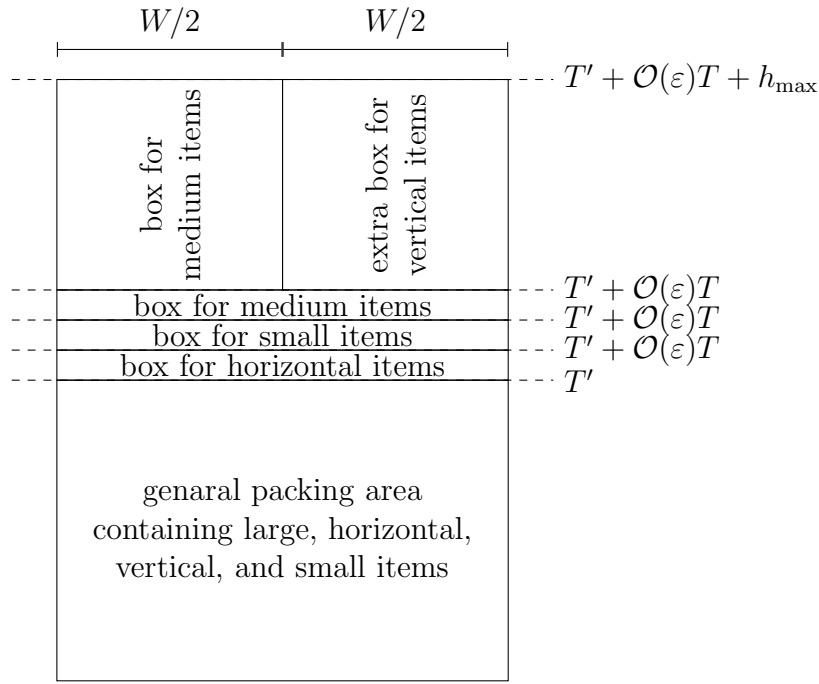
\begin{figure}[ht]
	\centering
	\begin{tikzpicture}
	\pgfmathsetmacro{\h}{4}
	\pgfmathsetmacro{\w}{6}
	
	\foreach \i/\y in {
		$T'$/1,
		$T' + \mathcal{O}(\eps )T$/1.1,
		$T' +\mathcal{O}(\eps )T$/1.2,
		$T' +\mathcal{O}(\eps )T$/1.3,
		$T' +\mathcal{O}(\eps )T + h_{\max}$/2.0
	}{
		\draw[dashed] (-0.1*\w,\y*\h) -- (1.1*\w,\y*\h) node[right]{\i};
	}
	
	\foreach \i/\y/\yy in {
		{genaral packing area\\ containing large, horizontal,\\ vertical, and small items}/0/\h,
		box for horizontal items/\h/1.1*\h,
		box for small items/1.1*\h/1.2*\h,
		box for medium items/1.2*\h/1.3*\h
	}{
		\draw (0,\y) rectangle node[midway,align=center]{\i} (\w,\yy);
	}
	
	\foreach \i/\x/\xx in {
		box for\\medium items/0/0.5*\w,
		extra box for\\vertical items/0.5*\w/\w
	}{
		\draw (\x,1.3*\h) rectangle node[midway, rotate = 90,align=center]{\i} (\xx,2.0*\h);
	}

	\draw[|-|] (0,2.1*\h) -- node[above]{$W/2$} (0.5*\w,2.1*\h);
	\draw[|-|] (0.5*\w,2.1*\h) -- node[above]{$W/2$} (\w,2.1*\h);
	\end{tikzpicture}
	\caption{Overview of the structure of the generated packing }
	\label{fig:packingOverview}
\end{figure}

\begin{enumerate}
	\item \label{enum:SPSimplification} In the first step of the algorithm, we perform the simplification steps.
	We define $T:= \max\{h_{\max}, (\sum\nolimits_{i \in \items}\iH{i}\iW{i})/W\}$, find the correct values for $\delta$ and $\mu$ as described in Lemma \ref{lma:StripPackingGapMediumitems}, and partition the set of items into $\itemsL$, $\itemsV$, $\itemsH$, $\itemsS$, and $\itemsM$ accordingly.
	Afterward, we round the heights and the widths of the items.
	First, we round the height of the items to multiples of $\eps T/n$ and scale the items, such that they have heights in $\{1,\dots, n/\eps\} \subseteq \NN$ and scale $T$ accordingly such that $T = n/\eps$. 
	Next, we scale the instance and $T$ again with $1/\eps\delta$ and use Lemma \ref{lma:roundingProcessingTimes} to round heights of the items in $\itemsL \cup \itemsV$, such that we can assume that they start at multiples of $1/\eps\delta T$.
	Furthermore, introduce the set of rounded items $\bar{\itemsH}$ using Lemma \ref{lma:RoundingHorizontalItems}.
	\item \label{enum:SPBinarySearch} In the next step, we do a binary search over all the possible numbers of layers $L \in [1/(\eps\delta), 5/(\eps\delta)] \cap \NN$.
	Let $T'$ be the currently considered number of layers. 
	For this number of layers, we try to find a packing by performing the following steps. 
	\item \label{enum:SPInsideBinarySearch} For each guess of the set $\mathcal{W}_B$ and each guess of y-coordinates and permutation for boxes and large items: try to solve the configuration linear program $LP_{small}$  to place the horizontal items. If this is not possible try the next guess otherwise try to solve the $LP$ to find the correct positions for the boxes, large items, and vertical items. If this LP is solvable save the guess and LP solutions and try the next smaller value for $T'$ in binary search fashion, otherwise try the next guess. If all guesses fail try the next larger value for $T'$ in binary search fashion.
	\item \label{enum:SPFinish}After use the saved guess and LP solutions to assign the corresponding items.
	First, we revert the scaling of the items and scale the solution and guess accordingly.
	Then, we place the large, vertical, and horizontal items inside the guess as described in Section \ref{sec:PositioningContainersLargeVerticalItems}.
	Afterward, place the small items inside the resulting boxes for small items as described in Section \ref{sec:PlacingTheSmallItems}.
	Finally, we place the medium sized items as described in \ref{sec:PackingMediumSizedItems}.
\end{enumerate}

The step \ref{enum:SPSimplification} takes $\Oh(n\log(1/\eps)+ 1/\eps\gamma)$ operations: 
The set of items needs to be enumerated once to find $T$, i.e., its can be found in $\Oh(n)$. 
The correct values for $\delta$ and $\mu$ can be found in $\Oh(n + 1/\eps\gamma)$ and the corresponding partition can be found in $\Oh(n)$.
The scaling and rounding of the item heights can be done in $\Oh(n)$.
Finally the rounding of the item widths can be done in $\Oh(n\log(1/\eps))$.

The binary search described in Step \ref{enum:SPBinarySearch} can be done in $\Oh(\log(1/(\eps\delta)))$. 
For each of the values given by the binary search framework, there are at most $\mathcal{O}((\log(1/\delta)/\eps)^{\Oh(1/(\eps\delta^2))})$ possibilities to guess $\mathcal{W}_B$, at most $(1/\eps\delta)^{\mathcal{O}(1/(\eps\delta^2))}$ possibilities to guess y-coordinates, and at most $(1/(\eps\delta^2))^{\mathcal{O}(1/(\eps\delta^2))}$ possibilities to guess the right permutation for boxes and large items.
The resulting LP can be solved in $2^{\Oh(1/(\eps^2\delta^3))}$.
Therefore the total running time of steps \ref{enum:SPBinarySearch} and \ref{enum:SPInsideBinarySearch} can  be summarized as 
\begin{align*}
\Oh(\log(1/(\eps\delta))) &\cdot \Oh((\log(1/\delta)/\eps)^{\Oh(1/(\eps\delta^2))}) \cdot (1/\eps\delta)^{\Oh(1/(\eps\delta^2))} \cdot (1/(\eps\delta^2))^{\Oh(1/(\eps\delta^2))} \cdot 2^{\Oh(1/(\eps^2\delta^3))} \\
&\leq 2^{\Oh(1/\eps^2\delta^3)}
\end{align*}

In the final step, we place the original items inside the packing. 
The placement of large, vertical, and horizontal items can be done in $\Oh(n + 1/(\eps\delta^3))$ since there are at most $1/(\eps\delta^3)$ places for vertical and horizontal items.
To place the small items, we use the NFDH algorithm and hence have a running time of at most $\Oh(1/(\eps\delta^3) + n + \log(n)/\eps^2)$ since the items have at most $\log(n)/\eps^2$ sizes and are placed inside at most $\Oh(1/(\eps\delta^3))$ boxes.
The medium sized items can be placed in at most $\Oh(n + 1/\eps^2)$ since they have at most $\Oh(1/\eps^2)$ (possible) different sizes.
Hence the total running time of the algorithm is bounded by $\Oh(n\log(1/\eps)+ 1/\eps\gamma + 2^{\Oh(1/\eps^2\delta^3)} + 1/(\eps\delta^3) + n + \log(n)/\eps^2) \leq \Oh(n\log(1/\eps)+ \log(n)/\eps^2) + 2^{1/(\eps\gamma)^{3^{\Oh(1/(\eps\gamma))}}}$.

As a consequence, we end up with a running time of $\Oh(n\log(1/\eps)+ \log(n)/\eps^2) + 2^{(1/\eps)^{3^{\Oh(1/(\eps))}}}$ for the AEPTAS and wehen using it as a subroutine for \ac{MSP} for $N=3$ because we can choose $\gamma = 1$ in these cases.
On the other hand, when using this algorithm as a subroutine for \ac{MSP} for $N=2$, we end up with a running time of $\Oh(n\log(1/\eps)+ \log(n)/\eps^2) + 2^{(1/\eps)^{3^{\Oh(1/\eps^2)}}}$ because we have to choose $\gamma = \eps$ in this case.

\section{Conclusion}
In this paper, we presented an algorithm for \acf{MCS} and \acf{MSP} with best possible absolute approximation ratio of $2$ and best possible running time $\Oh(n)$ for the case $N \geq 3$.
Still open remains the question if for the case $N=2$ the running time of $\Oh(n\log(n))$ or $\Oh(n\log^2(n)/(\log(\log(n)))$ for \ac{MCS} and \ac{MSP} respectively can be improved to $\Oh(n)$.

Furthermore, we presented a truly fast algorithm for \acf{MCS} with running time $\Oh(n \log(n))$ that does not have any hidden constants.
Since the running time of the $\Oh(n)$ algorithm hides large constants, it would be interesting to improve the running time of the underlying $AEPTAS$ or even to find a faster asymptotic algorithm with approximation guarantee $(5/4)\OPT + p_{\max}$.

\bibliography{lowerBound.bib}

\end{document}